\def\llncs{0}
\def\fullpage{1}
\def\anonymous{0}
\def\draft{0}
\def\submission{0}
\def\disclosureonfirst{1}

\ifnum\submission=1
\def\llncs{1}
\def\draft{0}
\def\anonymous{1}
\def\fullpage{0}
\fi

\ifnum\llncs=1
    \documentclass{llncs}
    \ifnum\fullpage=1
    \usepackage{fullpage}
    \fi
\else
    \documentclass[letterpaper]{article}
    \ifnum\fullpage=1
    \usepackage{fullpage}
    \fi
    \usepackage{microtype}
\fi

\usepackage{algorithm}
\usepackage{algpseudocode}
\usepackage{verbatim}
\usepackage{authblk}
\usepackage{graphicx} 
\usepackage{physics} 
\usepackage{xcolor} 
\usepackage{amsmath}
\allowdisplaybreaks[4] 
\usepackage{amssymb}
\usepackage{mathtools}

 \usepackage{amsthm}
 \usepackage{thmtools}
\usepackage{enumitem} 
\usepackage{dsfont}
\usepackage[colorlinks=true,linkcolor=magenta,citecolor=blue,pagebackref=true,hypertexnames=false,pdftex,pdfpagelabels,bookmarks,hyperindex,hyperfigures]{hyperref}
\usepackage{dashbox}
\usepackage{fancybox,framed}
\usepackage[skins]{tcolorbox}
\usepackage{subcaption}
\usepackage{breakcites}
\usepackage{stmaryrd}
\usepackage{comment} 
\let\subparagraph\paragraph
\usepackage{titlesec}
\titleformat*{\paragraph}{\normalsize\bfseries}

\usepackage[capitalise,nameinlink]{cleveref}

\ifnum\draft=1
	\newcommand{\minki}[1]{\textcolor{blue}{$\langle\langle$Minki: #1$\rangle\rangle$}}
\else
	\newcommand{\minki}[1]{}
\fi

\ifnum\llncs=0
	\newtheorem{theorem}{Theorem}[section]
	\newtheorem{lemma}[theorem]{Lemma}
	\newtheorem{corollary}[theorem]{Corollary}
	
	\newtheorem{definition}[theorem]{Definition}

	\newtheorem{claim}[theorem]{Claim}
	\newtheorem{conjecture}[theorem]{Conjecture}

	\theoremstyle{remark}
	\newtheorem{remark}[theorem]{Remark}

\else
	\spnewtheorem{algorithmtheorem}{Algorithm}{\bfseries}{\rmfamily}

	\spnewtheorem{claim}{Claim}{\bfseries}{\itshape}
\fi
\crefname{appendix}{Appendix}{Appendices}
\Crefname{appendix}{Appendix}{Appendices}

\newcommand{\N}{\mathbb{N}}
\newcommand{\D}{\mathcal{D}}

\newcommand{\eps}{\varepsilon}

\renewcommand{\epsilon}{\varepsilon}

\newcommand{\oracle}{\mathcal{O}}

\ifnum\draft=1
	
\else
	
\fi

\title{Trapdoored Clifford Operators and Applications}
\date{}

\ifnum\llncs=1
    \ifnum\anonymous=1
        \author{}
        \institute{}
    \else
        \author{
        Hojune Lee\inst{1} \and
        Minki Hhan\inst{1}
        }
        \institute{
        KAIST, Daejeon, Korea
        \\\email{minkihhan@kaist.ac.kr}
        \\\email{hojunelee@kaist.ac.kr}
        }
    \fi
\else
    \author[1]{Minki Hhan}
    \author[1]{Hojune Lee}
    \affil[1]{{\small KAIST, Daejeon, Korea}
    \authorcr{\small minkihhan@kaist.ac.kr, hojunelee@kaist.ac.kr}}
    
\fi

\newcommand{\F}{\mathbb{F}}
\newcommand{\Mat}{\operatorname{Mat}}
\newcommand{\GL}{\operatorname{GL}}
\newcommand{\Upper}{\mathcal{U}}
\newcommand{\CS}{\mathcal{S}}

\newcommand{\Sym}{\operatorname{Sym}}

\newcommand{\negl}{\operatorname{negl}}
\newcommand{\CA}{\mathcal{A}}
\newcommand{\poly}{\operatorname{poly}}

\newcommand{\ql}{\widetilde{\oracle}(n)}
\newcommand{\unif}{\operatorname{Unif}}
\newcommand{\inv}{\mathsf{inv}}
\newcommand{\mult}{\mathsf{mult}}
\newcommand{\tp}{\mathsf{T}}

\newcommand{\CM}{\mathcal{M}}
\newcommand{\vect}[1]{\mathbf{{#1}}}

\newcommand{\Lower}{\mathcal{L}}
\newcommand{\Ber}{\mathsf{Ber}_{\F_q}}
\newcommand{\CH}{\mathcal{H}}
\newcommand{\CF}{\overline{\mathcal{F}}}
\newcommand{\nl}{\widetilde{\oracle}(n^{1 + \delta})}
\newcommand{\Pauli}{\overline{\mathcal{P}}_n}
\newcommand{\Clifford}{\overline{\mathcal{C}}_n}
\newcommand{\app}{\mathfrak{app}}
\newcommand{\tab}{\mathsf{tab}}
\newcommand{\strictupper}{\mathcal{U}_n^\circ}
\newcommand{\strictlower}{\mathcal{L}_n^\circ}
\newcommand{\CQ}{\mathcal{Q}}

\begin{document}

\maketitle

\begin{abstract}
    Random Clifford operators have numerous applications in quantum computing, including randomized benchmarking, classical shadows, and quantum authentication. However, sampling and implementing uniformly random $n$-qubit Clifford incur near-quadratic complexity due to the size of Clifford group.

We introduce a cryptographic way to overcome these barriers: trapdoored Clifford operator distributions whose samples are computationally indistinguishable from uniformly random Cliffords, yet implementing them can be much faster given the trapdoor. We construct a distribution of trapdoored Clifford operators whose elements can be sampled and implemented in near-linear time under a variant of the learning parity with noise assumption. 
Our constructions allow fast tableau action on Pauli labels for classical simulation, and also can be optimized to admit polylogarithmic-depth implementation.
Along the way, we construct trapdoored matrices over finite fields that support efficient multiplication by both a matrix and its inverse, resolving an open question left by Vaikuntanathan and Zamir [SODA'26]. 

We use these constructions to obtain faster protocols based on random Cliffords. We also explore their applications to the worst-case to average-case reductions for matrix and Clifford problems including the iterated matrix multiplication and Clifford circuit synthesis. 
In particular, we show the hardness of batching Clifford circuits: synthesizing circuits that apply the same Clifford to multiple registers is at least as hard as worst-case matrix multiplication, even when synthesis succeeds on a small constant fraction of random Cliffords. This extends to approximate implementations by general quantum circuits.
\end{abstract}
\ifnum\llncs=1
\fi

\ifnum\disclosureonfirst=1
\vfill
\paragraph{AI use disclosure.} 
AI use disclosure. ChatGPT (from 5.5 Pro to 6 Astra) was used during the preparation of this manuscript. Many of the constructions and proofs, as well as initial drafts of some sections, were generated by ChatGPT. The authors wrote the current manuscript based on these materials with AI assistance and take full responsibility for its content.
\fi


\ifnum\submission=1
\else
	\clearpage
	\newpage
	\setcounter{tocdepth}{2}
	\tableofcontents
	\newpage
\fi

\section{Introduction}

Clifford circuits, also known as stabilizer circuits and generated by Hadamard, phase, and controlled-NOT (CX) gates, play a central role in quantum computing.
Their algebraic structure enables efficient classical simulation via the stabilizer formalism~\cite{aaronson2004improved}.
On the other hand, adjoining a non-Clifford gate such as the $T$ gate yields a universal gate set for quantum computation.
These properties also make Clifford operators central to quantum error correction~\cite{Gottesman97} and fault-tolerant quantum computation~\cite{gottesman1998theory}.
Random Clifford operators are also widely used in various areas of quantum computing, including
randomized benchmarking~\cite{magesan2011scalable}, classical shadow estimation~\cite{huang2020predicting}, and quantum state tomography~\cite{kueng2016low}.
They also underpin constructions of random stabilizer codes~\cite{brown2013short} and quantum authentication schemes~\cite{aharonov2017interactive}.

Efficient sampling of uniformly random Clifford operators and their efficient circuit implementation have been extensively studied.
For $n$-qubit Cliffords, Koenig and Smolin~\cite{koenig2014efficiently} gave an $\oracle(n^3)$-time classical sampling algorithm.
Bravyi and Maslov~\cite{bravyi2021hadamard} achieved $\oracle(n^2)$ sampling time with circuits of size $\oracle(n^2)$, and van den Berg~\cite{van2021simple} subsequently gave a simpler method with the same complexity.
This is near-optimal 
as the optimal circuit size is known to be $\Theta(n^2/\log n)$ \cite[Corollary~9]{aaronson2004improved}.
This may still be unsatisfactory: even a single application of a uniformly random Clifford operator typically requires a nearly quadratic number of elementary gates and nearly linear depth.

In this paper, we ask whether this cost can be circumvented by relaxing uniform randomness to \emph{pseudorandomness}, i.e., indistinguishability from uniform randomness against efficient quantum algorithms.
This relaxation suffices for most applications, which involve only polynomial-time quantum computations.
We further require these operators to admit hidden, efficient circuit implementations, whose descriptions serve as \emph{trapdoors}.
We call such operators \emph{trapdoored Clifford operators}, a quantum analogue of the trapdoored matrices introduced in~\cite{braverman2025practical,vaikuntanathan2026improving}.

\subsection{Our Results}
Our main result is the construction of trapdoored Clifford operators under the cryptographic assumptions.

\begin{theorem}[Informal]
    Under the (dual) ring learning parity with noise (LPN) assumption, 
    we can sample (a description of) a trapdoored $n$-qubit Clifford operator $C$ together with its trapdoor $\mathsf{td}_C$ in $n^{1+\epsilon}$ time,
    where $\mathsf{td}_C$ gives an implementation of $C$ using $n^{1+\epsilon}$ elementary gates and $n^{1+\epsilon}$ clean ancillary qubits for arbitrarily small constant $\epsilon>0$.
    These can be reduced into $n \cdot \poly\log (n)$ assuming subexponential hardness.
\end{theorem}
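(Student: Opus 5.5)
The plan is to reduce the Clifford problem to a matrix problem over $\F_2$ and then solve the matrix problem with trapdoored matrices. Up to Paulis, an $n$-qubit Clifford is a symplectic matrix $S\in\operatorname{Sp}_{2n}(\F_2)$. A uniformly random Clifford has a canonical decomposition of the form $C = P\cdot L_1\cdot H\cdot \Pi \cdot L_2$, as in the Bravyi--Maslov / van den Berg sampler. Here the pieces are
\begin{itemize}
\item[] a uniformly random Pauli $P$;
\item[] layers of the form $U_A=\ket{x}\mapsto\ket{Ax}$ for $A\in\GL_n(\F_2)$, which are CX-circuits;
\item[] diagonal phase layers $\ket{x}\mapsto i^{x^\top \Gamma x}\ket{x}$ for symmetric $\Gamma$;
\item[] a layer of Hadamards and a permutation.
\end{itemize}
The quadratic cost of the sampler comes entirely from the dense matrices $A$ and $\Gamma$. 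I will therefore replace each dense matrix by a \emph{trapdoored} one. The requirement is that the product structure makes the induced distribution on $S$ indistinguishable from Haar-uniform over $\operatorname{Sp}_{2n}(\F_2)$ whenever the ingredients are. Concretely, I will aim for a hybrid argument. The sampled operator will be $C=P\,U_{A}\,D_{\Gamma}\,H^{\otimes n}\,D_{\Gamma'}\,U_{A'}$, or an equivalent Bruhat-type form in which a uniform symplectic element factors through uniform $\GL_n$ and uniform symmetric components. The point of this form is that replacing each of $A,A',\Gamma,\Gamma'$ with a pseudorandom version changes the distribution of the tableau only negligibly.

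The second step builds trapdoored matrices $A\in\GL_n(\F_2)$ supporting both fast $x\mapsto Ax$ and fast $x\mapsto A^{-1}x$, under dual ring LPN. My choice is a product $A = M_1 M_2\cdots M_k$ of structured factors. Each factor is either a sparse (LPN-noise-like) unipotent triangular matrix, or a ring-multiplication matrix by a random invertible ring element $a\in\F_2[X]/(f)$ composed with sparse noise. Triangular sparse factors are trivially invertible in linear time. Ring elements are inverted via fast polynomial arithmetic in $\tilde\oracle(n)$ time. Pseudorandomness will follow by peeling factors: dual ring LPN says that $a\cdot s+e$-type structured matrices are indistinguishable from uniform. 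Multiplying a uniform matrix by anything invertible keeps it uniform, so a hybrid over the factors gives that $A$ is indistinguishable from uniform in $\GL_n$. The exponent $n^{1+\epsilon}$ arises from recursive/block composition. This is a Vaikuntanathan--Zamir-style recursion with depth $1/\epsilon$, where each level reduces to blocks of size $n^{\epsilon}$ handled directly. Under subexponential hardness, the block size can be taken polylogarithmic, which gives $n\poly\log n$.

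Third, I convert fast matrix access into circuits. Consider $U_A$ with $A$ given as a composition of linear maps computable by size-$T$ XOR circuits for both $A$ and $A^{-1}$. It is implemented reversibly with ancillas in three moves:
\begin{itemize}
\item[] compute $\ket{x}\ket{0}\mapsto\ket{x}\ket{Ax}$ with CX gates following the XOR circuit;
\item[] uncompute $x$ from $Ax$ using the circuit for $A^{-1}$;
\item[] swap.
\end{itemize}
This costs $\oracle(T)$ gates and $\oracle(T)$ clean ancillas, and it is where the inverse-multiplication property is essential. For $D_\Gamma$ I will take $\Gamma = B^\top \Delta B$ with $B$ trapdoored and $\Delta$ diagonal-plus-sparse. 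Then $D_\Gamma = U_B^{\dagger}D_\Delta U_B$ up to phase bookkeeping on the off-diagonal part, where $D_\Delta$ consists of CZ and S gates. I must argue that such $\Gamma$ is pseudorandom among symmetric matrices; this again uses a hybrid with $B$ uniform, where the distribution of $B^\top\Delta B$ is close to uniform symmetric for suitable $\Delta$.

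The main obstacle I expect is the first step: exhibiting a decomposition of the uniform symplectic distribution in which every dense ingredient is independently uniform over $\GL_n$ or over symmetric matrices. The difficulty is that pseudorandom substitution is compatible only with components that are uniform and independent. The canonical Bruhat decomposition has correlated, permutation-dependent components. I would first try a redundant, group-averaging form: a uniform Clifford times independent random elements of the subgroups $\{U_A\}$ and $\{D_\Gamma\}$. Such a product is still exactly uniform. The task is then to show that a fixed short word in these subgroups together with $H^{\otimes n}$ is close to uniform on $\operatorname{Sp}_{2n}(\F_2)$, via a counting argument on the double coset $B H B$ where $B$ is the Borel-like subgroup.
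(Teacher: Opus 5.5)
There are two genuine gaps, one in each of your first two steps.

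First, the distributional decomposition. Your concrete form $P\,U_A D_\Gamma H^{\otimes n} D_{\Gamma'} U_{A'}$ is not uniform even when $A,A',\Gamma,\Gamma'$ are truly uniform. It lies in the single double coset $\CF_n H^{\otimes n}\CF_n$. On that coset, the tableau block sending $Z$-labels to $X$-labels (here $\Delta_1\Delta_2^{-\tp}$) is always invertible. For a uniform Clifford this happens only with probability $\prod_{j=1}^n(1+2^{-j})^{-1}\approx 0.42$, so a rank test distinguishes the two. Your fallback, counting on a double coset $BHB$, fails for the same reason. A single double coset of $\CF_n$ has density bounded away from $1$, and a single Bruhat cell of $\overline{\mathcal B}_n$ has density $\oracle(2^{-n})$ over $\F_2$. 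Longer words would need a mixing analysis you do not supply. Moreover, the obstacle you name disappears once you drop canonicity. The paper uses the Bravyi--Maslov decomposition $\Clifford=\bigsqcup_{\vect h,S}\overline{\mathcal B}_n H_{\vect h}S\,\overline{\mathcal B}_n$ in redundant form. It samples the Weyl element $(\vect h,S)$ exactly from the quantum Mallows distribution, a cheap classical step, and takes two \emph{independent uniform} Borel factors. The law of $F_LH_{\vect h}SF_R$ is invariant under left and right multiplication by $\overline{\mathcal B}_n$, which acts transitively on the coset. Hence it is exactly uniform there, by the argument of \cref{bruhat-lemma}. Each Borel factor has independent uniform parameters $\vect a,\vect b$, $\Gamma\in\Sym_n$, $\Delta\in\Lower_n^*$, and these are exactly what the hybrids replace. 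In particular, only triangular $\Delta$ is needed, not general $\GL_n$.

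Second, your trapdoored $\GL_n$ construction has no factor that is simultaneously pseudorandom and fast to invert, and peeling needs one. Circulant and sparse unipotent triangular matrices are trivially recognisable. The matrix $C_a+E$ is also distinguishable, since consecutive columns differ from cyclic shifts of each other by sparse vectors. Inverting $C_aE$ requires $E^{-1}$, which is dense in general. The dual-ring-LPN matrix $HE=E_0+C_aE_1$ is pseudorandom with fast multiplication, but no fast inversion is known for it; that is the open question from Vaikuntanathan--Zamir that the paper answers. The missing idea is to use LPN matrices only as off-diagonal blocks of a recursively halved block upper-triangular matrix $\begin{bmatrix}U_1&M\\0&U_2\end{bmatrix}$, plus a uniform invertible diagonal. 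The recursion bottoms out at $b=n^\delta$, or polylog under the strong assumption. The inverse then follows by block back-substitution, which only ever multiplies by the off-diagonal blocks. For the phase, the paper takes $\Gamma=U+U^\tp+D$, a bijection onto $\Sym_n$. It realises $i^{Q_\Gamma(\vect x)}$ by computing $U\vect x$ into ancillas, applying CZ gates, uncomputing, and applying $P$ gates, so no inversion is needed. Your $\Gamma=B^\tp\Delta B$ with diagonal-plus-sparse $\Delta$ is uniform only on the congruence class of $\Delta$, so its rank distribution is wrong. It would be uniform only if $\Delta$ were drawn from the exact congruence-class law of a uniform symmetric matrix. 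Your third step (compute $A\vect x$, uncompute $\vect x$ via $A^{-1}$, swap) matches the paper.
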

Our trapdoored Clifford operators also support near-linear evaluation of the corresponding conjugate action on any Pauli label, providing a more efficient classical simulation. Furthermore, the Clifford circuit implementation can be done in a polylogarithmic depth under all-to-all connectivity.

A key ingredient of our trapdoored Clifford operators is the trapdoored matrices over finite field that supports efficient multiplication, inversion and more
under the same assumptions. 
This resolves an open question left in \cite[Section 3]{vaikuntanathan2026improving}.

\begin{theorem}[Informal]
    Under the (dual) ring learning parity with noise (LPN) assumption, 
    we can sample (a description of) a trapdoored invertible matrix $M\in\GL(n,\F_q)$ together with its trapdoor $\mathsf{td}_M$ in $n^{1+\epsilon}$ time for any $\epsilon>0$.
    The trapdoor supports matrix-vector multiplication by $M$, $M^\tp$, $M^{-1}$, and $M^{-\tp}$ in $n^{1+\epsilon}$ time and polylogarithmic depth.
    The time can be reduced into $n \cdot \poly\log (n)$ assuming subexponential hardness.
\end{theorem}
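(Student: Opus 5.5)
We would build the trapdoored invertible matrix as a product of a short sequence of structured, easily invertible factors whose product is pseudorandom. Each factor is triangular, or a permutation, or a block-diagonal matrix of small dense blocks. Concretely, we would take
\[
M = P_1 L_1 U_1 P_2 L_2 U_2 \cdots P_k L_k U_k .
\]
Here $P_i$ are random permutations. The $L_i$ are unit lower triangular and the $U_i$ are unit upper triangular, and their off-diagonal parts are sparse or LPN-structured: for instance $L_i = I + N_i$ with $N_i$ strictly lower triangular, a product of a random ring element with a sparse noise pattern. The trapdoor is the list of factors.

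Efficiency would be checked factor by factor. Multiplication by $P_i$ costs $O(n)$. Multiplication by a unit triangular matrix given in a structured form, and by its inverse via forward or back substitution, costs about the number of nonzeros. To get polylogarithmic depth, we would instead write the triangular factors recursively in block form, $\begin{pmatrix} I & 0\\ B & I\end{pmatrix}$ with structured $B$, whose inverse is $\begin{pmatrix} I & 0\\ -B & I\end{pmatrix}$. Transposes reverse the order of factors and transpose each one, and $M^{-\tp}$ combines both operations. So all four operations cost $k\cdot\widetilde{\oracle}(n^{1+\epsilon})$ with $k$ constant, or $\poly\log n$ under subexponential hardness.

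For pseudorandomness, we would follow the Vaikuntanathan--Zamir template. Under dual ring LPN, a matrix of the form $I + (\text{structured sparse})$ hit by random permutations on both sides is indistinguishable from one where the structured block is replaced by a uniformly random block. The argument would proceed by hybrids:
\begin{enumerate}
\item Replace each block $B$ in the block-triangular factors by a uniformly random block, one factor at a time. This reduces to distinguishing LPN samples from uniform, since the distinguisher can simulate the remaining factors itself.
\item Once the blocks are uniform, argue information-theoretically that a product of a few random block-triangular matrices is close to uniform on $\GL(n,\F_q)$. Uniform lower times uniform upper times uniform lower covers $\GL$ via a Bruhat-type decomposition, up to $\negl$ statistical distance, when the blocks are full dense random and $q^{-n}$ is negligible.
\end{enumerate}

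The main obstacle is reconciling the last statistical step with the reduction. The reduction must simulate all other factors efficiently, and it must also be able to embed an LPN instance that carries the dense random matrix structure. Inversion requires that the embedded hard block sit in a triangular position. The natural fix is to put the dual-LPN matrix $H$ (pseudorandom, compressible) as the off-diagonal block $B$. Then the inverse is free, and we need only the indistinguishability of $B$ from uniform, which is exactly the (ring) dual LPN assumption. The near-linear cost of multiplying by $B$ comes from the ring structure and fast polynomial multiplication over $\F_q$. Recursion on the diagonal blocks then yields the $n^{1+\epsilon}$ bound via a geometric sum over levels.
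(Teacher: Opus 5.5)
Your step~1 hybrid and your recursive shear structure for fast, shallow inversion match what the paper does. The argument fails at step~2, which is exactly where the paper's key idea is needed.

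\textbf{The determinant.} As written, every factor is a permutation or a \emph{unit} triangular matrix, so $\det M=\pm1$ for every sample. For $q\geq4$, a distinguisher that computes $\det M$ separates your distribution from $\unif(\GL(n,\F_q))$ with constant advantage. You need full Borel factors, i.e.\ random invertible diagonals.

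\textbf{Uniformity.} More fundamentally, the Bruhat decomposition does not say that a product of a few uniform lower/upper triangular matrices and uniform permutations is close to uniform. It only partitions $\GL(n,\F_q)$ into double cosets of a Borel subgroup. A product that is bi-invariant under a Borel subgroup is uniform only if it hits each double coset with probability proportional to its size $|\Upper_n^*|q^{I_n(\sigma)}$; covering $\GL$ is not the same as uniformity. Already for $n=2$, take $L,L'\leftarrow\unif(\Lower_2^*)$ and $U\leftarrow\unif(\Upper_2^*)$. Then $LUL'$ lies in $\Lower_2^*$ with probability $\Pr[U_{12}=0]=1/q$ rather than $1/(q+1)$. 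Likewise $PLU$ with a uniform permutation $P$ is non-uniform. Whether some constant number of alternating factors becomes negligibly close to uniform for large $n$ is a genuine mixing question, and your proposal asserts this rather than proving it.

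\textbf{What the paper does instead.} It avoids the issue with an exact statement. Sample $\sigma\leftarrow\CM_{n,q}$, the Mallows distribution with weight proportional to $q^{I_n(\sigma)}$, and set $G=U_1\sigma U_2$. Here $U_1,U_2$ come from the trapdoored sampler over $\Upper_n^*$: the recursive LPN-based strictly upper part plus a uniform invertible diagonal. By \cref{bruhat-lemma}, $U_1\sigma U_2$ is \emph{exactly} uniform when $U_1,U_2$ are uniform. So the only computational step is replacing the two triangular factors, each via the level-by-level LPN hybrid. Inversion is $U_2^{-1}\sigma^{-1}U_1^{-1}$, with each triangular inverse obtained by recursive block back-substitution, or by the shear factorization when low depth is needed.

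\textbf{Smaller points.}
\begin{itemize}
\item The pseudorandom off-diagonal block must be $HE=E_0+CE_1$, with $H=[I\mid C]$ public and $E$ sparse. Neither $H$ itself nor ``a random ring element times sparse noise'' is covered by the stated dual ring-LPN assumption.
\item The $n^{1+\epsilon}$ versus $n\cdot\poly\log n$ distinction comes from the recursion cutoff, not a geometric sum. Each level costs $\oracle(n\log^c n)$ at noise rate $2\log^c n/d$. Under the standard assumption, a dimension-$d$ instance only gives $\negl(d)$ security, so the recursion must stop at $b=\lceil n^\delta\rceil$. The dense uniform leaves then cost $\oracle(nb)$.
\end{itemize}
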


Along the way, we also construct the trapdoored triangular and symmetric matrices. Together all of these, we can embed them into the Bravyi--Maslov Clifford sampler~\cite{bravyi2021hadamard}, which was obtained using the Bruhat decomposition. Interestingly, the Bruhat decomposition is used in constructing the above trapdoored matrices with efficient inversions.
This structure also allows efficient computation of determinant of the sampled matrices, which answers another question in \cite{vaikuntanathan2026improving} affirmatively.

\subsection{Applications}

\paragraph{Immediate applications.}
For many fundamental protocols repeatedly sample and apply random Clifford operations, our construction immediately replace them by the trapdoored Clifford operators.

\begin{itemize}
\item \textit{Efficient simulation and verification.}
Given the trapdoor, conjugating a Pauli operator by Cliffords takes near-linear time.
This speeds up the Pauli calculations in stabilizer simulation~\cite{aaronson2004improved} and direct fidelity estimation of Clifford channels~\cite{FlammiaLiu11}. It also can be used in few $T$-gate simulation~\cite{BravyiGosset16}, imposing random-looking Cliffords to the actual quantum algorithm while allowing efficient classical simulation using trapdoors.

\item \textit{3-designs and classical shadows.}
Uniform random Cliffords form an exact 3-design \cite{Zhu2017CliffordDesign,Webb2016CliffordDesign}. The trapdoored Clifford operators may lose this property, yet they still look like an (approximate) 3-design from the efficient algorithm's view.
This motivates its use in classical shadows~\cite{huang2020predicting}: one applies a sampled Clifford to an unknown state and records the measurement outcome repeatedily.
The resulting data can later be used to predict many efficiently computable observables, including fidelities with stabilizer states chosen after measurement.

\item \textit{Random stabilizer codes.}
Applying a uniform random Clifford to a state followed by zero ancillas gives a random stabilizer code.
This can be used in several protocols including quantum secret sharing and erasure-resilient communications~\cite{PorembaQuekShor26,Matsumoto17,KuoOuyang26}.

\item \textit{Clifford authentication.}
The same template used in
the Clifford authentication~\cite{aharonov2017interactive}.
Using trapdoored Clifford improves their efficiency; for the quantum authentication, this argument only ensures the soundness against efficient attacks.
Adapting the technique used in~\cite{aharonov2017interactive} turning arbitrary adversary into Paulis, we actually prove the negligible soundness error against all adversarial behavior in \cref{subsec: Clifford auth}.
\end{itemize}

Other applications include gate-set shadow estimation and quantum data hiding~\cite{HelsenEtAl23,DiVincenzoLeungTerhal02}.

\paragraph{Worst-case to average-case reductions.}
The worst-to-average reduction for the matrix multiplication asks if given a black-box algorithm to compute the matrix multiplication for an average input,
one can compute $A \cdot B$ for almost all $A,B$ using almost the same time. It was considered in \cite{STOC:BluLubRub90} and answered in a positive way.
Subsequent works showed that this is possible even if the given algorithm works correctly (or even approximately correctly \cite{STOC:HirShi25,GSS24}) only on a very small fraction of inputs \cite{STOC:AGGS22,STOC:HirShi23}.

Using the trapdoored matrices, \cite{vaikuntanathan2026improving} showed that this worst-to-average behavior is pravelent in many linear algebra problems.
Our trapdoored invertible matrices also yield worst-case-to-average-case reductions for linear algebra and Clifford problems, extending this framework.

Our main application for the linear algebra problems are the iterated matrix multiplication and determinant computation. The worst-case to average-case reduction for the iterated matrix multiplication was studied in \cite{STOC:GLRSW91,GS92} for the black-box algorithm works correctly most (or at least half for large characteristic \cite{GS92}) of the inputs. We show the worst-case to average-case reduction for the iterated matrix multiplication even when the given algorithm works correctly for a tiny fraction of the inputs.
The determinant computation was studied in \cite{vaikuntanathan2026improving}. We improve the error rate: For any $\eps>0$, the reduction works with the error rate $1/2+\eps$ for sufficiently large $q$.

Finally, we study the cost of synthesizing Clifford circuits. We prove the worst-to-average reduction for the Clifford circuit synthesis given the tableau. Further, we relate the average-case of \emph{batched} Clifford circuits to the worst-case matrix multiplication.
Suppose an algorithm, given the tableau of a random $2n$-qubit Clifford $C$, constructs a Clifford circuit for $C^{\otimes n}$ with constant success probability.
We convert it into an algorithm that multiplies any two $n\times n$ matrices over $\F_2$ with high probability.
Thus, faster synthesis of circuits applying the same Clifford operator to multiple registers would imply faster worst-case matrix multiplication, even if synthesis produces approximate implementations for only an arbitrarily small constant fraction of random Clifford operators.
In other words, batched Clifford circuit synthesis is at least as hard as matrix multiplication.

\subsection{Notation}\label{Notations}

\noindent\textit{Fields and Matrices.}
Let $q$ be a fixed prime power, and let $\F_q$ denote the field with $q$ elements.
For positive integers $n$ and $m$, write $[n] \coloneqq \{1,\ldots,n\}$ and
let $\Mat_{n\times m}(\F_q)$ denote the set of $n\times m$ matrices over $\F_q$.
All matrix classes below are defined over $\F_q$; we omit the field from the notation.
We write $\Upper_n$ and $\Lower_n$ for the sets of $n\times n$ upper and lower triangular matrices, respectively.
The subsets of strictly upper triangular and invertible upper triangular matrices are denoted by $\strictupper$ and $\Upper_n^*$, respectively:
\begin{align*}
    \strictupper
        &\coloneqq \bigl\{U\in\Upper_n \bigm| U_{ii}=0 \text{ for all } i\in[n]\bigr\},\\
    \Upper_n^*
        &\coloneqq \bigl\{U\in\Upper_n \bigm| U_{ii}\ne 0 \text{ for all } i\in[n]\bigr\}.
\end{align*}
The classes $\strictlower$ and $\Lower_n^*$ are defined analogously.
We denote the set of symmetric matrices by
\begin{align*}
    \Sym_n &\coloneqq \bigl\{M\in\Mat_{n\times n}(\F_q) \bigm| M^\tp=M\bigr\},
\end{align*}
where $M^\tp$ denotes the transpose of $M$.
Finally, $\CS_n$ denotes the set of $n\times n$ permutation matrices, which have exactly
one entry equal to $1$ in each row and each column and all other entries equal to $0$.

\vspace{4mm}

\noindent \textit{Cryptography. } A nonnegative function $\negl(n)$ is negligible if, for every $c > 0$, it is smaller than $n^{-c}$ for all sufficiently large $n$.
We use computational indistinguishability against any quantum polynomial-time (QPT) distinguishers, denoted by $\approx_c$.

\vspace{4mm}

\noindent \textit{Complexity. } All logarithms are to base $2$. All asymptotic statements are with respect to the dimension $n$ as $n \to \infty$.
We write $\ql = \oracle(n \cdot \operatorname{polylog} n)$
for quasi-linear complexity bounds.

\vspace{4mm} 

\noindent \textit{Probability and Distributions. } The uniform distribution over a set $X$ is denoted by $\unif(X)$.
For a noise rate $\varepsilon \in (0,1)$, let $\Ber(\varepsilon)$ denote the distribution over $\F_q$ that outputs $0$ with probability $1-\varepsilon$ and a uniformly random element of $\F_q \setminus \{0\}$ with probability $\varepsilon$.

\section{Trapdoored Matrix Distributions}\label{sec:trapdoored-matrices}

We begin with a general definition of trapdoored matrix distributions. 
We then formalize our (dual) ring-LPN assumptions and review the existing construction that serves as our starting point. 
Building on this construction, we obtain trapdoored triangular matrices and extend the construction to the invertible matrices. 
The resulting distributions admit efficient sampling, and their trapdoors support efficient matrix–vector multiplication by both the sampled matrix and its inverse.

\subsection{Definition}\label{trapdoor def}

\begin{definition}[Trapdoored Matrix Distribution]\label{def:trapdoored-matrix}
    For any $n \in \N$ and $m = \poly(n)$, let $\CH \subseteq \Mat_{n\times m}(\F_q)$ be a nonempty family of matrices and $\mathcal{F}=\{f_i: \CH \times \mathcal{X}_i \rightarrow \mathcal{Y}_i\}$ be a family of mappings. 
    An efficiently samplable distribution $\D$ over tuples
    \[
        (M,\{C_{f_i,M}\}),
    \]
    where $M \in \CH$ and each $C_{f_i,M}$ is a circuit, is called a trapdoored matrix distribution over $\CH$ supporting $\mathcal{F}$ if the following conditions hold:
    \begin{enumerate}[label=(\roman*)]
        \item \textbf{Correctness.}
        For every sampled tuple $(M,\{C_{f_i, M}\})$, every $f_i\in\mathcal{F}$, and every $x\in\mathcal{X}_i$,
        \[
            C_{f_i,M}(x) = f_i(M,x).
        \]

        \item \textbf{Pseudorandomness.}
        The marginal distribution of $M$ under $\D$ is computationally indistinguishable from $\unif(\CH)$.
        Namely, for every QPT distinguisher $\CA$, for all sufficiently large $n$, with $m=\poly(n)$ as above,
        \[
            \left|
            \Pr_{(M,\{C\})\leftarrow\D}\!\left[\CA(M)=1\right]
            -\Pr_{U\leftarrow\unif(\CH)}\!\left[\CA(U)=1\right]
            \right|\le\negl(n).
        \]
    \end{enumerate}
    The circuit family $\{C_{f_i, M}\}$ is called the trapdoor for $M$.
\end{definition}

For any fixed $\CH \subseteq \Mat_{n \times m}(\F_q)$, we define the matrix--vector multiplication map
\[\mult: \CH \times \F_q^m \rightarrow \F_q^n, \qquad \mult(H, \vect{x}) = H \vect{x},\]
and the transpose multiplication map
\[\mult^\tp: \CH \times \F_q^n \rightarrow \F_q^m, \qquad \mult^\tp(H, \vect{x}) = H^\tp \vect{x}.\]

When $n = m$, we also define the inverse multiplication map
\[\inv: \CH \times \F_q^n \rightarrow \F_q^n \cup \{\perp\}, \qquad \inv(H, \vect{x}) = \begin{cases} H^{-1} \vect{x}, & \text{ if $H$ is invertible, } \\ \perp, & \text{ otherwise, }\end{cases}\]
and the inverse-transpose multiplication map
\[\inv^\tp : \CH \times \F_q^n \rightarrow \F_q^n \cup \{\perp\}, \qquad \inv^\tp(H, \vect{x}) = \begin{cases} (H^{\tp})^{-1} \vect{x} = (H^{-1})^\tp \vect{x}, & \text{ if $H$ is invertible, }  \\ \perp, & \text{ otherwise. }\end{cases}\]

We track the sampling complexity of the distribution and the evaluation complexity of its trapdoor circuits, as formalized below.

\begin{definition}[Sampling and Trapdoor Efficiency]\label{def:trapdoor-efficiency}
    Let $\D$ be a trapdoored matrix distribution supporting $\mathcal{F} = \{f_i\}$.
    We say that $\D$ is $T$-samplable if a description of a tuple drawn from $\D$ can be sampled in time $T$ with overwhelming probability.\footnote{Here, overwhelming probability means probability at least $1-\negl(n)$ over the sampling randomness. This qualification concerns efficiency only; correctness holds for every sampled tuple.}
    We say that $\D$ supports $T_i$-efficient evaluation of $f_i$ if, with overwhelming probability, the sampled circuit $C_{f_i,M}$ can be evaluated in time $T_i$ on every input.
\end{definition}

\begin{remark}[Sampling trapdoors and explicit representations]
    Sampling efficiency measures the time needed to produce the
    trapdoor, which also serves as a succinct description of the
    sampled matrix $M$.
    Constructing the full array of entries of $M$ from this
    description incurs an additional cost, accounted for separately.
    In the pseudorandomness experiment, the distinguisher receives
    this explicit matrix representation, without the trapdoor.
    The same convention applies to our trapdoored Clifford operators:
    the sampled trapdoor fully determines the operator, whereas
    the distinguisher receives its signed tableau without the trapdoor.
\end{remark}

\subsection{Base Construction from (Dual) Ring-LPN Assumptions}\label{section2.2.}

The standard decisional learning parity with noise (LPN) assumption concerns noisy linear equations over $\F_2$~\cite{blum1993cryptographic, FOCS:Alekhnovich03, lyubashevsky2013man}; 
we use its analogue over a fixed finite field $\F_q$~\cite{CCS:BCGI18}.
For a suitable noise rate $\eps > 0$, this assumption asserts that polynomially many samples
\[(\vect{r}_i, \vect{r}_i \cdot \vect{s} + e_i), \quad \vect{r}_i \leftarrow \unif(\F_q^n), \, e_i \sim \Ber(\eps),\]
with a shared secret $\vect{s} \leftarrow \unif(\F_q^n)$ are jointly computationally indistinguishable from independent uniform samples in $\F_q^n \times \F_q$.
The vectors $\vect{r}_i$ and errors $e_i$ are sampled independently of one another and of $\vect{s}$.

Ring-LPN is a structured variant of LPN in which the noisy linear equations are defined over a polynomial quotient ring~\cite{heyse2012lapin,boyle2020efficient}.
We consider the ring $R=\F_q[x]/(x^n-1)$ and sample a shared secret $s\leftarrow\unif(R)$.
For a suitable noise rate $\eps>0$, each error is a polynomial
\[e_i(x)=\sum_{j=0}^{n-1}e_{ij}x^j, \qquad e_{ij}\sim\Ber(\eps).\]
The corresponding decisional ring-LPN assumption asserts that polynomially many samples
\[(r_i,r_i s+e_i), \qquad r_i\leftarrow\unif(R),\]
are jointly computationally indistinguishable from independent uniform samples in $R\times R$.
All $r_i$ and error coefficients $e_{ij}$ are sampled independently of one another and of $s$, and all operations are performed in $R$.

A dual-LPN assumption is specified by an efficient sampler for a public matrix $H \in \Mat_{n \times k}(\F_q)$, where $n<k=\poly(n)$.
For a specified matrix distribution and suitable noise rate $\eps > 0$, the dual-LPN assumption states that~\cite{benhamouda2025encrypted}
\[(H, \vect{u}) \approx_c (H, H\vect{e}), \quad \vect{u} \leftarrow \unif(\F_q^n), \, \vect{e} \sim \Ber(\eps)^k. \]
The sampling of $H$ is included in both distributions, and $H$, $\vect{e}$, and $\vect{u}$ are sampled independently.
A standard hybrid argument over the columns then implies that, for any $m=\poly(n)$ and the same sampled $H$,
\[(H, U) \approx_c (H, HE), \quad U \leftarrow \unif(\Mat_{n \times m}(\F_q)), \, E \sim \Ber(\eps)^{k \times m}.\]

Structured LPN variants use public matrices that admit fast matrix--vector multiplication, such as Toeplitz matrices~\cite{lyubashevsky2013man} and suitable quasi-cyclic matrices~\cite{boyle2020efficient,aguilar2018efficient}.
Following~\cite{benhamouda2025encrypted}, we use quasi-cyclic matrices of the form $H = [I_n \mid C]$, where $C$ is a uniformly random $n\times n$ circulant matrix.
Identifying vectors with coefficient vectors in $R$, multiplication by $C$ corresponds to multiplication by a uniformly random ring element $a\in R$.
Thus, writing $\vect{e}=(\vect{e}_0,\vect{e}_1)$, the syndrome $H\vect{e}$ corresponds to $e_0+ae_1$, where both $e_0$ and $e_1$ have independent noisy coefficients.
We refer to the hardness assumption for this distribution as \emph{(dual) ring-LPN}; its secret $e_1$ is sparse, whereas the introductory ring-LPN formulation above uses a uniform secret.
Multiplication by $H$ takes $\oracle(n\log n\log\log n)$ field operations via fast polynomial multiplication~\cite{cantor1991fast}, and hence admits a circuit of size $\oracle(n\log^2 n)$.

We formalize this instantiation in two forms, which we call the \emph{standard (dual) ring-LPN assumption} and the \emph{strong (dual) ring-LPN assumption}.
In each experiment below, the displayed random variables are sampled independently,
and the probability also includes the distinguisher's internal randomness and measurement outcomes.
Let $\CQ_n$ denote the following family of quasi-cyclic matrices:
\[\CQ_n \coloneqq \big\{[I_n \mid C] \mid C \text{ is an $n \times n$ circulant matrix}\big\}.\]

\begin{conjecture}[Standard Ring-LPN Assumption in Dual Form~\cite{benhamouda2025encrypted}]\label{standard assumption}
    For a suitable noise rate $\eps(n)>0$ satisfying
    \[
        \frac{n\eps(n)}{\log^2{n}} \longrightarrow \infty, \quad \eps(n) < \frac{1}{4},
    \]
    every QPT distinguisher $\CA$ satisfies
    \[
        \left|
        \Pr_{\substack{Q \leftarrow \unif(\CQ_n) \\\vect{e}\leftarrow\Ber(\eps(n))^{2n}}}
        [\CA(Q,Q\vect{e})=1]
        -\Pr_{\substack{Q\leftarrow\unif(\CQ_n)\\\vect{u}\leftarrow\unif(\F_q^n)}}
        [\CA(Q,\vect{u})=1]
        \right|\leq\negl(n)
    \]
    for all sufficiently large $n$. The negligible function may depend on $\CA$ and $\eps$. 
\end{conjecture}

In particular, we conjecture that this assumption holds for $\eps(n)=(\log n)^c/n$ with any fixed $c>2$; these rates satisfy the stated parameter restrictions for all sufficiently large $n$.
Motivated by the subexponential LPN assumption in~\cite{vaikuntanathan2026improving},
we also consider a quantitative assumption against quantum distinguishers.

\begin{conjecture}[Strong Ring-LPN Assumption in Dual Form]\label{strong assumption}
    There exists a constant $\alpha\in(0,1)$ such that, for any suitable noise rate $\eps(n)>0$ satisfying
    \[
        \frac{n\eps(n)}{\log^2{n}} \longrightarrow \infty, \quad \eps(n) < \frac{1}{4},
    \]
    every quantum $2^{(n\eps(n))^\alpha}$-time distinguisher $\CA$ satisfies
    \[
        \left|
        \Pr_{\substack{Q \leftarrow \unif(\CQ_n) \\\vect{e}\leftarrow\Ber(\eps(n))^{2n}}}
        [\CA(Q,Q\vect{e})=1]
        -\Pr_{\substack{Q\leftarrow\unif(\CQ_n)\\\vect{u}\leftarrow\unif(\F_q^n)}}
        [\CA(Q,\vect{u})=1]
        \right|\leq 2^{-(n\eps(n))^\alpha}
    \]
    for all sufficiently large $n$.
\end{conjecture}

The strong assumption quantifies security as a function of the noise rate on the stated parameter range.
For $\eps(n)=(\log n)^c/n$ with $c>2$ and $c\alpha>1$, it guarantees negligible advantage against every QPT distinguisher.
For any fixed noise rate $\eps\in(0,1/4)$, it assumes security against
$2^{(n\eps)^\alpha}=2^{\Theta(n^\alpha)}$-time quantum algorithms, with advantage at most $2^{-(n\eps)^\alpha}=2^{-\Theta(n^\alpha)}$.

Several constructions of trapdoored matrix distributions over $\Mat_{n \times m}(\F_q)$ are known under variants of the LPN assumption~\cite{braverman2025practical, benhamouda2025encrypted, vaikuntanathan2026improving}.
We use the (dual) ring-LPN construction of~\cite{benhamouda2025encrypted}, instantiated with the quasi-cyclic matrix family $\CQ_n$.
The following lemma states its guarantees in terms of our efficiency definition.

\begin{lemma}[\cite{benhamouda2025encrypted}]\label{Base TDM}
    Let $n\in\N$, $m=\Theta(n)$, and $\eps(n)=(\log n)^c/n$ for a fixed constant $c>2$.
    Under \cref{standard assumption} at noise rate $\eps(n)$, there exists an $\oracle(nm\eps)$-samplable trapdoored matrix distribution over $\Mat_{n\times m}(\F_q)$
    supporting $\oracle(nm\eps)$-efficient evaluation of $\mult$.
\end{lemma}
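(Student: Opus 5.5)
The construction is the encrypted-matrix template of \cite{benhamouda2025encrypted}. Sample $Q=[I_n\mid C]\leftarrow\unif(\CQ_n)$ and a sparse noise matrix $E\sim\Ber(\eps)^{2n\times m}$, and output $M\coloneqq QE\in\Mat_{n\times m}(\F_q)$. The trapdoor circuit $C_{\mult,M}$ stores $Q$ (equivalently, the ring element $a\in R$ defining $C$) and the sparse matrix $E$ in a list-of-nonzeros representation. On input $\vect{x}\in\F_q^m$ it first computes $\vect{y}=E\vect{x}\in\F_q^{2n}$, then outputs $Q\vect{y}$. Correctness is immediate from associativity, $Q(E\vect{x})=(QE)\vect{x}=M\vect{x}$, and it holds for every sampled tuple.

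For pseudorandomness, I would invoke the column-wise hybrid stated after the dual-LPN definition: under \cref{standard assumption} at rate $\eps(n)$, we have $(Q,QE)\approx_c(Q,U)$ with $U\leftarrow\unif(\Mat_{n\times m}(\F_q))$. The hybrid $H_j$ replaces the first $j$ columns of $QE$ by uniform columns. Given a challenge $(Q,\vect{z})$, the reduction embeds $\vect{z}$ as column $j$ and samples the remaining columns itself, which it can do efficiently since it knows $Q$. This loses a factor $m=\Theta(n)$, so the advantage remains negligible. Dropping $Q$ from the view then gives that the marginal of $M$ is indistinguishable from $\unif(\Mat_{n\times m}(\F_q))$.

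For efficiency, the main point is that the bound $\oracle(nm\eps)$ holds only with overwhelming probability, which is why \cref{def:trapdoor-efficiency} allows that qualification. The number of nonzeros of $E$ is a sum of $2nm$ independent Bernoulli$(\eps)$ indicators, with mean $2nm\eps=\Theta(n(\log n)^c)$. Since $c>2$ this mean is $\omega(\log n)$, so a Chernoff bound gives that the count is at most $4nm\eps$ except with probability $\exp(-\Omega(nm\eps))=\negl(n)$. Sampling $E$ sparsely can be done by drawing geometric gaps between successive nonzero positions, or by drawing a binomial count followed by positions. Together with drawing $a$, this costs $\oracle(nm\eps)$ up to lower-order factors, and the trapdoor has that size.

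For evaluation, computing $E\vect{x}$ costs $\oracle(\mathrm{nnz}(E))=\oracle(nm\eps)$, and computing $Q\vect{y}=\vect{y}_0+a\vect{y}_1$ in $R$ costs $\oracle(n\log^2 n)$ by fast polynomial multiplication. The latter is dominated by $nm\eps=n(\log n)^c$ because $c>2$, so the total evaluation time is $\oracle(nm\eps)$.

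The step needing the most care is reconciling the bookkeeping. First, the word-RAM cost of writing sparse positions and sampling $\Ber$ must not exceed $\oracle(nm\eps)$; if polylog overheads per nonzero appear, I would absorb them by slightly adjusting $c$, or by noting that the stated bound is in field operations. Second, the hybrid needs $m=\poly(n)$, which holds since $m=\Theta(n)$.
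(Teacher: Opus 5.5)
Your proposal is correct and follows the paper's proof essentially step for step: the same construction $M=QE$ with trapdoor $\vect{x}\mapsto Q(E\vect{x})$, the column hybrid for pseudorandomness, a Chernoff bound on the number of nonzeros of $E$, and the observation that the $\oracle(n\log^2 n)$ circulant cost is dominated by $nm\eps=\Theta(n(\log n)^c)$ because $c>2$. On the bookkeeping point you flag, the paper does not adjust $c$, which is fixed by the statement, so that fallback would change the lemma; instead it uses an exact geometric-skipping sampler with $\oracle(1)$ expected time per gap and bounded running-time moments, and a moment bound over the $\oracle(nm\eps)$ independent calls makes the sampling time itself, not just $\mathrm{nnz}(E)$, equal to $\oracle(nm\eps)$ with overwhelming probability.
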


\begin{proof}
    Independently sample $H=[I_n\mid C]\leftarrow\unif(\CQ_n)$ and $E\sim\Ber(\eps)^{2n\times m}$, and set $M=HE$.
    The trapdoor evaluates $M\vect{x}$ as $H(E\vect{x})$, so correctness is immediate.
    Pseudorandomness follows from \cref{standard assumption} by the column hybrid argument above.

    Let $W$ denote the number of nonzero entries of $E$, so that $\mathbb{E}[W]=2nm\eps$.
    Sparse multiplication by $E$ costs $\oracle(n+m+W)$, while quasi-cyclic multiplication by $H$ costs $\oracle(n\log^2 n)$.
    Thus the expected circuit size is $\oracle(nm\eps+n\log^2 n)$, as in~\cite[Proposition~5.1]{benhamouda2025encrypted}.
    For sampling, we store only the $n$ defining coefficients of $C$ and the nonzero entries of $E$.
    Exact geometric skipping~\cite{ICALP:BriFri13} generates these sparse entries in expected time $\oracle(nm\eps)$, giving the same expected bound for sampling the succinct matrix description and its trapdoor circuit.

    By a Chernoff bound, $W=\oracle(nm\eps)$ with overwhelming probability, since $nm\eps=\Theta(n(\log n)^c)$.
    Hence the sampled circuit runs in $\oracle(nm\eps+n\log^2 n)$ time on every input with overwhelming probability.
    Concentration of the gap-sampling times gives the same bound for sampling.\footnote{We use a randomized word RAM with $\Theta(\log n)$-bit words and polynomial-time precision access to the normalized noise rate. The exact sampler of~\cite{ICALP:BriFri13} has uniformly bounded running-time moments of every fixed order; a standard moment bound on $\oracle(nm\eps)$ independent calls gives the claimed overwhelming-probability runtime.}

    Finally, $m=\Theta(n)$ and $c>2$ imply $n\log^2 n=o(nm\eps)$, so the sparse-error contribution dominates the quasi-cyclic cost:
    \[
        \oracle(nm\eps+n\log^2 n)=\oracle(nm\eps).
    \]
    Both efficiency requirements of \cref{def:trapdoor-efficiency} follow.
\end{proof}

\subsection{Trapdoored Triangular Matrices}\label{section2.3.}

For $n, m \in \N$ with $m = \Theta(n)$ and a suitable noise rate $\eps(n) > 0$, let $\D_\CM(n, m, \eps)$ denote 
the trapdoored matrices distribution over $\Mat_{n \times m}(\F_q)$ defined in \cref{Base TDM}. 
In this subsection, we construct trapdoored matrix distributions over triangular matrices with efficient sampling and trapdoor circuits for $\mult, \mult^\tp$. 
The complexity bounds are near-linear under the standard (dual) ring-LPN assumption and quasi-linear under its strong variant.
\cref{cor:triangular-matrices} summarizes the results for upper-triangular, lower-triangular, and symmetric matrices. 
We begin with strictly upper-triangular matrices. 

\begin{theorem}\label{thm:strict-upper}
    Recall that $\strictupper$ denotes the set of $n \times n$ strictly upper-triangular matrices. 
    The following statements hold: 
    \begin{enumerate}[label=(\roman*)]
        \item Under the standard (dual) ring-LPN assumption, for any $0 < \delta < 1$, there exists an $\nl$-samplable
        trapdoored matrix distribution over $\strictupper$ supporting $\nl$-efficient $\mult$. 
        \item Under the strong (dual) ring-LPN assumption, there exists an $\ql$-samplable 
        trapdoored matrix distribution over $\strictupper$ supporting $\ql$-efficient $\mult$. 
    \end{enumerate}
\end{theorem}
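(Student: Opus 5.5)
We build strictly upper-triangular trapdoored matrices by recursive block decomposition, using \cref{Base TDM} for the off-diagonal blocks. Writing $n = n_1 + n_2$ with $n_1 = \lfloor n/2\rfloor$, every $U \in \strictupper$ has the unique form
\[
U = \begin{pmatrix} U_1 & B \\ 0 & U_2 \end{pmatrix},\qquad U_1\in\mathcal{U}_{n_1}^\circ,\ U_2\in\mathcal{U}_{n_2}^\circ,\ B\in\Mat_{n_1\times n_2}(\F_q).
\]
Moreover, $\unif(\strictupper)$ is exactly the product of $\unif(\mathcal{U}_{n_1}^\circ)$, $\unif(\mathcal{U}_{n_2}^\circ)$ and $\unif(\Mat_{n_1\times n_2})$. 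Naive halving all the way down gives depth $\log n$ with blocks of every size, and small blocks are insecure under LPN. We therefore use a branching/stopping rule.

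\textbf{Construction.} Split $[n]$ into $b = n^{\delta}$ contiguous blocks of size $s = n^{1-\delta}$ for part (i). The diagonal blocks are strictly upper triangular of size $s$, and we sample them uniformly and store them explicitly. This costs $b\cdot s^2 = n^{2-\delta}$, which is too much, so instead we recurse on them with the same rule. The off-diagonal blocks above the diagonal form the remaining structure. To avoid $b^2$ separate base instances, we group them by dyadic levels: at recursion level $\ell$ (block size $n/2^\ell$) there are $2^{\ell}$ rectangular blocks $B$ of size $n/2^{\ell+1}$ per side. Each is sampled from $\D_\CM(n/2^{\ell+1}, n/2^{\ell+1}, \eps)$, with $\eps$ chosen according to the current dimension.

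We stop the recursion once the block dimension reaches $t = n^{\delta/2}$ (standard case) or $t=\poly\log n$ (strong case). The diagonal leaves are then sampled uniformly and stored densely, at total cost $(n/t)\cdot t^2 = nt$. Multiplication $U\vect{x}$ proceeds recursively:
\[
U\vect{x} = \bigl(U_1\vect{x}_1 + B\vect{x}_2,\; U_2\vect{x}_2\bigr).
\]
Each level costs the sum of the base-lemma costs $\oracle(d^2\eps(d))$ over its blocks of dimension $d$. With $\eps(d) = (\log d)^c/d$ this is $\oracle(d\,\poly\log d)$ per block, hence $\oracle(n\,\poly\log n)$ per level. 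Over $\oracle(\log n)$ levels, plus the leaves at cost $nt$, the total is $\nl$ in case (i). Sampling the succinct descriptions has the same cost, with overwhelming probability via \cref{Base TDM} and a union bound over polynomially many blocks.

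\textbf{Pseudorandomness.} Use a hybrid that replaces each off-diagonal block, one at a time, by a uniform matrix of the same size. Since the blocks are independent and the rest of $U$ can be simulated efficiently (explicit dense entries computed in $\poly(n)$ time), each step is bounded by the base distinguishing advantage at dimension $d \ge t$. After all replacements the matrix is exactly uniform on $\strictupper$. The total advantage is at most $\poly(n)$ times the per-block advantage.

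\textbf{Main obstacle.} The key difficulty is that this per-block advantage at the smallest dimension $d = t$ must be negligible in $n$, not merely in $d$. In case (i), $d = n^{\delta/2}$ is polynomially related to $n$, so an adversary running in $\poly(n)$ time is also $\poly(d)$-time. A function negligible in $d$ is then negligible in $n$, which is what forces the cutoff $t = n^{\Theta(\delta)}$ and the $n^{1+\delta}$ cost. In case (ii), $t = \poly\log n$ would only give $\negl(\poly\log n)$ under the standard assumption. Instead we invoke \cref{strong assumption} with a noise rate at dimension $d$ chosen so that $(d\eps)^\alpha \ge \log^2 n$, for example $\eps = (\log n)^{3/\alpha}/d$ with $d \ge (\log n)^{3/\alpha+1}$ so that $\eps<1/4$. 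This gives security $2^{-\log^2 n}$ against $2^{\log^2 n}$-time, hence all QPT, adversaries. The resulting per-block cost is $\oracle(d\,\poly\log n)$, which keeps the total at $\ql$.

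The remaining checks are that the leaf cost is $nt = \ql$ and that $\eps$ stays below $1/4$ at all levels.
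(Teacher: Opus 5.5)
Your proposal is correct and is essentially the paper's proof: the same dyadic recursion with \cref{Base TDM} off-diagonal blocks and uniform leaves below a cutoff. In both, the hybrid loss stays negligible in $n$ either because the cutoff is $n^{\Theta(\delta)}$ or because the strong assumption is invoked with $(d\eps)^\alpha$ superlogarithmic in $n$. The differences are only in parameters and bookkeeping: the paper uses cutoffs $\lceil n^\delta\rceil$ and $\lceil 8\log^c n\rceil$ with rate $2\log^c n/d$ (and $c\alpha>1$), and it runs the hybrid and the Chernoff bounds level by level rather than block by block, which avoids your per-block appeal to \cref{Base TDM}, whose overwhelming-probability guarantee is only in the (possibly polylogarithmic) block dimension.
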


Both parts of \cref{thm:strict-upper} use the same sampling algorithm, \cref{alg:strict-upper} below.
Fix an integer $c>2$; in the strong case, choose $c$ so that $c\alpha>1$,
where $\alpha$ is the constant in \cref{strong assumption}.
The choice of assumption determines the base dimension $b$.
For part~\textup{(i)}, we take $b = \lceil n^{\delta} \rceil$ for the chosen $0 < \delta < 1$; for part~\textup{(ii)}, the strong assumption allows us to reduce $b$ to $\lceil 8\log^c{n} \rceil$. 
These choices yield the two complexity bounds in \cref{thm:strict-upper}. 
We present the common sampling algorithm and establish its pseudorandomness and efficiency for both choices of $b$. 

\begin{algorithm}[H]
\caption{Sampling from a Trapdoored Matrix Distribution over $\strictupper$}
\label{alg:strict-upper}
\begin{algorithmic}[1]
    \Require Initial dimension $n$, an integer $c>2$, and a base dimension $b$, chosen as above for the relevant assumption.
    \Procedure{StrictUpperTriangular}{$d$}
    \If{$d \leq b$}
    \State $U \leftarrow \unif(\Upper_{d}^\circ)$
    \State Let $C_{\mult, U}$ be a $\oracle(d^2)$-circuit computing $\vect{x} \mapsto U\vect{x}$ for any $\vect{x} \in \F_q^{d}$. 
    \Else
    \State $(M, C_{\mult, M}) \leftarrow \D_\CM(\lceil d/2 \rceil, \lfloor d/2 \rfloor, (2\log^c{n})/d)$
    \State $(U_1, C_{\mult, U_1}) \leftarrow \Call{StrictUpperTriangular}{\lceil d/2 \rceil}$ 
    \State $(U_2, C_{\mult, U_2}) \leftarrow \Call{StrictUpperTriangular}{\lfloor d/2 \rfloor}$
    \State $U \leftarrow \begin{bmatrix} U_1 & M \\ \vect{0} & U_2 \end{bmatrix}$ 
    \State $C_{\mult, U}: (\vect{x}_1, \vect{x}_2) \mapsto (C_{\mult, U_1}(\vect{x}_1) + C_{\mult, M}(\vect{x}_2), C_{\mult, U_2}(\vect{x}_2))$ for any $\vect{x}_1 \in \F_q^{\lceil d/2 \rceil}, \vect{x}_2 \in \F_q^{\lfloor d/2 \rfloor}$.  
    \EndIf
    \State \Return $(U, C_{\mult, U})$
    \EndProcedure
\end{algorithmic}
\end{algorithm}

The following three lemmas establish pseudorandomness, sampling efficiency, and circuit size, respectively.
Throughout the remainder of this subsection, we assume that the initial dimension $n$ is a sufficiently large power of $2$
and that $c$ is chosen as specified above.

\begin{lemma}\label{Pseudorandomness}
    The matrix distribution induced by \cref{alg:strict-upper} is computationally indistinguishable from $\unif(\strictupper)$
    under either of the following parameter choices:
    under the standard (dual) ring-LPN assumption, take $b=\lceil n^\delta\rceil$ for a fixed $0<\delta<1$ and an integer $c>2$;
    under the strong (dual) ring-LPN assumption, take $b=\lceil 8\log^c n\rceil$ with an integer $c>2$ satisfying $c\alpha>1$.
\end{lemma}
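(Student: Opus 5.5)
The proof is a hybrid argument over the recursion tree of \cref{alg:strict-upper}. Unrolling the recursion, the sampled matrix $U$ is an $n\times n$ strictly upper-triangular matrix assembled from three kinds of blocks. First, the leaf blocks: diagonal blocks of size at most $b$, sampled exactly uniformly from $\Upper_d^\circ$. Second, at every internal node of dimension $d>b$, one off-diagonal block $M\leftarrow\D_\CM(\lceil d/2\rceil,\lfloor d/2\rfloor,(2\log^c n)/d)$. Third, zeros below the diagonal. All these samples are independent. A uniform element of $\strictupper$ has exactly the same block structure, but with every off-diagonal block $M$ replaced by an independent $\unif(\Mat_{\lceil d/2\rceil\times\lfloor d/2\rfloor}(\F_q))$. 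So it suffices to replace the pseudorandom off-diagonal blocks by uniform ones, one at a time.

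Fix an ordering of the internal nodes. The number of internal nodes is $N\le 2n/b\le 2n$. Define hybrid $\mathsf{H}_j$ as the distribution in which the first $j$ off-diagonal blocks are uniform and the rest come from $\D_\CM$. Then $\mathsf{H}_0$ is the output of the algorithm and $\mathsf{H}_N=\unif(\strictupper)$. A distinguisher between $\mathsf{H}_{j-1}$ and $\mathsf{H}_j$ yields a distinguisher for the $j$-th block. The reduction receives the challenge block, samples all other blocks itself, assembles $U$, and runs the distinguisher. This simulation is efficient, since the reduction may sample the remaining $\D_\CM$ blocks with their trapdoors and write out the explicit entries in $\poly(n)$ time.

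By the proof of \cref{Base TDM}, a block at a node of dimension $d$ has the form $HE$. Here $H\in\CQ_{\lceil d/2\rceil}$ and $E\sim\Ber(\eps_d)^{2\lceil d/2\rceil\times\lfloor d/2\rfloor}$ with $\eps_d=2\log^c n/d$. The column hybrid turns the advantage against such a block into at most $\lfloor d/2\rfloor$ times the advantage against \cref{standard assumption} or \cref{strong assumption} in dimension $d'=\lceil d/2\rceil$ at noise rate $\eps_d$. Note that $d'\eps_d\approx\log^c n$, independent of $d$. Summing over all hybrids, the total advantage is at most $\poly(n)\cdot\max_{d'}\mathrm{Adv}(d')$, where $d'$ ranges over $[b/2,n]$.

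The main obstacle is that the assumptions are asymptotic in their own dimension $d'$, while the reduction uses dimensions as small as about $b/2$. We also need the noise rate to lie in the admissible range, and the loss to stay negligible in $n$ rather than in $d'$. We handle the two cases as follows.

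\textbf{Standard case.} Here $d'\ge n^\delta/2$, so $d'\to\infty$ polynomially in $n$. The rate $\eps_d\le 2\log^c n/n^{\delta}\cdot O(1)$ is below $1/4$, and $d'\eps_d/\log^2 d'=\Theta(\log^c n/\log^2 d')\to\infty$ because $c>2$ and $\log d'=\Theta(\log n)$. We also need the noise rate, stated in terms of $d'$, to be of a form the conjecture covers. Since $\eps_d=\Theta((\log d')^c/d')$ up to constants depending on $\delta$, this is fine; if needed, we take the conjecture uniformly over such rates. A QPT adversary in $n$ is also polynomial in $d'$, and $\negl(d')=\negl(n)$ because $d'\ge n^{\delta}/2$.

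\textbf{Strong case.} Now $d'$ can be as small as $4\log^c n$. The assumption instead gives advantage at most $2^{-(d'\eps_d)^\alpha}\approx 2^{-(\log^c n)^\alpha}=2^{-\log^{c\alpha}n}$ against $2^{\log^{c\alpha}n}$-time adversaries. Since $c\alpha>1$, this time bound is superpolynomial in $n$, so it covers every QPT distinguisher in $n$ together with the reduction. The advantage bound is negligible in $n$ and absorbs the $\poly(n)$ hybrid loss. The condition $\eps_d<1/4$ holds because $d\ge b\ge 8\log^c n$ gives $\eps_d\le 1/4$, which is exactly where the constant $8$ in $b$ is used. This completes the plan.
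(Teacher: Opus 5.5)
Your proposal is correct and follows essentially the paper's argument: a hybrid that replaces the pseudorandom off-diagonal blocks by uniform ones, a column hybrid reducing each block to single-sample dual ring-LPN with the public $Q$ shared across columns, and a check of the parameter ranges in both cases; you replace blocks node by node where the paper goes level by level, which only changes the polynomial loss. You leave two small points implicit: the condition $d'\eps_d/\log^2 d'\to\infty$ in the strong case, which follows as in the paper from $d'\eps_d=\log^c n$ and $\log d'\le\log n$, and the strict inequality $\eps_d<1/4$, which holds because internal nodes have $d>b$.
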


\begin{proof}
    Let $\mathsf{Hybrid}_i$ denote the matrix distribution obtained by setting the base dimension in \cref{alg:strict-upper} to
    \[
        b_i=\max\left\{\frac{n}{2^i},b\right\},
    \]
    while keeping the original $n$ in the noise rates.
    Then $\mathsf{Hybrid}_0=\unif(\strictupper)$, and $\mathsf{Hybrid}_L$ coincides with the output distribution of \cref{alg:strict-upper}, where $L=\lceil\log(n/b)\rceil$.
    Fix a QPT distinguisher $\CA$, and write
    \[
        d_i=\frac{n}{2^i},\qquad
        \eps_i=\frac{2\log^c n}{d_i}=\frac{2^{i+1}\log^c n}{n}.
    \]

    For $i\in[L]$, adjacent hybrids differ only in $2^{i-1}$ off-diagonal blocks: these are uniform in $\mathsf{Hybrid}_{i-1}$ and sampled from $\D_\CM(d_i,d_i,\eps_{i-1})$ in $\mathsf{Hybrid}_i$.
    A hybrid argument over the blocks and then over their $d_i$ columns reduces distinguishing the two hybrids to distinguishing a uniform vector from $Q\vect{e}$, where $Q\leftarrow\unif(\CQ_{d_i})$ and $\vect{e}\sim\Ber(\eps_{i-1})^{2d_i}$.
    The columns within each real block share the same public $Q$.
    Choosing a block and a column uniformly gives a single-sample quantum reduction running in time $\poly(n)$.
    Denoting its advantage by $\eta_i$, we obtain
    \begin{align*}
        \Delta_i
        &\coloneqq\left|\Pr[\CA(\mathsf{Hybrid}_i)=1]-\Pr[\CA(\mathsf{Hybrid}_{i-1})=1]\right|\\
        &\leq 2^{i-1}d_i\eta_i=\frac{n}{2}\eta_i.
    \end{align*}

    Under the standard (dual) ring-LPN assumption in \cref{standard assumption}, take $b=\lceil n^\delta\rceil$ for a fixed $0<\delta<1$.
    Since $d_i\geq d_L>b/2$, for sufficiently large $n$ we have
    \[
        d_i>\frac{n^\delta}{2},\qquad
        \eps_{i-1}=\frac{\log^c n}{d_i}<\frac{2\log^c n}{n^\delta}<\frac14.
    \]
    Moreover,
    \[
        \frac{d_i\eps_{i-1}}{\log^2 d_i}
        =\frac{\log^c n}{\log^2 d_i}
        \geq\log^{c-2}n\longrightarrow\infty.
    \]
    Thus the noise rates lie in the required range.
    Since $n<(2d_i)^{1/\delta}$, each reduction runs in polynomial time in $d_i$.
    The assumption therefore gives $\eta_i=\negl(d_i)=\negl(n)$, and summing over the $L=\oracle(\log n)$ levels yields
    \[
        \sum_{i=1}^L\Delta_i\leq\frac{n}{2}\sum_{i=1}^L\eta_i=\negl(n).
    \]

    Under the strong (dual) ring-LPN assumption in \cref{strong assumption}, take $b=\lceil8\log^c n\rceil$.
    Now $d_i>b/2\geq4\log^c n$, so
    \[
        \eps_{i-1}=\frac{\log^c n}{d_i}<\frac14,\qquad
        \frac{d_i\eps_{i-1}}{\log^2 d_i}\geq\log^{c-2}n\longrightarrow\infty.
    \]
    Hence these noise rates also lie in the required range.
    Since $c\alpha>1$, the time bound
    \[
        2^{(d_i\eps_{i-1})^\alpha}
        =2^{(\log n)^{c\alpha}}
        =n^{(\log n)^{c\alpha-1}}
    \]
    eventually exceeds the polynomial running time of each reduction.
    The single-sample advantage is therefore bounded by $\eta_i\leq2^{-(\log n)^{c\alpha}}$, giving
    \[
        \sum_{i=1}^L\Delta_i
        \leq\frac{nL}{2}\,2^{-(\log n)^{c\alpha}}
        =\negl(n).
    \]
    In either case, $\mathsf{Hybrid}_L$ is computationally indistinguishable from $\mathsf{Hybrid}_0$, completing the proof.
\end{proof}

\begin{lemma}\label{lem:strict-upper-sampling}
    \cref{alg:strict-upper} runs in time $\oracle(n\log^{c+1}{n} + nb)$ with overwhelming probability. 
\end{lemma}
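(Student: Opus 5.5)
We analyze the recursion tree of \cref{alg:strict-upper} level by level. With $n$ a power of $2$, level $i$ of the tree (for $i=1,\ldots,L$ with $L=\lceil\log(n/b)\rceil$) contains $2^{i-1}$ internal calls on dimension $d_{i-1}=n/2^{i-1}$. Each such call samples one off-diagonal block from $\D_\CM(d_i,d_i,2\log^c n/d_{i-1})$ with $d_i=d_{i-1}/2$. The leaves are $2^L=\Theta(n/b)$ calls on dimension $d\le b$. We bound three costs: the work at the leaves, the work of the base-matrix samples, and the bookkeeping needed to assemble the blocks and circuits.

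\textbf{Leaves.} Each leaf samples a uniform strictly upper-triangular $d\times d$ matrix with $d\le b$, and writes down its $\oracle(d^2)$ circuit. This costs $\oracle(d^2)=\oracle(b^2)$ deterministically. Summing over $\Theta(n/b)$ leaves gives $\oracle(nb)$.

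\textbf{Internal nodes.} A node of dimension $d_{i-1}$ samples its block via \cref{Base TDM}. That lemma has cost $\oracle(d_i\cdot d_i\cdot\eps)$ with $\eps=2\log^c n/d_{i-1}=\log^c n/d_i$, so each call costs $\oracle(d_i\log^c n)$. There is one subtlety: \cref{Base TDM} is stated with the noise rate $(\log d)^c/d$ of its own dimension, while here the rate is $(\log n)^c/d_i$. So I would re-derive its cost directly. The sampler stores the $d_i$ coefficients of $C$ and generates the nonzero entries of $E$ by geometric skipping. The count $W$ has mean $2d_i^2\eps=2d_i\log^c n$, so the expected time is $\oracle(d_i+d_i\log^c n)$. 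Summing $2^{i-1}\cdot d_i\log^c n=\Theta(n\log^c n)$ per level over $L=\oracle(\log n)$ levels gives $\oracle(n\log^{c+1}n)$. The bookkeeping for assembling $U$ succinctly and composing $C_{\mult,U}$ is $\oracle(d)$ per node, which adds only $\oracle(n\log n)$ in total.

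\textbf{Overwhelming probability.} This step needs the most care. A per-call Chernoff bound fails at the small nodes, where the mean $d_i\log^c n$ may be only polylogarithmic. Instead, I would treat all sparse entries generated across the whole tree together. The total number of nonzero entries is a sum of independent Bernoulli indicators with mean $\Theta(n\log^{c+1}n)$, so a Chernoff bound keeps it within a constant factor of its mean except with probability $2^{-\Omega(n\log^{c+1}n)}$. The geometric-skipping calls are independent and have uniformly bounded moments of every fixed order, as in the footnote to \cref{Base TDM}. A moment (or Bernstein-type) bound on their sum then gives total time $\oracle(n\log^{c+1}n)$ with overwhelming probability. The quasi-cyclic parts contribute only deterministic $\oracle(d_i)$ coefficient-sampling per node. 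Adding the deterministic $\oracle(nb)$ leaf cost yields the claimed $\oracle(n\log^{c+1}n+nb)$.
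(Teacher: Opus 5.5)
Your proposal is correct and follows essentially the paper's argument: expected error weight $n\log^c n$ per depth, a Chernoff bound together with the geometric-skipping moment bound from \cref{Base TDM} for concentration, and $\oracle(nb)$ at the leaves; the only cosmetic difference is that you concentrate over the whole tree at once, whereas the paper works per depth with a union bound over the $\oracle(\log n)$ depths. Two small corrections that do not affect the result: a per-call Chernoff bound on the weights would in fact suffice, since every internal call has mean at least $b\log^c n\geq\log^{2c}n=\omega(\log n)$ (aggregation is genuinely needed only for the moment-based running-time bound); and if the quasi-cyclic circuits are written out gate by gate, they cost $\oracle(d\log^2 d)$ per node rather than $\oracle(d)$, a term the paper counts explicitly and which is absorbed because $c>2$.
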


\begin{proof}
    Let $L=\lceil\log(n/b)\rceil$.
    At depth $i=0,\ldots,L-1$, there are $2^i$ blocks of dimension $d_i=n/2^i$, each using an error matrix of size $d_i\times(d_i/2)$ at rate $\eps_i=2\log^c n/d_i$.
    Let $W_i$ be the total number of nonzero error entries at this depth. Then
    \[
        \mathbb{E}[W_i]=2^i\cdot\frac{d_i^2}{2}\cdot\eps_i=n\log^c n.
    \]
    Chernoff bounds and a union bound over the $L=\oracle(\log n)$ depths give $W_i=\oracle(n\log^c n)$ simultaneously at every depth with overwhelming probability in $n$.

    The error matrices are sampled by exact geometric skipping.
    Applying the sampling-time concentration argument from \cref{Base TDM} to the gap calls across all blocks at each depth gives $\oracle(n\log^c n)$ sampling time with overwhelming probability.
    Sampling the defining coefficients and constructing the circuits adds $\oracle(n\log^2 n+W_i)$ time with overwhelming probability.
    Since $c>2$, the total cost per depth is $\oracle(n\log^c n)$.

    The leaf dimensions sum to $n$ and are at most $b$, so sampling the uniform leaf matrices and constructing their circuits costs $\oracle(nb)$ with overwhelming probability.
    Summing these costs gives
    \[
        \oracle\bigl(n\log^c n\cdot\log(n/b)+nb\bigr)
        =\oracle(n\log^{c+1}n+nb),
    \]
    as claimed.
\end{proof}

\begin{lemma}\label{lem:strict-upper-circuit-size}
    With overwhelming probability, the circuit $C_{\mult,U}$ sampled by \cref{alg:strict-upper} has size $\oracle(n\log^{c+1}n+nb)$.
\end{lemma}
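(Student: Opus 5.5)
We bound the circuit size by unrolling the recursion in \cref{alg:strict-upper}. The total size is the sum of three kinds of cost: the circuits $C_{\mult,M}$ for the off-diagonal blocks at every internal level, the dense leaf circuits, and the additions that combine the outputs of sibling subcircuits. As in the proof of \cref{lem:strict-upper-sampling}, let $L=\lceil\log(n/b)\rceil$. At depth $i\in\{0,\dots,L-1\}$ there are $2^i$ nodes of dimension $d_i=n/2^i$. Each such node instantiates $\D_\CM(d_i/2,d_i/2,\eps_i)$ with $\eps_i=2\log^c n/d_i$.

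\textbf{Off-diagonal blocks.} The proof of \cref{Base TDM} shows that the circuit for one block $M=HE$ has size $\oracle(d_i+W+d_i\log^2 d_i)$. Here $W$ is the number of nonzero entries of $E$, the $d_i$ term covers sparse multiplication by $E$, and the last term is the quasi-cyclic multiplication by $H$. We do not invoke the per-block overwhelming-probability bound of \cref{Base TDM}, since at small $d_i$ the quantity $d_i^2\eps_i$ is only polylogarithmic in $n$. Instead we aggregate all blocks at a given depth. Write $W_i$ for the total number of nonzero error entries over the $2^i$ blocks at depth $i$. Then $\mathbb{E}[W_i]=n\log^c n$, and a Chernoff bound on this sum of independent Bernoulli indicators gives $W_i=\oracle(n\log^c n)$ except with probability $e^{-\Omega(n\log^c n)}$. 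A union bound over the $L=\oracle(\log n)$ depths keeps this failure probability negligible. The deterministic costs at depth $i$ sum to
\[
2^i\cdot\oracle(d_i\log^2 d_i)=\oracle(n\log^2 n).
\]
Hence, since $c>2$, each depth contributes $\oracle(n\log^c n)$ to the circuit size.

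\textbf{Leaves and combination.} Each leaf has dimension at most $b$ and carries an $\oracle(d^2)$ circuit. The leaf dimensions sum to $n$, so the leaves cost at most $\sum_\ell \oracle(d_\ell^2)\le \oracle(b\sum_\ell d_\ell)=\oracle(nb)$. At each internal node, the vector addition $C_{\mult,U_1}(\vect{x}_1)+C_{\mult,M}(\vect{x}_2)$ costs $\oracle(d_i)$ gates. Over a single depth this totals $\oracle(n)$, and over all depths $\oracle(n\log n)$.

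Summing the three contributions gives, with overwhelming probability,
\[
\oracle\bigl(L\cdot n\log^c n+nb+n\log n\bigr)=\oracle(n\log^{c+1}n+nb).
\]
The only real obstacle is the concentration step. It must be done per depth rather than per block, so that every tail bound is taken against a mean of order $n\log^c n$ and the resulting failure probability is negligible in $n$, not merely in the block dimension $d_i$.
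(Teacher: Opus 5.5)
Your proposal is correct and follows essentially the same route as the paper. Both arguments aggregate the error weights $W_i$ per depth, with a Chernoff bound and a union bound over the $\oracle(\log n)$ depths; charge $\oracle(n\log^2 n)$ per depth for the quasi-cyclic multiplications and combining additions; and charge $\oracle(nb)$ for the leaves. Your point that concentration must be taken per depth rather than per block is also exactly how the paper handles it, by reusing the $W_i$ bound from the proof of \cref{lem:strict-upper-sampling}.
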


\begin{proof}
    Let $W_i$ be the total error weight at depth $i$, as in the proof of \cref{lem:strict-upper-sampling}.
    Sparse multiplication contributes $\oracle(n+W_i)$ gates, while quasi-cyclic multiplication and the additions combining the recursive blocks contribute $\oracle(n\log^2 n)$ gates.
    With overwhelming probability, $W_i=\oracle(n\log^c n)$ at every depth, so $c>2$ gives a total of $\oracle(n\log^c n)$ gates per depth.
    Summing over the $\oracle(\log(n/b))$ depths and adding the $\oracle(nb)$ gates at the leaves yields $\oracle(n\log^{c+1}n+nb)$, completing the proof.
\end{proof}

The distribution sampled by \cref{alg:strict-upper} also supports efficient $\mult^\tp$.
We obtain the corresponding trapdoor circuit by applying Tellegen's principle, also known as the transposition principle.
For later use, we state the principle in the following general form.

\begin{lemma}[Tellegen's Principle~\cite{bostan2003tellegen}]\label{lem:Tellegen}
    Let $\F$ be a field and $M\in\Mat_{n\times m}(\F)$, where $n,m\in\N$ and $m=\oracle(n)$.
    Suppose a circuit $C$ of size $T(n)$ computes $\vect{x}\mapsto M\vect{x}$ using only additions and multiplications by constants in $\F$.
    Then a circuit $C^\tp$ of size $\oracle(T(n)+n)$ computing $\vect{y}\mapsto M^\tp\vect{y}$ can be constructed from $C$ in time $\oracle(T(n)+n)$.
\end{lemma}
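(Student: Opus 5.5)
The plan is to treat $C$ as a straight-line linear program, that is, a directed acyclic graph, and reverse it. Write $C$ as a sequence of $T(n)$ instructions over registers. Inputs are $x_1,\dots,x_m$ and outputs are $y_1,\dots,y_n$. Each instruction is either $r \leftarrow r' + r''$ or $r \leftarrow a\cdot r'$ with $a\in\F$. We may first normalize $C$ at $\oracle(T(n)+n)$ cost so that every instruction writes a fresh register. The computed map is then a composition
\[
M = E_{T}\cdots E_{2}E_{1},
\]
where each $E_t$ is an elementary linear map on the register space. Each $E_t$ updates one coordinate by a linear combination of at most two others, or scales one coordinate. Transposing reverses the product:
\[
M^\tp = E_1^\tp\cdots E_T^\tp.
\]
Each $E_t^\tp$ is again elementary. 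Transposing $r\leftarrow r'+r''$ gives the adjoint updates $\bar r' \mathrel{+}= \bar r$ and $\bar r''\mathrel{+}= \bar r$, followed by clearing $\bar r$. Transposing scaling by $a$ gives $\bar r'\mathrel{+}= a\bar r$. So each original gate yields $\oracle(1)$ gates of the same kinds.

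The construction of $C^\tp$ proceeds in three steps.
\begin{enumerate}
\item Allocate adjoint registers $\bar r$ for all registers, initialized to zero. Load $\bar y_j\leftarrow y_j$ for the $n$ inputs of $C^\tp$.
\item Process the instructions of $C$ in reverse order, emitting the adjoint gates above.
\item Read off $\bar x_1,\dots,\bar x_m$ as the output.
\end{enumerate}

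The main point to check is correctness in the presence of fan-out: an intermediate register may be read by many later gates. The accumulating form $\bar r' \mathrel{+}=\cdots$ handles this, since adjoints sum over all uses. I would verify this by induction, showing that after reversing instructions $t,\dots,T$, the adjoint vector equals $(E_t^\tp\cdots E_T^\tp)$ applied to the embedded $\vect y$. Realizing accumulation with fresh-register gates costs one addition per use, hence $\oracle(T(n))$ gates overall. Duplicate registers appearing as outputs, and outputs that are also inputs, are handled by the same accumulation.

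For the size bound, the initialization and loading contribute $\oracle(n+m)=\oracle(n)$ gates since $m=\oracle(n)$. Zeros can be implemented by omitting gates or with a constant. The reverse pass contributes $\oracle(T(n))$ gates, giving total size $\oracle(T(n)+n)$. The construction is a single linear scan of $C$ plus $\oracle(n)$ bookkeeping, so it runs in time $\oracle(T(n)+n)$.
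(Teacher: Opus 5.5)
Your proposal is correct, and it matches the standard proof of this result. The paper proves nothing itself and cites Bostan--Lecerf--Schost, whose transposition principle is established by reversing the linear straight-line program and transposing each addition or scaling instruction, with accumulation into adjoint registers absorbing fan-out; your $\oracle(T(n))$ reverse pass plus $\oracle(n+m)=\oracle(n)$ for loading $\vect{y}$ and producing the possibly-zero $\bar{x}_1,\dots,\bar{x}_m$ (which implements the transposed input embedding and output projection you omitted from $M=E_T\cdots E_1$) gives the stated bounds.
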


\begin{remark}
    Once the sampled trapdoor is fixed, all circuits we construct for $\mult$ and for $\inv$ on invertible matrices use only additions and multiplications by field constants.
    Unrolling any recursive calls therefore yields linear arithmetic circuits to which \cref{lem:Tellegen} applies.
    Throughout the paper, we will invoke this lemma without further comment to obtain the corresponding circuits for $\mult^\tp$ and $\inv^\tp$.
    Our sampling and circuit-size bounds have the same order and are at least linear in $n$, so this preserves the stated sampling and evaluation bounds.
\end{remark}

Combining \cref{thm:strict-upper} with \cref{lem:Tellegen}, we extend the construction to the matrix families in the following corollary. 
Recall that $\Lower_n^{0}$ and $\Lower_n$ denote the sets of strictly lower-triangular and lower-triangular $n \times n$ matrices over $\F_q$, respectively, and 
$\Sym_n$ denotes the set of $n \times n$ symmetric matrices over $\F_q$. 

\begin{corollary}\label{cor:triangular-matrices}
    Let $\CH$ be any of the matrix families $\Upper_n^\circ$, $\Lower_n^\circ$, $\Upper_n$, $\Lower_n$, or $\Sym_n$.
    Then the following statements hold:
    \begin{enumerate}[label=(\roman*)]
        \item Under the standard (dual) ring-LPN assumption, for any fixed $0<\delta<1$, there exists an $\nl$-samplable trapdoored matrix distribution over $\CH$ supporting $\nl$-efficient evaluation of both $\mult$ and $\mult^\tp$.
        \item Under the strong (dual) ring-LPN assumption, there exists an $\ql$-samplable trapdoored matrix distribution over $\CH$ supporting $\ql$-efficient evaluation of both $\mult$ and $\mult^\tp$.
    \end{enumerate}
\end{corollary}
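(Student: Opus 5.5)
We derive each family from the strictly upper-triangular distribution of \cref{thm:strict-upper}, together with uniformly sampled diagonals and \cref{lem:Tellegen}; for $\Sym_n$ we need one additional symmetrization step. Throughout, let $(U,C_{\mult,U})$ be sampled by \cref{alg:strict-upper}. By \cref{lem:strict-upper-sampling} and \cref{lem:strict-upper-circuit-size}, this costs $\nl$ (with $b=\lceil n^\delta\rceil$, using $\log^{c+1}n=o(n^\delta)$) or $\ql$ (with $b=\lceil 8\log^c n\rceil$), with overwhelming probability. Applying \cref{lem:Tellegen} to $C_{\mult,U}$ gives $C_{\mult^\tp,U}$ of the same order, which settles $\Upper_n^\circ$. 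For $\Lower_n^\circ$, we output $L=U^\tp$, take $C_{\mult,L}\coloneqq C_{\mult^\tp,U}$ and $C_{\mult^\tp,L}\coloneqq C_{\mult,U}$. Pseudorandomness follows because transposition is a bijection $\Upper_n^\circ\to\Lower_n^\circ$ that maps the uniform distribution to the uniform distribution and is computable in polynomial time, so any distinguisher for $L$ yields one for $U$ against $\unif(\Upper_n^\circ)$, which \cref{Pseudorandomness} rules out.

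For $\Upper_n$ (and symmetrically $\Lower_n$), we additionally sample a uniform diagonal $D$ (arbitrary entries, possibly zero) in time $\oracle(n)$ and output $U+D$. Since $D$ is independent of $U$ and $\unif(\Upper_n)=\unif(\Upper_n^\circ)+\unif(\text{diagonals})$, a reduction embeds its challenge $U$, adds a fresh $D$ itself, and runs the distinguisher; hence pseudorandomness is inherited. The circuits are $\vect{x}\mapsto C_{\mult,U}(\vect{x})+D\vect{x}$ and $\vect{x}\mapsto C_{\mult^\tp,U}(\vect{x})+D\vect{x}$, which add $\oracle(n)$ gates. If invertible variants were needed, we would restrict $D$ to nonzero entries, but the corollary only asks for $\Upper_n$.

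For $\Sym_n$, we output $S=U+U^\tp+D$, with $U$ from \cref{alg:strict-upper} and $D$ a uniform diagonal. The map $(U,D)\mapsto U+U^\tp+D$ is a bijection from $\Upper_n^\circ\times\{\text{diagonals}\}$ onto $\Sym_n$ in every characteristic, including $q$ even: the strict upper part of $S$ determines $U$, and then the diagonal of $S$ equals $D$. Therefore uniform inputs give $\unif(\Sym_n)$, and the same embedding reduction yields pseudorandomness. Since $S^\tp=S$, we set $C_{\mult,S}=C_{\mult^\tp,S}\colon\vect{x}\mapsto C_{\mult,U}(\vect{x})+C_{\mult^\tp,U}(\vect{x})+D\vect{x}$, whose size is within a constant factor of $C_{\mult,U}$.

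The only step needing care is checking that efficiency bounds and the overwhelming-probability qualifiers carry through. Tellegen's construction is itself performed in time $\oracle(T+n)$, so sampling $C_{\mult^\tp,U}$ adds only the same order to the sampling cost. All failure events come from the single sampling of $U$, so no further union bound is needed. Finally, the explicit matrix handed to the distinguisher is a deterministic polynomial-time function of $U$ and the independent $D$, which is exactly what the reductions above use. I expect no real obstacle beyond verifying the symmetric bijection in characteristic $2$, which holds because the off-diagonal part of $U+U^\tp$ never involves doubling.
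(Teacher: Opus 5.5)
Your proposal is correct and matches the paper's proof: it also takes $U$ from \cref{thm:strict-upper} with an independent uniform diagonal $D$, outputs $U$, $U^\tp$, $U+D$, $U^\tp+D$, and $U+U^\tp+D$, obtains pseudorandomness by uniformity-preserving efficient post-processing, and builds every circuit from $C_{\mult,U}$, its transpose via \cref{lem:Tellegen}, and $\oracle(n)$ extra gates. Your explicit check that $(U,D)\mapsto U+U^\tp+D$ is a bijection onto $\Sym_n$ also in characteristic $2$ is a detail the paper states without verification.
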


\begin{proof}
    Sample $U$ from the distribution in \cref{thm:strict-upper}, and independently sample a diagonal matrix $D$ whose diagonal entries are independent and uniform in $\F_q$.
    For the five choices of $\CH$ listed above, use, respectively,
    \[
        U,\qquad U^\tp,\qquad U+D,\qquad U^\tp+D,\qquad U+U^\tp+D.
    \]
    When $U$ is uniform over $\strictupper$, these matrices are uniform over their respective families.
    Pseudorandomness therefore follows by efficient post-processing.

    Each multiplication circuit uses at most one copy of the circuit for $U$ and one for $U^\tp$, together with $\oracle(n)$ additional gates for diagonal multiplication and additions.
    The same holds for transpose multiplication.
    Sampling $D$ and constructing these circuits preserves the stated sampling and evaluation bounds, completing the proof.
\end{proof}

\subsection{Trapdoored Invertible Triangular Matrices with Efficient Inversion}\label{section2.4.}

We now extend the construction in \cref{cor:triangular-matrices} to $\Upper_n^*$, the group of invertible upper-triangular matrices over $\F_q$ (a Borel subgroup of $\GL(n,\F_q)$).
In addition to matrix--vector multiplication, the resulting trapdoors support the inverse and inverse-transpose actions $\inv$ and $\inv^\tp$ defined in \cref{trapdoor def}.
The following theorem summarizes our guarantees.

\begin{theorem}\label{Trapdoored Invertible Triangular}
    Let $\Upper_n^*$ denote the group of $n \times n$ invertible upper-triangular matrices over $\F_q$.
    \begin{enumerate}[label=(\roman*)]
        \item Under the standard (dual) ring-LPN assumption, for any fixed $0<\delta<1$, there exists an $\nl$-samplable trapdoored matrix distribution over $\Upper_n^*$ supporting $\nl$-efficient $\mult$, $\mult^\tp$, $\inv$, and $\inv^\tp$.
        \item Under the strong (dual) ring-LPN assumption, there exists an $\ql$-samplable trapdoored matrix distribution over $\Upper_n^*$ supporting $\ql$-efficient $\mult$, $\mult^\tp$, $\inv$, and $\inv^\tp$.
    \end{enumerate}
\end{theorem}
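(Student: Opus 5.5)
We construct the distribution by modifying the recursion of \cref{alg:strict-upper}. Uniform strictly upper blocks at the leaves become uniform invertible upper-triangular blocks. The recursive step becomes
\[
    U=\begin{bmatrix} U_1 & M\\ \vect{0} & U_2\end{bmatrix},
    \qquad U_1\in\Upper^*_{\lceil d/2\rceil},\; U_2\in\Upper^*_{\lfloor d/2\rfloor},\; M\leftarrow\D_\CM\bigl(\lceil d/2\rceil,\lfloor d/2\rfloor,(2\log^c n)/d\bigr).
\]
Equivalently, $U=D+U'$, where $D$ is diagonal with independent uniform entries in $\F_q\setminus\{0\}$ and $U'$ is sampled from \cref{alg:strict-upper}. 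The only change is that the leaves also carry their diagonals.

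For pseudorandomness, note that $U=D+U'$ with $U'$ uniform over $\strictupper$ is exactly uniform over $\Upper_n^*$. Since $D$ is sampled independently and adding it is efficient post-processing, pseudorandomness follows from \cref{Pseudorandomness}. The multiplication circuit is $C_{\mult,U}(\vect{x})=D\vect{x}+C_{\mult,U'}(\vect{x})$. By \cref{lem:strict-upper-sampling} and \cref{lem:strict-upper-circuit-size}, both sampling and this circuit cost $\oracle(n\log^{c+1}n+nb)$. With the two choices $b=\lceil n^\delta\rceil$ and $b=\lceil 8\log^c n\rceil$, this gives the bounds $\nl$ and $\ql$.

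For inversion, we use block back-substitution in the recursive form. To solve $U\vect{y}=\vect{x}$ with $\vect{x}=(\vect{x}_1,\vect{x}_2)$, compute
\[
    \vect{y}_2=U_2^{-1}\vect{x}_2,\qquad \vect{y}_1=U_1^{-1}\bigl(\vect{x}_1-M\vect{y}_2\bigr).
\]
At the leaves, $U^{-1}$ of a $d\times d$ block with $d\le b$ is precomputed during sampling in $\oracle(b^3)$ time per leaf, or applied by ordinary back-substitution in $\oracle(d^2)$ time. Precomputation is unnecessary, since back-substitution already costs $\oracle(d^2)$ per leaf and hence $\oracle(nb)$ in total. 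Each level makes one call to $C_{\mult,M}$ for every block. The sizes are therefore the same as in the proof of \cref{lem:strict-upper-circuit-size}: $\oracle(n\log^c n)$ gates per depth, $\oracle(\log(n/b))$ depths, and $\oracle(nb)$ at the leaves.

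The main obstacle is whether this recursion is a genuine circuit and not merely an algorithm. The computation of $\vect{y}_1$ depends sequentially on $\vect{y}_2$, but this affects only depth, not size. The resulting procedure is still a fixed linear arithmetic circuit once the trapdoor is fixed, because it uses only additions and multiplication by constants, including the inverted leaf diagonals. The diagonal entries are nonzero, so $U$ is always invertible and $\inv$ never outputs $\perp$; this gives correctness for every sample. Finally, $\mult^\tp$ and $\inv^\tp$ follow from \cref{lem:Tellegen} applied to $C_{\mult,U}$ and $C_{\inv,U}$, as in the remark after that lemma, with the same bounds.
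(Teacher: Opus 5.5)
Your proposal is correct and follows essentially the same route as the paper. You set $U^*=U+D$ with $U$ from \cref{alg:strict-upper} and an independent invertible diagonal $D$, get pseudorandomness from \cref{Pseudorandomness} by efficient post-processing, build the inverse circuit by recursive block back-substitution (lower block first, then $\vect{y}_1-M\vect{z}$ into the upper block, with back-substitution at the leaves) whose per-level gate count matches \cref{lem:strict-upper-circuit-size}, and obtain $\mult^\tp$ and $\inv^\tp$ from \cref{lem:Tellegen}. The only point you leave implicit is the construction time of the inverse circuit, which the paper handles in \cref{lem:inversion-runtime}: building each child circuit directly in the output means no gate is written more than once, so construction time matches circuit size.
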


Both parts of \cref{Trapdoored Invertible Triangular} use \cref{alg:invertible-upper}, with \cref{alg:efficient-inversion} as a subroutine.
As in the preceding subsection, fix an integer $c>2$; in the strong case, also require $c\alpha>1$, where $\alpha$ is the constant in \cref{strong assumption}.
We take $b=\lceil n^\delta\rceil$ in part~\textup{(i)} and $b=\lceil8\log^c n\rceil$ in part~\textup{(ii)}.

An upper-triangular matrix is invertible if and only if all its diagonal entries are nonzero.
Accordingly, \cref{alg:invertible-upper} forms $U^*=U+D$, where $U$ is sampled by \cref{alg:strict-upper} and $D$ is an independent, uniformly random invertible diagonal matrix.
\cref{alg:efficient-inversion} follows the recursive block structure of $U$ to construct a circuit $C_{\inv,U^*}$ computing $\vect{y}\mapsto(U^*)^{-1}\vect{y}$.
The circuit computes the lower diagonal block first and substitutes the result into the upper block; and use back-substitution\footnote{For an invertible upper-triangular matrix $A=(a_{ij})$, back-substitution solves $A\vect{x}=\vect{y}$ by computing
    \[
        x_i=a_{ii}^{-1}\Bigl(y_i-\sum_{j=i+1}^{d}a_{ij}x_j\Bigr),\qquad i=d,d-1,\ldots,1.
    \]
    The diagonal inverses are precomputed circuit constants. There are $d(d-1)/2$ off-diagonal terms, so constructing and evaluating the circuit both take $\oracle(d^2)$ time.}
at the base dimension. 

\begin{algorithm}
\caption{Generating an Efficient Inversion Circuit}
\label{alg:efficient-inversion}
\begin{algorithmic}[1]
    \Require $(U,C_{\mult,U})$ from \cref{alg:strict-upper}, an invertible diagonal matrix $D$, and the same base dimension $b$.
    \Procedure{EfficientInversionCircuit}{$d, U, C_{\mult, U}, D$}
    \State $U^* \leftarrow U+D$
    \If{$d \leq b$}
    \State Construct a size-$\oracle(d^2)$ circuit $C_{\inv,U^*}$ by back-substitution.
    \Else
    \State $d_1 \leftarrow \lceil d/2\rceil$, $d_2 \leftarrow \lfloor d/2\rfloor$
    \State Obtain the stored block descriptions $U_1,U_2,M$ from $U$, and write $D=\operatorname{diag}(D_1,D_2)$.
    \State Obtain the circuits $C_{\mult,U_1}$, $C_{\mult,U_2}$, and $C_{\mult,M}$ from the recursive description of $C_{\mult,U}$.
    \State $C_1 \leftarrow \Call{EfficientInversionCircuit}{d_1,U_1,C_{\mult,U_1},D_1}$
    \State $C_2 \leftarrow \Call{EfficientInversionCircuit}{d_2,U_2,C_{\mult,U_2},D_2}$
    \State Define $C_{\inv,U^*}$ on input $(\vect{y}_1,\vect{y}_2)$ by:
    \Statex \hspace{4.5em} $\vect{z}\leftarrow C_2(\vect{y}_2)$;
    \Statex \hspace{4.5em} output $\bigl(C_1(\vect{y}_1-C_{\mult,M}(\vect{z})),\vect{z}\bigr)$.
    \EndIf
    \State \Return $C_{\inv,U^*}$
    \EndProcedure
\end{algorithmic}
\end{algorithm}

\begin{algorithm}
\caption{Sampling from a Trapdoored Matrix Distribution over $\Upper_n^*$}
\label{alg:invertible-upper}
\begin{algorithmic}[1]
    \Require Dimension $n$, with $c$ and $b$ chosen as in \cref{alg:strict-upper} for the relevant assumption.
    \Procedure{InvertibleUpperTriangular}{$n$}
    \State Sample $D=\operatorname{diag}(a_1,\ldots,a_n)$ with independent $a_i\leftarrow\unif(\F_q\setminus\{0\})$.
    \State Construct a size-$\oracle(n)$ circuit $C_{\mult,D}$ computing $\vect{x}\mapsto D\vect{x}$.
    \State $(U,C_{\mult,U}) \leftarrow \Call{StrictUpperTriangular}{n}$
    \State $U^* \leftarrow U+D$
    \State $C_{\mult,U^*}:\vect{x}\mapsto C_{\mult,U}(\vect{x})+C_{\mult,D}(\vect{x})$
    \State $C_{\inv,U^*} \leftarrow \Call{EfficientInversionCircuit}{n,U,C_{\mult,U},D}$
    \State \Return $(U^*,C_{\mult,U^*},C_{\inv,U^*})$
    \EndProcedure
\end{algorithmic}
\end{algorithm}

Pseudorandomness follows from \cref{Pseudorandomness} by common hybrid argument. 
The sampled multiplication circuit $C_{\mult, U^*}$ has the same asymptotic size as in \cref{cor:triangular-matrices}, since the diagonal contribution costs only $\oracle(n)$.
It remains to establish the correctness of \cref{alg:efficient-inversion} and to bound its running time and the size of its output circuit $C_{\inv,U^*}$.

\begin{lemma}\label{lem:inversion-runtime}
    \cref{alg:efficient-inversion} runs in time $\oracle(n\log^{c+1}n+nb)$ with overwhelming probability over the input sampled by \cref{alg:strict-upper}, for any invertible diagonal matrix $D$.
    The same running-time bound holds for \cref{alg:invertible-upper} with overwhelming probability.
\end{lemma}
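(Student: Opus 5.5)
The plan is to analyze the recursion tree of \cref{alg:efficient-inversion}, which is exactly the recursion tree of \cref{alg:strict-upper} on the same sampled $U$. The algorithm never samples anything: it only reads stored block descriptions and wires together previously built circuits. The randomness enters only through the structure of $U$, namely the error weights $W_i$. So we may condition on the event, shown to have overwhelming probability in the proof of \cref{lem:strict-upper-sampling}, that $W_i=\oracle(n\log^c n)$ simultaneously at every depth $i$. Since $D$ is arbitrary and only enters the leaves as precomputed inverse constants, the bound then holds for every invertible diagonal $D$.

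\textbf{Internal nodes.} At an internal node of dimension $d$ at depth $i$, the algorithm splits $D$ and extracts $U_1,U_2,M$ together with their circuits. This is done by reference (pointers into the recursive description) in $\oracle(1)$ time, or in the worst case in time linear in the size of the stored description of $M$. It then defines $C_{\inv,U^*}$ by composing $C_2$, $C_{\mult,M}$, a subtraction of $\lceil d/2\rceil$ coordinates, and $C_1$. Treating subcircuits as referenced gadgets, the local work at that node is $\oracle(d)$ for the subtraction and wiring, plus possibly $\oracle(|C_{\mult,M}|)$ if the circuit for $M$ must be copied.

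\textbf{Summing per depth.} At depth $i$ there are $2^i$ nodes with $d_i=n/2^i$, so the $\oracle(d)$ terms sum to $\oracle(n)$ per depth. The $C_{\mult,M}$ circuits at depth $i$ have total size $\oracle(n\log^2 n+W_i)=\oracle(n\log^c n)$ by the argument of \cref{lem:strict-upper-circuit-size}. Each block $M$ is used exactly once, since the inversion recursion mirrors the multiplication recursion. Summing over the $L=\oracle(\log(n/b))$ depths gives $\oracle(n\log^{c+1}n)$.

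\textbf{Leaves.} Each leaf of dimension $d\le b$ costs $\oracle(d^2)$ for back-substitution: computing the $d$ diagonal inverses in $\F_q$ takes $\oracle(1)$ each, and the $d(d-1)/2$ off-diagonal terms are read from the stored uniform leaf matrix. The leaf dimensions sum to $n$, so the leaves contribute $\oracle(nb)$. This gives the total $\oracle(n\log^{c+1}n+nb)$.

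\textbf{Bounding \cref{alg:invertible-upper}.} Sampling $D$ and building $C_{\mult,D}$ take $\oracle(n)$. The call to \cref{alg:strict-upper} takes $\oracle(n\log^{c+1}n+nb)$ by \cref{lem:strict-upper-sampling}, and forming $C_{\mult,U^*}$ adds $\oracle(n)$. The call to \cref{alg:efficient-inversion} is bounded as above. A union bound over the two overwhelming-probability events, the weight bound and the sampling-time concentration, finishes the proof.

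\textbf{Main obstacle.} The delicate point is making sure no node reprocesses its subtrees, for example by copying $C_1$ or $C_2$ instead of referencing them. Copying would add a factor proportional to the size of the subtree below each node at every level and break the per-depth accounting. I would handle this by fixing the circuit representation as a DAG built by reference, so that each subcircuit is constructed exactly once and each node adds only its local gates. I would also check that extracting $M$ and $C_{\mult,M}$ from the stored description is charged at most once per block.
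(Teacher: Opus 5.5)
Your proposal is correct and follows essentially the same route as the paper: each internal node copies (or references) $C_{\mult,M}$ once and adds $\oracle(d)$ connecting gates, while the child inverse circuits are built directly into the output rather than recopied. The off-diagonal weight bound from \cref{lem:strict-upper-circuit-size} then gives $\oracle(n\log^c n)$ per depth over $\oracle(\log(n/b))$ depths, the leaves contribute $\oracle(nb)$, and \cref{alg:invertible-upper} adds only \cref{lem:strict-upper-sampling} plus $\oracle(n)$ work for $D$, exactly as in the paper's proof.
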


\begin{proof}
    We access the input subcircuits through their stored references and construct the output circuit by writing each gate once.
    At an internal node of dimension $d$, we copy $C_{\mult,M}$ and add $\oracle(d)$ gates to connect the two recursively constructed inverse circuits.
    The child circuits are constructed directly in the output, so their gates are not copied again at their parent nodes.

    By the proof of \cref{lem:strict-upper-circuit-size}, the total size of the off-diagonal circuits at each depth is $\oracle(n\log^c n)$ with overwhelming probability, simultaneously over all depths.
    The connecting gates add only $\oracle(n)$ per depth.
    There are $\oracle(\log(n/b))$ internal levels.
    At the leaves, constructing the circuits costs $\sum_{\ell}\oracle(d_\ell^2)\leq\oracle(nb)$, since $d_\ell\leq b$ and $\sum_\ell d_\ell=n$.
    Thus the total construction time is
    \[
        \oracle\bigl(n\log^c n\cdot\log(n/b)+nb\bigr)
        =\oracle(n\log^{c+1}n+nb).
    \]
    For \cref{alg:invertible-upper}, \cref{lem:strict-upper-sampling,lem:strict-upper-circuit-size} give the same bound for sampling $U$ and constructing the multiplication circuit.
    Sampling $D$ and handling diagonal multiplication add only $\oracle(n)$ work with overwhelming probability, so the claimed bound follows.
\end{proof}

\begin{lemma}\label{lem:inverse-circuit}
    For every input satisfying the requirements of \cref{alg:efficient-inversion}, the constructed circuit $C_{\inv,U^*}$ correctly computes $\vect{y}\mapsto(U^*)^{-1}\vect{y}$.
    With overwhelming probability over the input sampled by \cref{alg:strict-upper}, its size is $\oracle(n\log^{c+1}n+nb)$.
\end{lemma}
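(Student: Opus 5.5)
We prove correctness by strong induction on the dimension $d$, following the recursion of \cref{alg:efficient-inversion}, and then bound the size by summing over the recursion tree exactly as in \cref{lem:inversion-runtime}.

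\textbf{Correctness.} In the base case $d\le b$, $U^*=U+D$ is upper triangular with nonzero diagonal. Back-substitution with the precomputed constants $a_{ii}^{-1}$ returns the unique solution of $U^*\vect{x}=\vect{y}$.

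For $d>b$, write the block form
\[
U^*=\begin{bmatrix} U_1+D_1 & M\\ \vect{0} & U_2+D_2\end{bmatrix}.
\]
Here $U_1+D_1$ and $U_2+D_2$ are again of the required form: strictly upper triangular blocks produced by the recursive calls of \cref{alg:strict-upper}, plus invertible diagonals. By the induction hypothesis, $C_1$ and $C_2$ compute $(U_1+D_1)^{-1}$ and $(U_2+D_2)^{-1}$. Solving $U^*(\vect{x}_1,\vect{x}_2)=(\vect{y}_1,\vect{y}_2)$ gives
\[
\vect{x}_2=(U_2+D_2)^{-1}\vect{y}_2=\vect{z},\qquad \vect{x}_1=(U_1+D_1)^{-1}(\vect{y}_1-M\vect{z}).
\]
This is exactly what the circuit outputs, using the correctness of $C_{\mult,M}$ from \cref{Base TDM}. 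Correctness therefore holds for every sampled input, not only with high probability.

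\textbf{Size.} Unroll the recursion. At each internal node of dimension $d$, the circuit contains one copy of $C_{\mult,M}$ plus $\oracle(d)$ subtraction and wiring gates; the child circuits are not duplicated. At depth $i$, the nodes' dimensions sum to $n$. By the proof of \cref{lem:strict-upper-circuit-size}, the copies of $C_{\mult,M}$ at each depth have total size $\oracle(n+W_i)+\oracle(n\log^2 n)=\oracle(n\log^c n)$, simultaneously for all $\oracle(\log(n/b))$ depths with overwhelming probability. This uses the Chernoff and union bound already established there. The leaves contribute $\sum_\ell\oracle(d_\ell^2)\le \oracle(nb)$, since $d_\ell\le b$ and $\sum_\ell d_\ell=n$. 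Summing over depths gives $\oracle(n\log^{c+1}n+nb)$.

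The only subtle point is that the size bound must hold for the whole circuit at once. This is handled by conditioning on the single high-probability event, from the proof of \cref{lem:strict-upper-circuit-size}, that all $W_i$ are $\oracle(n\log^c n)$. On that event the bound is deterministic, and the size does not depend on $D$.
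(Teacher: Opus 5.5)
Your proposal is correct and follows essentially the same argument as the paper. You use induction on the block dimension, with back-substitution at the leaves and the block-triangular solve at internal nodes, and then take the per-depth gate count from \cref{lem:strict-upper-circuit-size} (as reused in \cref{lem:inversion-runtime}) plus $\oracle(nb)$ for the leaves; the only cosmetic difference is that you derive $(\vect{x}_1,\vect{z})$ by solving the block system, whereas the paper verifies the circuit's output by multiplying it back by $U+D$, which is equivalent since $U^*$ is invertible.
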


\begin{proof}
    Correctness follows by induction on the block dimension $d$.
    For $d\leq b$, the circuit solves $(U+D)\vect{x}=\vect{y}$ by back-substitution.
    Otherwise, let $A_1=U_1+D_1$ and $A_2=U_2+D_2$, both of which are invertible.
    By induction, the algorithm's circuit computes
    \[
        \vect{z}=A_2^{-1}\vect{y}_2,\qquad
        \vect{x}_1=A_1^{-1}(\vect{y}_1-M\vect{z}).
    \]
    Hence its output satisfies
    \[
        (U+D)\begin{bmatrix}\vect{x}_1\\\vect{z}\end{bmatrix}
        =\begin{bmatrix}A_1\vect{x}_1+M\vect{z}\\A_2\vect{z}\end{bmatrix}
        =\begin{bmatrix}\vect{y}_1\\\vect{y}_2\end{bmatrix},
    \]
    proving correctness.

    Each internal node contributes one off-diagonal multiplication circuit and $\oracle(d)$ additional gates, while each leaf contributes $\oracle(d^2)$ gates.
    The gate count in the proof of \cref{lem:inversion-runtime} therefore gives the claimed size bound with overwhelming probability.
\end{proof}

Substituting the two choices of $b$ into \cref{lem:inversion-runtime,lem:inverse-circuit} gives bounds of $\oracle(n^{1+\delta})$ and $\oracle(n\log^{c+1}n)$, respectively.
For each fixed sample, the inverse circuit uses only additions and multiplication by field constants, so \cref{lem:Tellegen} also yields an inverse-transpose circuit with the same asymptotic bounds.
Together with the pseudorandomness and multiplication guarantees above, this proves \cref{Trapdoored Invertible Triangular}.

As in \cref{cor:triangular-matrices}, transposing the sampled matrices gives the same guarantees for the family $\Lower_n^*$ of invertible lower-triangular matrices.
Transposition simply interchanges $\mult$ with $\mult^\tp$ and $\inv$ with $\inv^\tp$, preserving all sampling and evaluation bounds.

\subsection{Trapdoored Invertible Matrices with Efficient Inversion}\label{2.5}

We now extend our construction from invertible triangular matrices to the full general linear group $\GL(n,\F_q)$, preserving efficient sampling and all four trapdoor operations.
The following theorem summarizes the resulting guarantees.

\begin{theorem}\label{trapdoor GL}
    Let $\GL(n,\F_q)$ denote the group of $n\times n$ invertible matrices over $\F_q$.
    \begin{enumerate}[label=(\roman*)]
        \item Under the standard (dual) ring-LPN assumption, for any fixed $0<\delta<1$, there exists an $\nl$-samplable trapdoored matrix distribution over $\GL(n,\F_q)$ supporting $\nl$-efficient $\mult$, $\mult^\tp$, $\inv$, and $\inv^\tp$.
        \item Under the strong (dual) ring-LPN assumption, there exists an $\ql$-samplable trapdoored matrix distribution over $\GL(n,\F_q)$ supporting $\ql$-efficient $\mult$, $\mult^\tp$, $\inv$, and $\inv^\tp$.
    \end{enumerate}
\end{theorem}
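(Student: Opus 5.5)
We obtain $\GL(n,\F_q)$ from the triangular constructions of \cref{Trapdoored Invertible Triangular} via the Bruhat decomposition, as the introduction suggests. Every $G\in\GL(n,\F_q)$ can be written uniquely as
\[
    G = L\,P\,U,
\]
where $P\in\CS_n$ is a permutation matrix, $U\in\Upper_n^*$ is arbitrary, and $L$ ranges over lower unitriangular matrices in a subgroup $\Lower_P$ determined by $P$. Concretely, $\Lower_P = I + \{N\in\strictlower : P^{-1}NP \text{ strictly upper}\}$, so that the double-coset representation is unique. The sampler therefore draws $P$ from the pushforward distribution of $\unif(\GL(n,\F_q))$ onto Bruhat cells; the probability of $P$ is proportional to $|\Upper_n^*|\cdot q^{\ell(P)}$, where $\ell(P)$ is the number of inversions. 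It then samples $U$ from the trapdoored $\Upper_n^*$ distribution and $L$ from a trapdoored distribution over $\Lower_P$.

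The first step is to sample $P$ efficiently. I would use the standard Mallows-type sequential sampler for the $q^{\ell}$-weighted distribution on permutations (the same one used by Bravyi--Maslov). It inserts positions one at a time with truncated-geometric choices and runs in $\oracle(n\log n)$ time, for instance with a Fenwick tree to locate the $k$-th free slot.

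The second step, which I expect to be the main obstacle, is a trapdoored distribution over $\Lower_P$, whose free entries form a $P$-dependent pattern rather than a full triangle. The cleanest route is to avoid sampling $\Lower_P$ directly. Instead, sample $L'=I+N$ from the trapdoored lower unitriangular distribution, which is $U^\tp+I$ with $U$ from \cref{thm:strict-upper}; it has efficient $\inv$ by \cref{alg:efficient-inversion} with $D=I$. Set $G=L'PU$. Since $L'$ is uniform over the lower unitriangular group $\Lower_n^1$, we want $L'PU$ with $P$ drawn as above to be uniform over $\GL$. Every element of $\Lower_n^1 P\Upper_n^*$ is hit exactly $|\Lower_n^1|\,|\Upper_n^*|/|\Lower_n^1P\Upper_n^*|$ times. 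The cell $\Lower_n^1 P \Upper_n^*$ is the Bruhat cell $B^-PB$, of size proportional to $q^{\ell(w_0P)}$ or its reverse, depending on convention. I will therefore choose the weighting of $P$ to match the cell sizes $|B^-PB|/|\GL|$; these are again $q$-power weights in $\ell$ and are samplable the same way. In the ideal world with uniform $L'$ and $U$, this makes $G$ exactly uniform over $\GL(n,\F_q)$. If the cell-size bookkeeping is inconvenient, the fallback is to restrict $L$ to $\Lower_P$ by zeroing entries, which requires a pattern-aware version of \cref{alg:strict-upper}; I would attempt this only if needed.

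Pseudorandomness then follows by a two-step hybrid. First replace the trapdoored $U$ by a uniform one, using \cref{Trapdoored Invertible Triangular} and the fact that $L'$, $P$, and the product are efficiently computable post-processing. Then replace $L'$ the same way. After both swaps, $G$ is exactly uniform by the previous paragraph. Correctness and efficiency are compositional: $G\vect x = L'(P(U\vect x))$, $G^{-1}\vect y = U^{-1}(P^{\tp}(L'^{-1}\vect y))$, and the transposes use the corresponding transpose circuits or \cref{lem:Tellegen}. Each factor costs $\nl$ (respectively $\ql$) with overwhelming probability by \cref{lem:inversion-runtime,lem:inverse-circuit}, the permutation costs $\oracle(n)$, and sampling costs the sum of the pieces. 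This yields both parts of \cref{trapdoor GL} with the two choices of $b$.
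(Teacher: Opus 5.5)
Your proposal is correct and is essentially the paper's proof. The paper takes $G=U_1\sigma U_2$ with $U_1,U_2$ from \cref{alg:invertible-upper} and $\sigma\leftarrow\CM_{n,q}$, gets ideal-world uniformity from the same transitive-action argument (\cref{bruhat-lemma}), and then uses the same two-factor hybrid, composed circuits, and \cref{lem:Tellegen}. Your unitriangular left factor just switches to the opposite-Borel cells $\Lower_n^*P\Upper_n^*=w_0\Upper_n^*(w_0P)\Upper_n^*$ ($w_0$ the reversal permutation), whose correct weights are $q^{\ell(w_0P)}$; you can sample them by drawing $\sigma\leftarrow\CM_{n,q}$ and setting $P=w_0\sigma$.

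One correction to your unused opening paragraph: $I+\{N\in\strictlower : P^{-1}NP\text{ strictly upper}\}$ is the stabilizer of the factorization, not a transversal. Indeed $\Lower_P P\Upper_n^*=P\Upper_n^*$, so your zeroing fallback would break; uniqueness needs $P^{-1}NP$ strictly lower, and the weight there should be $q^{\ell(w_0P)}$ rather than $q^{\ell(P)}$.
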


Our construction relies on the Bruhat decomposition for split reductive groups.
In this decomposition, the double cosets of a Borel subgroup partition the group and are indexed by elements of the Weyl group.
For $\GL(n,\F_q)$, we take the Borel subgroup to be $\Upper_n^*$ and represent the Weyl group by the permutation matrices $\CS_n$.
This gives the following standard decomposition.

\begin{lemma}[Bruhat decomposition]
    The group $\GL(n,\F_q)$ is the disjoint union of the double cosets $\Upper_n^*\sigma\Upper_n^*$ indexed by $\sigma\in\CS_n$, namely,
    \[
        \GL(n,\F_q)=\bigsqcup_{\sigma\in\CS_n}\Upper_n^*\sigma\Upper_n^*.
    \]
\end{lemma}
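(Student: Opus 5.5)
We prove the two halves of the statement separately: first that every invertible matrix lies in some double coset $\Upper_n^*\sigma\Upper_n^*$, and then that distinct permutations give disjoint double cosets.

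\emph{Existence.} We use Gaussian elimination with row operations that only add multiples of a row to rows \emph{above} it, together with row scalings. Such operations are left multiplication by an element of $\Upper_n^*$. Column operations adding multiples of a column to columns to its \emph{right}, and column scalings, are right multiplication by an element of $\Upper_n^*$.

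Given $g\in\GL(n,\F_q)$, process the columns $j=1,\dots,n$ in order. In column $j$, choose the \emph{lowest} row $i_j$ whose entry is nonzero and which has not yet been used as a pivot row. Scale that row so the pivot equals $1$. Then clear the entries of column $j$ in the remaining unused rows above $i_j$ by adding multiples of row $i_j$; the unused rows below $i_j$ are already zero in column $j$ by the choice of $i_j$. Next, clear the entries to the right of the pivot in row $i_j$ by column operations that add multiples of column $j$ to later columns.

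We check by induction that the earlier pivots are preserved. At step $j$, the entries of column $j$ in the previously used pivot rows have already been zeroed by earlier column clearing. Hence adding a multiple of row $i_j$ upward does not disturb earlier pivot rows outside column $j$. After all $n$ steps, each row and each column contains exactly one nonzero entry, equal to $1$, so the result is a permutation matrix $\sigma$. This gives $g=b_1\sigma b_2$ with $b_1,b_2\in\Upper_n^*$.

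\emph{Disjointness.} This is the step requiring an invariant. For $g\in\GL(n,\F_q)$ and $1\le i,j\le n$, let $r_{ij}(g)$ be the rank of the lower-left submatrix of $g$ formed by rows $i,\dots,n$ and columns $1,\dots,j$.

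Left multiplication by $b\in\Upper_n^*$ replaces each row $k$ by a nonzero multiple of itself plus a combination of rows $k'>k$. So the span of rows $i,\dots,n$ is preserved, and the span of their restrictions to the first $j$ columns is preserved as well. Symmetrically, right multiplication by $b\in\Upper_n^*$ replaces each column $k$ by a nonzero multiple of itself plus a combination of columns $k'<k$, which preserves the span of columns $1,\dots,j$ restricted to any set of rows. Hence every $r_{ij}$ is constant on each double coset.

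For a permutation matrix $\sigma$, $r_{ij}(\sigma)$ counts the number of $k\le j$ with $\sigma(k)\ge i$. These counts determine $\sigma$: $\sigma(j)=i$ exactly when
\[
r_{ij}-r_{i+1,j}-r_{i,j-1}+r_{i+1,j-1}=1,
\]
with boundary terms taken to be $0$. Therefore, if $\sigma\in\Upper_n^*\tau\Upper_n^*$, then $\sigma$ and $\tau$ have the same rank invariants, so $\sigma=\tau$. This proves that the union is disjoint.
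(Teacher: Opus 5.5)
Your proof is correct. It takes a different route from the paper, which gives no proof of this lemma at all: it quotes it as the standard Bruhat decomposition of a split reductive group, with Borel subgroup $\Upper_n^*$ and Weyl group represented by $\CS_n$.

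You instead give a self-contained linear-algebra argument in two parts:
\begin{itemize}
\item \emph{Existence.} You use a restricted Gaussian elimination that only adds rows upward and columns rightward. Pivoting on the lowest available nonzero entry is exactly what lets $\Upper_n^*$ suffice on both sides.
\item \emph{Uniqueness.} You use the ranks $r_{ij}$ of the lower-left corner submatrices. These are constant on double cosets, and your inclusion--exclusion formula recovers a permutation from them.
\end{itemize}

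The citation buys generality. The same structural statement also underlies the Bravyi--Maslov decomposition of the Clifford group in \cref{sec:uniform-cliffords}. Your argument buys something different: it is elementary and constructive, since it is effectively an algorithm for computing the factorization. It also delivers exactly the well-definedness of the coset index that the proof of \cref{bruhat-lemma} relies on (``the unique permutation $s$ with $M\in BsB$'').

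Two small points of bookkeeping in your existence step.

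\begin{itemize}
\item \emph{What protects earlier work.} The upward row operations at step $j$ only modify unused rows, so they never touch earlier pivot rows. What they could disturb are earlier pivot \emph{columns}. They do not, because row $i_j$ vanishes in every earlier pivot column: each such column has already been reduced to a single entry in its own pivot row. The fact you cite, that column $j$ vanishes in the earlier pivot rows, is what protects earlier rows from the rightward column operations. Both facts hold, so the induction goes through. It is cleanest to state the invariant as: after step $j$, columns $1,\dots,j$ and rows $i_1,\dots,i_j$ are standard unit vectors.
\item \emph{Existence of a pivot.} You should note that a pivot exists at each step. The current matrix is still invertible and vanishes in column $j$ on the used rows, so some unused row has a nonzero entry there.
\end{itemize}
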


To sample uniformly from $\GL(n,\F_q)$ via this decomposition, we first choose $\sigma\in\CS_n$ with probability proportional to the size of its double coset $\Upper_n^*\sigma\Upper_n^*$.
This size is determined by the \textit{inversion number} of the permutation corresponding to $\sigma$, defined by
\[
    I_n(\sigma)\coloneqq\#\{(i,j)\in[n]\times[n]\mid i<j,\ \sigma(i)>\sigma(j)\}.
\]
Specifically,
\[
    \big|\Upper_n^*\sigma\Upper_n^*\big|=\big|\Upper_n^*\big|q^{I_n(\sigma)}.
\]
These weights define the Mallows distribution on $\CS_n$.

\begin{definition}[Mallows Distribution~\cite{mallows1957non}]
    The Mallows distribution $\CM_{n,q}$ on $\CS_n$ assigns to each $\sigma\in\CS_n$ the probability
    \[
        \Pr_{\tau\leftarrow\CM_{n,q}}[\tau=\sigma]
        =\frac{\big|\Upper_n^*\sigma\Upper_n^*\big|}{\big|\GL(n,\F_q)\big|}
        =\frac{(q-1)^n q^{I_n(\sigma)}}{\prod_{k=1}^{n}(q^k-1)}.
    \]
\end{definition}

For fixed $q$, repeated insertion sampling generates $\sigma\leftarrow\CM_{n,q}$ in expected time $\oracle(n)$~\cite{lu2014effective}.
To obtain a uniform sample from $\GL(n,\F_q)$, it remains to sample uniformly within the selected double coset.
The following lemma shows that independent uniform triangular factors suffice.

\begin{lemma}\label{bruhat-lemma}
    If $U_1,U_2\leftarrow\unif(\Upper_n^*)$ and $\sigma\leftarrow\CM_{n,q}$ are sampled independently, then
    \[
        U_1\sigma U_2\sim\unif(\GL(n,\F_q)).
    \]
\end{lemma}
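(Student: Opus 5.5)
The plan is to fix $\sigma\in\CS_n$, show that $U_1\sigma U_2$ is uniform on the double coset $\Upper_n^*\sigma\Upper_n^*$, and then average over $\sigma\leftarrow\CM_{n,q}$. Once uniformity within each coset is known, the Bruhat decomposition and the Mallows weights finish the argument. The cosets are disjoint, and $\sigma$ lands in a given coset with probability $|\Upper_n^*\sigma\Upper_n^*|/|\GL(n,\F_q)|$. Hence for any $g\in\Upper_n^*\sigma\Upper_n^*$,
\[
    \Pr[U_1\sigma U_2=g]=\frac{|\Upper_n^*\sigma\Upper_n^*|}{|\GL(n,\F_q)|}\cdot\frac{1}{|\Upper_n^*\sigma\Upper_n^*|}=\frac{1}{|\GL(n,\F_q)|}.
\]
Because every $g\in\GL(n,\F_q)$ lies in exactly one coset, this shows the output is uniform.

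For uniformity within a fixed coset, I would consider the multiplication map $\phi_\sigma:\Upper_n^*\times\Upper_n^*\to\Upper_n^*\sigma\Upper_n^*$, $(A,B)\mapsto A\sigma B$. It is surjective by definition. The key step is to show that every fiber of $\phi_\sigma$ has the same size; then the pushforward of the uniform distribution on $\Upper_n^*\times\Upper_n^*$ is uniform on the image.

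To see the fibers are equal, let $H_\sigma=\Upper_n^*\cap\sigma^{-1}\Upper_n^*\sigma$. Suppose $A\sigma B=A'\sigma B'$. Then $\sigma^{-1}A'^{-1}A\sigma=B'B^{-1}\in\Upper_n^*$. Setting $h=B'B^{-1}$, we get $h\in H_\sigma$, $A'=A\sigma h^{-1}\sigma^{-1}$, and $B'=hB$. Conversely, every $h\in H_\sigma$ produces an element $(A\sigma h^{-1}\sigma^{-1},hB)$ of the same fiber; the first coordinate is upper triangular because $\sigma h\sigma^{-1}\in\Upper_n^*$. The assignment $h\mapsto(A\sigma h^{-1}\sigma^{-1},hB)$ is injective in $h$ through its second coordinate. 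So each fiber is in bijection with $H_\sigma$ and has size $|H_\sigma|$, which does not depend on $g$.

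As a consistency check, this also gives $|\Upper_n^*\sigma\Upper_n^*|=|\Upper_n^*|^2/|H_\sigma|$. One can verify that $|H_\sigma|=(q-1)^nq^{\binom n2-I_n(\sigma)}$, up to the convention $\sigma$ versus $\sigma^{-1}$, which has the same inversion number. This matches the stated coset size $|\Upper_n^*|q^{I_n(\sigma)}$. However, the proof does not need this computation: the coset size formula is taken from the text and is used only through the Mallows weights. The only step requiring care is the fiber computation, and it is routine.
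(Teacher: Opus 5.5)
Your proof is correct and has the same structure as the paper's: show that $U_1\sigma U_2$ is uniform on each double coset $\Upper_n^*\sigma\Upper_n^*$, then weight by the Mallows law and use the disjointness of the Bruhat cells. The only difference is in the first step. You count the fibers of $(A,B)\mapsto A\sigma B$ through the stabilizer $\Upper_n^*\cap\sigma^{-1}\Upper_n^*\sigma$, whereas the paper notes that the law of $U_1\sigma U_2$ is invariant under left and right multiplication by $\Upper_n^*$, which acts transitively on the coset. Both are the same orbit--stabilizer fact, but your version also recovers the coset size $|\Upper_n^*|q^{I_n(\sigma)}$, which the paper simply asserts.
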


\begin{proof}
    Write $G=\GL(n,\F_q)$ and $B=\Upper_n^*$.
    Conditioned on any fixed $\sigma$, the pair $(AU_1,U_2C)$ has the same distribution as $(U_1,U_2)$ for every $A,C\in B$.
    Thus the distribution of $U_1\sigma U_2$ is invariant under left and right multiplication by $B$.
    Since these operations act transitively on $B\sigma B$, the product is uniform on that double coset.

    For any $M\in G$, let $s\in\CS_n$ be the unique permutation with $M\in BsB$.
    By the definition of $\CM_{n,q}$ and the conditional uniformity above,
    \[
        \Pr[U_1\sigma U_2=M]
        =\Pr[\sigma=s]\cdot\frac{1}{|BsB|}
        =\frac{|BsB|}{|G|}\cdot\frac{1}{|BsB|}
        =\frac{1}{|G|}.
    \]
    Hence $U_1\sigma U_2$ is uniform over $G$.
\end{proof}

For the trapdoored construction, we generate $U_1$ and $U_2$ independently using \cref{alg:invertible-upper} and sample $\sigma$ independently from $\CM_{n,q}$.
A standard hybrid argument, replacing the two triangular factors one at a time and using \cref{bruhat-lemma}, shows that $U_1\sigma U_2$ is computationally indistinguishable from a uniform element of $\GL(n,\F_q)$.

Concretely, we independently sample $(U_i,C_{\mult,U_i},C_{\inv,U_i})$ using \cref{alg:invertible-upper} for $i\in\{1,2\}$, and sample $\sigma$ from the Mallows distribution $\CM_{n,q}$ independently of both tuples.
Since $\sigma$ and $\sigma^{-1}$ simply permute coordinates, their actions admit circuits $C_{\mult,\sigma}$ and $C_{\inv,\sigma}$ of size $\oracle(n)$, constructible in $\oracle(n)$ time.
Setting $G=U_1\sigma U_2$, we define its trapdoor circuits by
\begin{align*}
    C_{\mult,G} &= C_{\mult,U_1}\circ C_{\mult,\sigma}\circ C_{\mult,U_2},\\
    C_{\inv,G} &= C_{\inv,U_2}\circ C_{\inv,\sigma}\circ C_{\inv,U_1}.
\end{align*}
These circuits compute multiplication by $G$ and $G^{-1}$, respectively.
The resulting distribution preserves the sampling and evaluation bounds of the triangular construction.
Finally, \cref{lem:Tellegen} supplies circuits for $\mult^\tp$ and $\inv^\tp$ with the same asymptotic bounds.
Together with the pseudorandomness argument above, this proves \cref{trapdoor GL}.

Our construction has two further consequences: exact determinant evaluation for sampled matrices and efficient polynomial actions after a pseudorandom change of basis.

\begin{corollary}[Efficient evaluation of determinants]\label{cor:determinant}
    Let $M$ be sampled by the construction in \cref{trapdoor GL}.
    Given its trapdoor, $\det M$ can be computed exactly
    in time $\oracle(n)$.
\end{corollary}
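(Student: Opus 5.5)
The plan is to use multiplicativity of the determinant along the factorization $M=U_1\sigma U_2$ produced by the construction in \cref{trapdoor GL}. The trapdoor stores this factorization explicitly, so
\[
    \det M=\det U_1\cdot\det\sigma\cdot\det U_2 .
\]
Each factor can then be computed separately from the stored data.

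First I handle the triangular factors. Each $U_i$ is produced by \cref{alg:invertible-upper} as $U_i=U+D$, where $U$ is strictly upper triangular and $D=\operatorname{diag}(a_1,\ldots,a_n)$. The diagonal of $U_i$ is exactly $(a_1,\ldots,a_n)$, since the recursive construction in \cref{alg:strict-upper} only places nonzero entries in the off-diagonal blocks $M$ and the strictly upper-triangular leaves. Hence $\det U_i=\prod_j a_j$. The trapdoor stores $D$ as $n$ field elements, so this product costs $n-1$ multiplications in $\F_q$, i.e.\ $\oracle(n)$ time for fixed $q$.

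Next I handle the permutation factor. We have $\det\sigma=\operatorname{sgn}(\sigma)$, read as an element of $\F_q$, which is $\pm1$ (and equals $1$ in characteristic $2$). The permutation is stored as an array of length $n$. I compute its sign by decomposing it into cycles in one $\oracle(n)$ pass with a visited array: a cycle of length $\ell$ contributes $(-1)^{\ell-1}$, so $\operatorname{sgn}(\sigma)=(-1)^{n-\#\text{cycles}}$.

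Multiplying the three values gives $\det M$ exactly, in total time $\oracle(n)$. There is no real obstacle. The only point that needs care is the observation that the recursive block structure never touches the diagonal, so that the diagonal of $U_i$ is exactly $D$. This is immediate from the block form $\begin{bmatrix}U_1&M\\ \vect{0}&U_2\end{bmatrix}$ together with the base case being strictly upper triangular.
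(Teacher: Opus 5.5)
Your proposal is correct and follows the paper's proof essentially verbatim. You factor $\det M=\det U_1\cdot\det\sigma\cdot\det U_2$, read the triangular determinants off the stored diagonal entries, and compute $\det\sigma=(-1)^{n-c(\sigma)}$ by an $\oracle(n)$ cycle traversal; your explicit check that the recursive block construction never touches the diagonal is a detail the paper leaves implicit.
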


\begin{proof}
    Write $M=U\sigma V$ as in the construction of \cref{trapdoor GL}.
    The sampler already knows the diagonal entries of $U,V$ and the permutation $\sigma$.
    Let $c(\sigma)$ be the number of cycles of $\sigma$, including fixed points.
    Traversing these cycles takes $\oracle(n)$ time and gives $\det\sigma=(-1)^{n-c(\sigma)}$.
    Hence
    \[
        \det M=(-1)^{n-c(\sigma)}
        \left(\prod_{i=1}^n U_{ii}\right)
        \left(\prod_{i=1}^n V_{ii}\right)
    \]
    can be computed exactly in $\oracle(n)$ time.
    Retaining this value with the trapdoor preserves the sampling bounds and the matrix marginal.
\end{proof}

Fast forward and inverse actions also allow efficient evaluation of matrix powers after a pseudorandom change of basis.
We apply this to block-diagonal matrices with small blocks, whose powers can be computed separately.

\begin{corollary}[Efficient powers of pseudorandom conjugates]\label{cor:polynomial-actions}
    Fix a polynomial-time constructible block-diagonal matrix $B=\operatorname{diag}(B_1,\ldots,B_m)\in\Mat_{n\times n}(\F_q)$ whose blocks have size at most $r=\operatorname{polylog}(n)$.
    Under the strong (dual) ring-LPN assumption, sample $G$ with its trapdoor from \cref{trapdoor GL} and set $A=GBG^{-1}$.
    Then $A$ is computationally indistinguishable from a uniform element of the conjugacy class
    \[
        \operatorname{Cl}(B)=\{PBP^{-1}:P\in\GL(n,\F_q)\}.
    \]
    Given the blocks of $B$ and the trapdoor for $G$, we can compute $A^k\vect{v}$ in $\ql$ time for every $\vect{v}\in\F_q^n$ and every integer $1\leq k\leq n^{\oracle(1)}$, including the computation of the block powers.
\end{corollary}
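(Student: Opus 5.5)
The proof has two parts: pseudorandomness by post-processing, and efficiency by writing $A^k=GB^kG^{-1}$.

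\emph{Pseudorandomness.} Let $\Phi_B$ be the map $P\mapsto PBP^{-1}$. It is computable in polynomial time from $P$ and the fixed, polynomial-time constructible $B$, using explicit matrix multiplication and inversion.

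The map $\Phi_B$ pushes $\unif(\GL(n,\F_q))$ forward to $\unif(\operatorname{Cl}(B))$. The fibres of $\Phi_B$ are the left cosets $P\,Z(B)$ of the centralizer $Z(B)=\{Q\in\GL(n,\F_q): QB=BQ\}$. These cosets all have the same size, so every element of the orbit receives probability $|Z(B)|/|\GL(n,\F_q)|$.

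By \cref{trapdoor GL}(ii), the explicit matrix $G$ is computationally indistinguishable from $\unif(\GL(n,\F_q))$. Any QPT distinguisher for $A=\Phi_B(G)$ versus $\unif(\operatorname{Cl}(B))$ can therefore be composed with $\Phi_B$ to get a distinguisher for $G$ versus uniform, with the same advantage. That advantage is negligible.

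\emph{Evaluation.} We use the identity $A^k=GB^kG^{-1}$. To compute $A^k\vect{v}$, we apply $C_{\inv,G}$ to $\vect{v}$, then multiply by $B^k$ block by block, then apply $C_{\mult,G}$.

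By \cref{trapdoor GL}(ii), the two outer steps cost $\ql$ with overwhelming probability. For the middle step, we compute each block power $B_j^k$ by repeated squaring. This takes $\oracle(\log k)=\oracle(\log n)$ multiplications of $r_j\times r_j$ matrices, at cost $\oracle(r_j^3\log n)$ per block.

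Summing over blocks with $\sum_j r_j=n$ and $r_j\le r$ gives
\[
\sum_j \oracle(r_j^3\log n)\le \oracle(r^2 n\log n)=\ql,
\]
since $r=\operatorname{polylog}(n)$. Applying the precomputed $r_j\times r_j$ blocks to the corresponding slices of the vector costs $\oracle(\sum_j r_j^2)\le\oracle(nr)=\ql$. The total cost is therefore $\ql$.

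\emph{Main obstacle.} The only non-routine point is the pushforward claim in the first part: that uniform $P$ yields uniform $PBP^{-1}$ on the conjugacy class. This rests on the orbit–stabilizer argument given above. A secondary check is that the polylogarithmic block size keeps the precomputation of the block powers within the quasi-linear budget; the bound $\oracle(r^2 n\log n)$ settles this.
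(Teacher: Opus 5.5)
Your proposal is correct and follows the paper's proof essentially step for step. Pseudorandomness comes from post-processing $G$ through $P\mapsto PBP^{-1}$; your centralizer-coset argument justifies the uniform-pushforward fact, which the paper only asserts. Evaluation uses $A^k\vect{v}=G\,\operatorname{diag}(B_1^k,\ldots,B_m^k)\,G^{-1}\vect{v}$, with repeated squaring costing $\oracle(nr^2\log n)=\ql$, exactly as in the paper.
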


\begin{proof}
    For uniform $P\in\GL(n,\F_q)$, the matrix $PBP^{-1}$ is uniform on $\operatorname{Cl}(B)$.
    Since $B$ is polynomial-time constructible and conjugation is polynomial-time computable, the pseudorandomness of $G$ gives the first claim.
    For the power action, use
    \[
        A^k\vect{v}
        =G\bigl(\operatorname{diag}(B_1^k,\ldots,B_m^k)(G^{-1}\vect{v})\bigr).
    \]
    Writing $r_i$ for the size of $B_i$, repeated squaring computes the block powers in
    $\oracle(\sum_i r_i^3\log(k+1))=\oracle(nr^2\log(k+1))=\ql$ time.
    Applying the precomputed blocks takes $\oracle(\sum_i r_i^2)=\oracle(nr)=\ql$ time, and the two changes of basis also take $\ql$ time using the trapdoor.
    Correctness holds for every sample, and the resource bound follows from \cref{trapdoor GL}.
\end{proof}

Under the standard (dual) ring-LPN assumption, the same evaluation guarantee holds in $\nl$ time for any fixed $0<\delta<1$.
More generally, the identity $f(A)=Gf(B)G^{-1}$ transfers any linear circuit for $f(B)\vect{v}$ to one for $f(A)\vect{v}$ with only the cost of the two changes of basis added.
The pseudorandomness guarantee is relative to $\operatorname{Cl}(B)$: conjugation preserves the characteristic polynomial and other similarity invariants of $B$.

\subsection{Depth-Efficient Implementation}\label{sec:depth-efficient}

We modify \cref{alg:efficient-inversion,alg:invertible-upper}
so that all trapdoor circuits have poly-logarithmic depth, while preserving
the preceding sampling and circuit-size bounds.
The key change is to sample each block in the factored form
\[
    U^*=\begin{bmatrix}U_1^*&U_1^*M\\0&U_2^*\end{bmatrix}
    =\begin{bmatrix}
        U_1^* & \vect{0} \\
        \vect{0}   & U_2^*
    \end{bmatrix}
      \begin{bmatrix}I_{d_1}& M\\\vect{0} &I_{d_2}\end{bmatrix}.
\]
Then, we have 
\[(U^*)^{-1} = \begin{bmatrix}
    I_{d_1} & -M \\
    \vect{0} & I_{d_2}
\end{bmatrix} \begin{bmatrix}
    (U_1^*)^{-1} & \vect{0} \\
    \vect{0} & (U_2^*)^{-1}.
\end{bmatrix}\]
For the inverse matrix--vector multiplication we first applies the two diagonal inverses in parallel and then
subtracts $M$ times the lower output from the upper output.
We measure classical depth using bounded-fan-in linear arithmetic circuits
over the fixed field $\F_q$, with unrestricted fan-out. Formally, we state as following theorem:

\begin{theorem}[Depth Optimization]\label{thm:depth-efficient-matrices}
    \cref{Trapdoored Invertible Triangular,trapdoor GL}
    admit modified samplers with the same sampling and circuit-size bounds,
    whose circuits for $\mult$, $\mult^\tp$, $\inv$, and $\inv^\tp$
    have depth $\oracle(\log^2 n)$.
\end{theorem}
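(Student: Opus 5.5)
Under the factored sampler, the triangular factor produced at dimension $d$ is $U^*=\operatorname{diag}(U_1^*,U_2^*)\cdot\begin{bmatrix}I&M\\0&I\end{bmatrix}$, where $U_1^*,U_2^*$ are sampled recursively and $M\leftarrow\D_\CM(d_1,d_2,\eps)$ as before. At the base dimension $d\le b$ we keep a uniform invertible upper-triangular block.

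\textbf{Pseudorandomness.} If $U_1^*,U_2^*$ are uniform in $\Upper^*$ and $M$ is uniform and independent, then $U_1^*M$ is uniform and independent of $U_1^*$, since $U_1^*$ is invertible. So the new sampler with all blocks truly uniform is exactly $\unif(\Upper_d^*)$. The hybrid argument of \cref{Pseudorandomness} then applies verbatim. Replacing level-$i$ off-diagonal blocks $M$ by trapdoored ones is efficient post-processing: the reduction samples the other factors itself and multiplies by $U_1^*$ in polynomial time. The same $\eta_i$ bounds follow. Pseudorandomness for $\GL(n,\F_q)$ then follows from \cref{bruhat-lemma} as before.

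\textbf{Size and depth of the circuits.} Multiplication is computed as $\vect{x}\mapsto(C_{\mult,U_1^*}(\vect{x}_1+C_{\mult,M}(\vect{x}_2)),\,C_{\mult,U_2^*}(\vect{x}_2))$. Inversion is computed as $\vect{y}\mapsto(\vect{z}_1-C_{\mult,M}(\vect{z}_2),\,\vect{z}_2)$, where $\vect{z}_j=C_{\inv,U_j^*}(\vect{y}_j)$ and the two recursive calls run in parallel.

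Each internal node contributes one copy of the $M$-circuit plus $\oracle(d)$ gates, so the size and sampling counts of \cref{lem:strict-upper-sampling,lem:inversion-runtime} are unchanged. The one exception is that the explicit matrix $U_1^*M$ is never needed by the trapdoor; it is only produced for the distinguisher, and this cost is accounted for separately.

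For depth, let $T(d)$ be the depth at dimension $d$. Then
\[
T(d)\le T(d/2)+\operatorname{depth}(C_{\mult,M})+\oracle(1).
\]
At the leaves, back-substitution has depth $\oracle(d)$, which is too large. Instead, the leaf inverse is precomputed as an explicit $d\times d$ matrix in $\oracle(d^3)$ time, and both the leaf inverse and the leaf multiplication are applied with depth $\oracle(\log b)$ via balanced addition trees.

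The main obstacle is bounding the depth of $C_{\mult,M}=H\circ E$.
\begin{itemize}
\item Multiplication by the sparse $E$ is a set of sums with fan-in equal to the column weights. Using balanced trees, its depth is $\oracle(\log n)$.
\item Multiplication by the circulant $C$ must be done in depth $\oracle(\log n)$ with size $\oracle(n\log^2 n)$. This can be done with an FFT-style polynomial multiplication. Over small $\F_q$, we would use the Cantor or Schönhage-type recursive algorithm; its recursion tree has depth $\oracle(\log n)$ with constant-depth stages, or $\oracle(\log n\log\log n)$ stages at worst. Alternatively, we would accept $\oracle(\log n)$ levels, each of depth $\oracle(\log n)$, for this factor. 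Either way $\operatorname{depth}(C_{\mult,M})=\oracle(\log n)$, and I would verify this case by case.
\end{itemize}
Then $T(n)=\oracle(\log n)\cdot\oracle(\log n)+\oracle(\log b)=\oracle(\log^2 n)$. This uses $\log b\le\log n$, which holds for both choices of $b$.

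\textbf{Remaining operations.} The transpose circuits come from \cref{lem:Tellegen}. Transposition reverses a linear circuit gate by gate, and replaces fan-out by addition trees. With bounded fan-in and unbounded fan-out this may increase depth by a $\log$ factor per fan-out node. To avoid this, I would instead write the transposes explicitly using the same recursion,
\[
(U^*)^{\tp}=\begin{bmatrix}I&0\\M^{\tp}&I\end{bmatrix}\operatorname{diag}\bigl((U_1^*)^{\tp},(U_2^*)^{\tp}\bigr),
\]
where $M^{\tp}=E^{\tp}H^{\tp}$ is again a sparse matrix times a circulant. This gives the same recurrence and hence depth $\oracle(\log^2 n)$.

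Finally, for $G=U_1\sigma U_2$ the trapdoor circuits are compositions of three such circuits, so their depth adds up to $\oracle(\log^2 n)$.
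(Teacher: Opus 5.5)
Your architecture matches the paper's: the factored form $\operatorname{diag}(U_1^*,U_2^*)\bigl[\begin{smallmatrix} I & M\\ 0 & I\end{smallmatrix}\bigr]$, the bijection $M\mapsto U_1^*M$ for pseudorandomness, parallel recursion for $\inv$, explicit transposed factorizations in place of \cref{lem:Tellegen}, and the recurrence $D(d)\le D(d/2)+\oracle(\log d)$. The gap is the step that actually carries the $\oracle(\log^2 n)$ bound, which you leave open: circulant multiplication must have depth $\oracle(\log d)$ at size $\oracle(d\log^2 d)$. This is exactly what the paper takes from Cantor--Kaltofen~\cite{cantor1991fast}. Their circuits multiply degree-$<d$ polynomials over an arbitrary algebra, including $\F_q$ in any characteristic, with $\oracle(d\log d\log\log d)$ operations in depth $\oracle(\log d)$. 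The Sch\"onhage--Strassen-style recursion satisfies $D(N)\le D(\oracle(\sqrt N))+\oracle(\log N)$, which sums geometrically. Fixing the circulant's coefficients makes the circuit linear, and reduction modulo $x^d-1$ adds constant depth. Your ``either way'' is false for the alternatives you list. $\oracle(\log n\log\log n)$ stages would give total depth $\oracle(\log^2 n\log\log n)$. $\oracle(\log n)$ levels of depth $\oracle(\log n)$ would make $\operatorname{depth}(C_{\mult,M})=\oracle(\log^2 n)$ and the total $\oracle(\log^3 n)$. You need to commit to the $\oracle(\log d)$-depth fact, not verify it ``case by case.''

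Your leaf treatment also differs from the paper's, and it breaks your claim that sampling costs are unchanged. Inverting a dense $b\times b$ triangular leaf costs $\Theta(b^3)$ (or $b^{\omega}$), so over $n/b$ leaves sampling takes $\oracle(nb^2)$. For $b=\lceil n^\delta\rceil$ this is $n^{1+2\delta}$, not $\nl$. Circuit size stays $\oracle(nb)$, and in part (ii) $nb^2$ is still quasi-linear, so only part (i) sampling is affected.

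There are two easy repairs. You can take $b=\lceil n^{\delta/2}\rceil$ in part (i), since the hybrid argument only needs the block dimensions $d_i$ to be polynomially related to $n$. Alternatively, do what the paper does: continue the factored recursion down to dimension $1$, with uniform off-diagonal blocks for $d\le b$ and a uniform nonzero scalar at $d=1$. Then each subtree of dimension $b$ costs $\oracle(b^2)$ in both sampling and circuit size, no dense inverse is ever formed, and levels below $b$ also contribute only $\oracle(\log d)$ depth. The same recurrence then covers all levels.
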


The recursion continues to dimension $1$, but uses uniform off-diagonal
blocks whenever $d\leq b$; the hardness assumption is used only above this
cutoff. This also avoids constructing dense inverse matrices at the leaves. 
We also assume that the initial dimension $n$ is large power of $2$.  

\begin{algorithm}
    \caption{Sampling Depth-Efficient Trapdoored Matrices over $\Upper_n^*$}\label{alg:4}
    \begin{algorithmic}[1]
        \Require Initial dimension $n$(assumed power of 2), with $c$ and $b$ as in \cref{alg:strict-upper}.
        \Procedure{OptimizedInvertibleUpper}{$d$}
        \If{$d=1$}
        \State Sample $a\leftarrow\unif(\F_q\setminus\{0\})$.
        \State \Return $([a],\,x\mapsto ax,\,y\mapsto a^{-1}y)$
        \EndIf
        \If{$d\leq b$}
        \State $M\leftarrow\unif(\Mat_{d/2\times d/2}(\F_q))$
        \State Construct a size $\oracle(d^2)$-circuit $C_{\mult, M}$
        \Else
        \State $(M,C_{\mult,M})\leftarrow\D_\CM(d_1,d_2,(2\log^c n)/d)$
        \EndIf
        \State For $i = 1, 2$, $(U_i^*,C_{\mult,U_i^*},C_{\inv,U_i^*})\leftarrow\Call{OptimizedInvertibleUpper}{d/2}$
        \State $U^* \leftarrow \begin{bmatrix}
        U_1^* & \vect{0} \\
        \vect{0}   & U_2^*
        \end{bmatrix}\begin{bmatrix}I_{d_1}&M\\0&I_{d_2}\end{bmatrix}$
        \State Define $C_{\mult,U^*}$ on input $(\vect{x}_1,\vect{x}_2)$ by:
        \Statex \hspace{4.5em} $\vect{z}\leftarrow\vect{x}_1+C_{\mult,M}(\vect{x}_2)$;
        \Statex \hspace{4.5em} output $\bigl(C_{\mult,U_1^*}(\vect{z}),C_{\mult,U_2^*}(\vect{x}_2)\bigr)$.
        \State Define $C_{\inv,U^*}$ on input $(\vect{y}_1,\vect{y}_2)$ by:
        \Statex \hspace{4.5em} compute $\vect{v}\leftarrow C_{\inv,U_1^*}(\vect{y}_1)$ and $\vect{w}\leftarrow C_{\inv,U_2^*}(\vect{y}_2)$ in parallel;
        \Statex \hspace{4.5em} output $\bigl(\vect{v}-C_{\mult,M}(\vect{w}),\vect{w}\bigr)$.
        \State \Return $(U^*,C_{\mult,U^*},C_{\inv,U^*})$
        \EndProcedure
    \end{algorithmic}
\end{algorithm}

\begin{proof}
    The correctness of each circuit is clear. 
    For the pseudorandomness, since $U_1^*$ is always invertible, the map $M \mapsto U_1^* M$ forms a bijection. 
    This implies that the distribution of $U_1^* M$ is computationally indistinguishable from uniform random matrices. As we mentioned, although the recursion continues to dimension 1, for nodes $d \leq b$ is filled by uniform random matrices. Hence, same with the proof of \cref{Trapdoored Invertible Triangular} the distribution of $U^*$ is computationally indistinguishable from uniform distribution. 

    Now we prove the depth bound. At a node of dimension $d>b$, the stored description is $M=HE$,
    where $H=[I_{d_1}\mid C]$ and $C$ is circulant.
    Balanced sums implement both $E$ and $E^\tp$ in depth $\oracle(\log d)$.
    Fast polynomial multiplication~\cite{cantor1991fast}, with the
    circulant coefficients fixed, gives a linear circuit for $C$ of
    size $\oracle(d\log^2 d)$ and depth $\oracle(\log d)$.
    Since $C^\tp$ is also circulant, the same bounds hold for $H$ and
    $H^\tp$.
    At dimensions $1<d\leq b$, balanced matrix--vector
    multiplication gives the same depth bound for $M$ and $M^\tp$.
    In each of the four factorizations above, the two diagonal-block
    actions run in parallel, and the shear adds only $\oracle(\log d)$
    depth. Thus their maximum depth $D(d)$ satisfies
    \[
        D(1)=\oracle(1),\qquad
        D(d)\leq D(d/2) + \oracle(\log d),
    \]
    which gives $D(n)=\oracle(\log^2 n)$.

    Finally, replace the triangular samplers in \cref{2.5} by
    \cref{alg:4}. 
    This proves the claim for \cref{trapdoor GL} as well.
\end{proof}

\section{Trapdoored Clifford Distributions}\label{sec:trapdoored-cliffords}

We now use the trapdoored matrix distributions from \cref{sec:trapdoored-matrices} to construct \emph{trapdoored Clifford Operators}: efficiently samplable Clifford operators together with trapdoors, such that computationally indistinguishable from uniform and the trapdoors support efficient evaluation of selected actions.
In this case, both classical actions and quantum actions are considered. 
The size of quantum circuits are determined by the number of basic Clifford operators including $X, H, P, CX$ and number of ancilla qubits. 

Let $\Pauli$ and $\Clifford$ denote the Pauli and Clifford operators ignoring global phases, respectively.
Throughout this section, the underlying field is $\F_2$.
Clifford operators act on $\Pauli\simeq\F_2^{2n}$ by conjugation.
We order the Pauli generators as $X_1,\ldots,X_n,Z_1,\ldots,Z_n$.
With this convention, a label $(\vect{a},\vect{b})\in\F_2^n\times\F_2^n$ specifies the Pauli operator represented by $X^{\vect{a}}Z^{\vect{b}}$.

\subsection{Uniform random Clifford Oprators and Decomposition of Clifford Operators}\label{sec:uniform-cliffords}

Our construction builds on the uniform Clifford sampler of Bravyi and Maslov~\cite{bravyi2021hadamard}.
We recall the matrix parameters underlying this sampler, with the aim of replacing its uniform matrix distributions by the trapdoored distributions from \cref{sec:trapdoored-matrices}.

The \emph{Hadamard-free Clifford operators} are those Clifford operators that map each computational basis state to another computational basis state, up to a phase.
We denote them by $\CF_n\subseteq\Clifford$.
Up to global phase, every $F\in\CF_n$ is specified by vectors $\vect{a},\vect{b}\in\F_2^n$, a symmetric matrix $\Gamma\in\Sym_n$, and an invertible matrix $\Delta\in\GL(n,\F_2)$.
Its action on computational basis can be written by 
\[
    F\ket{\vect{x}}
    =i^{Q_\Gamma(\vect{x})}(-1)^{\vect{b}^{\tp}\Delta\vect{x}}
    \ket{\Delta\vect{x}\oplus\vect{a}},
    \qquad \vect{x}\in\F_2^n.
\]
Here, $\Delta\vect{x} \oplus \vect{a}$ and $\vect{b}^{\tp}\Delta\vect{x}$ are computed over $\F_2$, while the quadratic phase $Q_\Gamma(\vect{x})$ is defined by
\[
    Q_\Gamma(\vect{x})
    \coloneqq\vect{x}^{\tp}\Gamma\vect{x}
    =\sum_{j=1}^n\Gamma_{jj}x_j
      +2\sum_{1\leq j<k\leq n}\Gamma_{jk}x_jx_k
    \pmod 4.
\]
We denote this operator by $F(\vect{a},\vect{b},\Gamma,\Delta) = F$.

To decompose Clifford operators, we use the \emph{Borel subgroup} $\overline{\mathcal{B}}_n\subseteq\CF_n$, obtained by restricting $\Delta$ to invertible lower-triangular matrix in $\Lower_n^*$.
Let $H_i$ denote the Hadamard gate on the $i$-th qubit, and identify $\CS_n$ with the group of qubit permutations.
The set of Clifford operators admits the Bruhat decomposition~\cite[Theorem~1]{bravyi2021hadamard}
\[
    \Clifford
    =\bigsqcup_{\vect{h}\in\{0,1\}^n}\ \bigsqcup_{S\in\CS_n}
    \overline{\mathcal{B}}_n
    \left(\prod_{i=1}^n H_i^{h_i}\right)S\,\overline{\mathcal{B}}_n.
\]
A convenient variant of the Bravyi--Maslov sampler chooses $(\vect{h},S)$ from their \emph{quantum Mallows distribution}~\cite[Section~3]{bravyi2021hadamard} and, independently, two uniform\footnote{In the original paper, they state the canonical form of each Clifford operators by restricting the parameters of left Borel operator. Since this decomposition is based on Bruhat decomposition, \cref{bruhat-lemma} can be applied here. Hence, we consider that $F_L, F_R \in \overline{\mathcal{B}}_n$ sampled uniformly. } Borel factors $F_L,F_R\in\overline{\mathcal{B}}_n$.
Then $C=F_L(\prod_{i=1}^n H_i^{h_i})SF_R$ is uniform on $\Clifford$: the quantum Mallows law selects each double coset with probability proportional to its size, and the two independent uniform Borel factors give a uniform element within that coset due to \cref{bruhat-lemma}. 

Sampling $(h,S)$ takes $\ql$ time~\cite[Lemma~6]{bravyi2021hadamard}.
Each uniform Borel factor is obtained by independently sampling $\vect{a},\vect{b}\in\F_2^n$, $\Gamma\in\Sym_n$, and $\Delta\in\Lower_n^*$ uniformly, so sampling and recording the full factorization takes $\oracle(n^2)$ time.
In the following subsections, we replace these uniform symmetric and lower-triangular matrices by the trapdoored distributions from \cref{sec:trapdoored-matrices}, while retaining the quantum Mallows law and uniform Pauli corrections.
We first construct trapdoored Hadamard-free Clifford operators and then obtain trapdoored distributions over all Clifford operators.

\subsection{Trapdoored Hadamard-free Clifford Operators}\label{sec:trapdoored-hadamard-free}

We construct trapdoored Clifford operators over $\CF_n$ and $\overline{\mathcal{B}}_n$, supporting both application on quantum states and tableau action on Pauli labels. 
For any $F=F(\vect{a},\vect{b},\Gamma,\Delta)\in\CF_n$, we define the quantum action $\app$ by
\[
    \app(F,\ket{\psi})\coloneqq F\ket{\psi}
\]
for every $n$-qubit state $\ket{\psi}$, extending the computational-basis action in \cref{sec:uniform-cliffords} by linearity.
Conjugation by $F$ maps $X^{\vect{p}_1}Z^{\vect{p}_2}$ to $X^{\vect{p}'_1}Z^{\vect{p}'_2}$ up to a global phase.
The induced classical action on Pauli labels is
\[
    \tab(F,[\vect{p}_1,\vect{p}_2]^\tp)
    \coloneqq
    [\vect{p}_1' , \vect{p}_2']^\tp
    =
    \begin{bmatrix}
        \Delta & 0 \\
        \Delta^{-\tp}\Gamma & \Delta^{-\tp}
    \end{bmatrix}
    \begin{bmatrix}\vect{p}_1\\\vect{p}_2\end{bmatrix}
    =\begin{bmatrix}
        \Delta\vect{p}_1\\
        \Delta^{-\tp}(\Gamma\vect{p}_1+\vect{p}_2)
    \end{bmatrix},
\]
where $\Delta^{-\tp}=(\Delta^{-1})^\tp$.

We use the trapdoor framework of \cref{trapdoor def}, with a quantum circuit $Q_{\app,F}$\footnote{We use $Q$ to denote quantum circuits, to avoid confusion between classical circuits.} and a classical circuit $C_{\tab,F}$.
We say that the trapdoor quantum circuit supports \emph{$T$-efficient quantum application} if, with overwhelming probability over sampling, $Q_{\app,F}$ uses $\oracle(T)$ of basic Clifford gates $X, H, P, CX$ and $r=\oracle(T)$ ancillary qubits, with gates allowed between any pair of qubits.
The sampled quantum circuit $Q_{\app, F}$ must satisfy
\[
    Q_{\app,F}\bigl(\ket{\psi}\otimes\ket{0}^{\otimes r}\bigr)
    =F\ket{\psi}\otimes\ket{0}^{\otimes r}
    \qquad\text{for every $n$-qubit state $\ket{\psi}$}.
\]
In particular, the ancillary qubits are initialized and returned to zero.
Sampling complexity includes producing a succinct description of $F$ and both trapdoor circuits; efficiency for classical action $\tab$ is defined same as in \cref{def:trapdoor-efficiency}.

\begin{theorem}[Trapdoored Hadamard-free Clifford Operators]\label{thm:trapdoored-hadamard-free}
    Let $\CH$ be either $\CF_n$ or $\overline{\mathcal{B}}_n$.
    The following statements hold:
    \begin{enumerate}[label=(\roman*)]
        \item Under the standard (dual) ring-LPN assumption, for any fixed $0<\delta<1$, there exists an $\nl$-samplable trapdoored distribution over $\CH$ supporting $\nl$-efficient quantum application $\app$ and classical evaluation of $\tab$.
        \item Under the strong (dual) ring-LPN assumption, there exists an $\ql$-samplable trapdoored distribution over $\CH$ supporting $\ql$-efficient quantum application $\app$ and classical evaluation of $\tab$.
    \end{enumerate}
    In each case, the distribution of $F$, represented by $F(\vect{a},\vect{b},\Gamma,\Delta)$, is computationally indistinguishable from $\unif(\CH)$, in the sense of \cref{def:trapdoored-matrix}.
\end{theorem}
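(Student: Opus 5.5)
The plan is to sample the four parameters of $F(\vect{a},\vect{b},\Gamma,\Delta)$ independently. The vectors $\vect{a},\vect{b}\leftarrow\unif(\F_2^n)$ are sampled directly. The matrix $\Gamma$ comes from the trapdoored $\Sym_n$ distribution of \cref{cor:triangular-matrices}. The matrix $\Delta$ comes from the trapdoored $\GL(n,\F_2)$ distribution of \cref{trapdoor GL} when $\CH=\CF_n$, and from the transposed version of \cref{Trapdoored Invertible Triangular}, i.e.\ $\Lower_n^*$, when $\CH=\overline{\mathcal{B}}_n$. The parameter map $(\vect{a},\vect{b},\Gamma,\Delta)\mapsto F$ is a bijection onto $\CH$ modulo global phase, so uniform parameters give $\unif(\CH)$. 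Pseudorandomness then follows from a two-step hybrid: first replace $\Gamma$ by a uniform symmetric matrix, then replace $\Delta$ by a uniform matrix. Each step is a reduction that samples the other factors itself and computes the signed tableau in polynomial time, which is an efficient post-processing of the challenge matrix.

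For the classical action $\tab$, I use the formula $(\Delta\vect{p}_1,\ \Delta^{-\tp}(\Gamma\vect{p}_1+\vect{p}_2))$. This needs one call each to $C_{\mult,\Delta}$, $C_{\mult,\Gamma}$ and $C_{\inv^\tp,\Delta}$, plus $\oracle(n)$ additions, so the $\nl$ and $\ql$ bounds are inherited directly. If signs are tracked, the phase update is a quadratic form in $\vect{p}_1$ and inner products with $\vect{a},\vect{b}$. It can be computed from $\Gamma\vect{p}_1$, the diagonal of $\Gamma$ (stored as $D$), and $\Delta\vect{p}_1$ in linear extra time.

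For the quantum application $\app$, I factor $F=X^{\vect{a}}\cdot Z^{\vect{b}'}\cdot U_\Delta\cdot U_\Gamma$. Here $U_\Gamma\ket{\vect{x}}=i^{Q_\Gamma(\vect{x})}\ket{\vect{x}}$, the unitary $U_\Delta\ket{\vect{x}}=\ket{\Delta\vect{x}}$ is a linear reversible map, and the $Z$-correction is $(-1)^{\vect{b}^\tp\vect{y}}$ on $\vect{y}=\Delta\vect{x}$, which is just $Z^{\vect{b}}$ applied before $X^{\vect{a}}$. For $U_\Delta$, I convert each linear classical circuit over $\F_2$ (for $\Delta$ and for $\Delta^{-1}$) into a CX-network with ancillas. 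Compute $\ket{\vect{x}}\ket{0}\mapsto\ket{\vect{x}}\ket{\Delta\vect{x}}$ by running $C_{\mult,\Delta}$ gate by gate, writing each intermediate wire to a fresh ancilla, then copying the output and uncomputing the intermediate wires (Bennett). Next, erase $\vect{x}$ using $\vect{x}=\Delta^{-1}(\Delta\vect{x})$, computed the same way with $C_{\inv,\Delta}$ and XORed into the first register. Finally swap the registers. Each step uses $\oracle(T)$ CX gates and ancillas, all returned to $\ket{0}$.

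For $U_\Gamma$, write $\Gamma=U+U^\tp+D$ as in \cref{cor:triangular-matrices}. Then $i^{Q_\Gamma(\vect{x})}=i^{\sum_j D_{jj}x_j}(-1)^{\vect{x}^\tp U\vect{x}}$, since $2\vect{x}^\tp U\vect{x}$ carries exactly the cross terms. The diagonal part is a layer of $P$ gates. For $(-1)^{\vect{x}^\tp U\vect{x}}$, compute $\ket{\vect{x}}\ket{U\vect{x}}$ reversibly as above. Then apply $CZ$ gates between $x_j$ and $(U\vect{x})_j$ for each $j$, where each $CZ$ is $H\,CX\,H$. Then uncompute $U\vect{x}$. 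This step is Clifford and uses $\oracle(T)$ gates.

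I expect the main obstacle to be this last step together with the sign bookkeeping. Two things need checking. First, the phase $(-1)^{\sum_j x_j(U\vect{x})_j}$ must equal $(-1)^{\sum_{j<k}U_{jk}x_jx_k}$; this holds because $U$ is strictly upper triangular, so no $x_j^2$ terms appear. Second, the linear circuits must be genuinely over $\F_2$, so that the ancilla-based CX compilation and uncomputation are exact and clean for every sample. Once these are checked, the gate counts and sampling time are sums of the bounds from \cref{cor:triangular-matrices,Trapdoored Invertible Triangular,trapdoor GL}, which gives (i) and (ii).
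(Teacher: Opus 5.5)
Your proposal is correct and follows the paper's proof essentially step for step. It uses the same sampling: independent $\vect{a},\vect{b}$, $\Gamma=U+U^\tp+D$ with the circuit for $U$ retained, and $\Delta$ from \cref{trapdoor GL} or the transposed \cref{Trapdoored Invertible Triangular}. It also uses the same one-at-a-time hybrid, the same formula for $\tab$, and the same $\app$ circuit: compute--copy--uncompute $E_U$ with CZ gates plus $P$ gates for the quadratic phase, then $E_\Delta$/SWAP/$E_{\Delta^{-1}}$ for the basis change, then $Z^{\vect{b}}$ and $X^{\vect{a}}$.

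The only slip is in your optional sign-tracking aside, which the theorem does not require since $\tab$ acts on unsigned labels. There, $\Gamma\vect{p}_1$ over $\F_2$ and $D$ do not determine $\vect{p}_1^\tp\Gamma\vect{p}_1 \bmod 4$: for instance $\vect{p}_1\cdot\Gamma\vect{p}_1$ recovers only the diagonal part mod $2$. You would instead use $\vect{p}_1\cdot U\vect{p}_1$ from the retained circuit for $U$.
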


\begin{proof}
    We give a common construction for both parts, writing $T(n) \in \{\nl, \ql\}$ for the corresponding sampling and circuit-size bound from \cref{sec:trapdoored-matrices}.
    Independently sample $\vect{a},\vect{b}\leftarrow\unif(\F_2^n)$, a uniform diagonal matrix $D$, and $U$ with its trapdoor from \cref{alg:strict-upper}.
    Set $\Gamma=U+U^\tp+D$, retaining the circuit for $U$.
    Independently sample $\Delta$ with its trapdoor: for $\CH=\overline{\mathcal{B}}_n$, transpose a sample from \cref{Trapdoored Invertible Triangular}; for $\CH=\CF_n$, use \cref{trapdoor GL}.
    These constructions supply circuits for $\Delta$, $\Delta^{-1}$, and $\Delta^{-\tp}$ with the stated bounds.

    Replacing $U$ and $\Delta$ by independent uniform matrices, one at a time, gives computationally indistinguishable hybrids by the corresponding matrix guarantees.
    In the final hybrid, $\Gamma$ is uniform over $\Sym_n$, since $(U,D)\mapsto U+U^\tp+D$ is a bijection.
    Thus all four parameters of $F$ are independent and uniform over their respective domains, yielding a uniform element of $\CH$ by the parameterization in \cref{sec:uniform-cliffords}.

    For $C_{\tab,F}$, compute $\Gamma\vect{p}_1$ as $U\vect{p}_1+U^\tp\vect{p}_1+D\vect{p}_1$, and then compute the two output blocks $\Delta\vect{p}_1$ and $\Delta^{-\tp}(\Gamma\vect{p}_1+\vect{p}_2)$.
    The retained circuits and \cref{lem:Tellegen} give both circuit construction and evaluation in $\oracle(T(n))$ time.

    To construct $Q_{\app,F}$, we first make the linear circuits reversible.
    A size-$s$ XOR circuit for a matrix $M$ yields the operation
    \[
        E_M:\ket{\vect{x}}\ket{\vect{y}}
        \longmapsto\ket{\vect{x}}\ket{\vect{y}\oplus M\vect{x}}
    \]
    using $\oracle(s+n)$ CNOT gates and ancillary qubits: compute the intermediate values into fresh qubits, XOR the outputs into the second register, and reverse the intermediate computation.
    All additional workspace is returned to zero and is omitted from the notation.
    The matrix circuits above are linear over $\F_2$, so this applies with $s=\oracle(T(n))$.

    For the quadratic phase, use
    \[
        Q_\Gamma(\vect{x})
        =\sum_{j=1}^n D_{jj}x_j+2(\vect{x}^\tp U\vect{x}\bmod 2)
        \pmod 4.
    \]
    Apply $E_U$ to compute $U\vect{x}$ into a zero register, apply a CZ gate between each pair of corresponding qubits in the two registers, and then reverse $E_U$.
    This produces the phase $(-1)^{\vect{x}^\tp U\vect{x}}$ and resets the second register to zero.
    Applying $P = \operatorname{diag}(1,i)$ to each qubit $j$ with $D_{jj}=1$ supplies the diagonal phase, giving $i^{Q_\Gamma(\vect{x})}$.

    Next, implement the linear change of basis labels by
    \[
        \ket{\vect{x}}\ket{0^n}
        \xrightarrow{E_\Delta}\ket{\vect{x}}\ket{\Delta\vect{x}}
        \xrightarrow{\mathrm{SWAP}}\ket{\Delta\vect{x}}\ket{\vect{x}}
        \xrightarrow{E_{\Delta^{-1}}}\ket{\Delta\vect{x}}\ket{0^n}.
    \]
    Finally, apply $Z^{\vect{b}}$ followed by $X^{\vect{a}}$.
    The resulting action is exactly $F\ket{\vect{x}} \otimes \ket{0^n}$, including its phase, and all ancillary qubits are returned to zero.
    Correctness on arbitrary input states follows by linearity.

    Each sampled linear circuit is used only a constant number of times, with $\oracle(n)$ additional gates for the phases, register swap, and Pauli correction.
    Hence $Q_{\app,F}$ has $\oracle(T(n))$ basic Clifford gates and ancillary qubits, and its gate list can be constructed in $\oracle(T(n))$ time.
    Sampling the succinct parameters and constructing both trapdoor circuits therefore preserve the $\nl$ and $\ql$ bounds in the two cases.
    Correctness holds for every sample; all resource bounds hold with overwhelming probability, as in the underlying matrix constructions.
\end{proof}

The construction of \cref{thm:trapdoored-hadamard-free} also supports the inverse actions $\app^{-1}$ and $\tab^{-1}$ with the same asymptotic bounds.
The inverse quantum action is $\app^{-1}(F,\ket{\psi})\coloneqq F^{-1}\ket{\psi}$, whose computational-basis action is
\[
    F^{-1}\ket{\vect{y}}
    =i^{-Q_\Gamma(\Delta^{-1}(\vect{y}\oplus\vect{a}))}
    (-1)^{\vect{b}^\tp(\vect{y}\oplus\vect{a})}
    \ket{\Delta^{-1}(\vect{y}\oplus\vect{a})}.
\]
The inverse tableau action is
\[
    \tab^{-1}(F,[\vect{p}_1,\vect{p}_2]^\tp)
    =\tab(F^{-1},[\vect{p}_1,\vect{p}_2]^\tp)
    =\begin{bmatrix}
        \Delta^{-1} & 0 \\
        \Gamma\Delta^{-1} & \Delta^\tp
    \end{bmatrix}
    \begin{bmatrix}\vect{p}_1\\\vect{p}_2\end{bmatrix}
    =\begin{bmatrix}
        \Delta^{-1}\vect{p}_1\\
        \Gamma\Delta^{-1}\vect{p}_1+\Delta^\tp\vect{p}_2
    \end{bmatrix}.
\]
Reversing $Q_{\app,F}$ implements $\app^{-1}$ with only constant overhead, since $P^{-1}=P^3$ and the other basic gates are self-inverse.
For $\tab^{-1}$, the retained circuits for $\Gamma$ and $\Delta^{-1}$, together with the circuit for $\Delta^\tp$ obtained by \cref{lem:Tellegen}, give the same $\oracle(T(n))$ construction and evaluation bounds.

Let $F_i=F(\vect{a}_i,\vect{b}_i,\Gamma_i,\Delta_i)\in\CF_n$ for $i=1,2$.
Inversion and composition satisfy the following parameter identities~\cite[Proposition~1]{bravyi2021hadamard}, up to global phase:
\begin{align*}
    F_1^{-1}
    &=F\bigl(\Delta_1^{-1}\vect{a}_1,\vect{b}_{-},
        \Delta_1^{-\tp}\Gamma_1\Delta_1^{-1},\Delta_1^{-1}\bigr),\\
    F_1\circ F_2
    &=F\bigl(\vect{a}_1\oplus\Delta_1\vect{a}_2,\vect{b}_{12},
        \Gamma_2+\Delta_2^\tp\Gamma_1\Delta_2,\Delta_1\Delta_2\bigr).
\end{align*}
Here $F_1\circ F_2$ applies $F_2$ first, and $\vect{b}_{-},\vect{b}_{12}\in\F_2^n$ are the resulting Pauli-$Z$ correction vectors, including the linear phases generated when the quadratic phases are reduced modulo $4$.
All matrix and vector operations in these identities are over $\F_2$.
When $F_1$ and $F_2$ have trapdoors as above with circuit-size bound $T(n)$, the Pauli correction vectors can be recovered in $\oracle(nT(n))$ time by comparing the computational-basis phases at $\vect{0},\vect{e}_1,\ldots,\vect{e}_n$ with the corresponding canonical quadratic phases, where $\vect{e}_j$ is the $j$-th standard basis vector.
The transformed matrix parameters retain $\oracle(T(n))$-size multiplication circuits, since their actions are sums or compositions of a constant number of the original matrix actions; the required inverse and transpose actions for the new $\Delta$ parameters are obtained similarly.
Composing the forward and inverse action circuits already constructed also gives $\oracle(T(n))$-size circuits for $\app$, $\app^{-1}$, $\tab$, and $\tab^{-1}$ of the resulting operators, without requiring explicit recovery of the new Pauli correction vectors.

\subsection{Trapdoored Clifford Operators}\label{sec:full-trapdoored-cliffords}

We now extend the construction to the full family of Clifford operators, obtaining our main result.
Starting from the Bravyi--Maslov sampler recalled in \cref{sec:uniform-cliffords}, we replace its two uniform Borel factors by independent samples from the trapdoored distribution of \cref{thm:trapdoored-hadamard-free}.
The resulting operator distribution remains computationally indistinguishable from uniform, while the retained trapdoors support both quantum application and classical tableau action.
The construction preserves the sampling and evaluation bounds of the Hadamard-free case: sampling the Hadamard and permutation layers takes quasi-linear time, and their actions incur only linear overhead.

\begin{theorem}[Trapdoored Clifford distributions]\label{thm:trapdoored-clifford}
    The following statements hold:
    \begin{enumerate}[label=(\roman*)]
        \item Under the standard (dual) ring-LPN assumption, for any fixed $0<\delta<1$, there exists an $\nl$-samplable trapdoored Clifford distribution over $\Clifford$ supporting $\nl$-efficient quantum application $\app$ and classical evaluation of $\tab$.
        \item Under the strong (dual) ring-LPN assumption, there exists an $\ql$-samplable trapdoored Clifford distribution over $\Clifford$ supporting $\ql$-efficient quantum application $\app$ and classical evaluation of $\tab$.
    \end{enumerate}
    In each case, the operator marginal is computationally indistinguishable from $\unif(\Clifford)$, in the sense of \cref{def:trapdoored-matrix}.
\end{theorem}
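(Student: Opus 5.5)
The construction follows the Bravyi--Maslov sampler recalled in \cref{sec:uniform-cliffords}. First sample $(\vect{h},S)$ from the quantum Mallows distribution in $\ql$ time~\cite[Lemma~6]{bravyi2021hadamard}. Then independently sample two trapdoored Borel operators $F_L,F_R$ with their trapdoors from \cref{thm:trapdoored-hadamard-free} (case $\CH=\overline{\mathcal{B}}_n$), and output
\[
    C=F_L\Bigl(\prod_{i=1}^n H_i^{h_i}\Bigr)S\,F_R .
\]
The trapdoor consists of $(\vect{h},S)$ together with the trapdoors of $F_L$ and $F_R$. Since both the Hadamard layer and the permutation cost $\oracle(n)$ to record, sampling stays within $T(n)\in\{\nl,\ql\}$.

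For pseudorandomness I will use a two-step hybrid argument. First replace $F_R$ by a uniform element of $\overline{\mathcal{B}}_n$, then replace $F_L$. Each step reduces to \cref{thm:trapdoored-hadamard-free}: the reduction samples the remaining factors and $(\vect{h},S)$ itself in polynomial time and computes the signed tableau of the product. This composition is polynomial time because tableaux multiply as $2n\times 2n$ symplectic matrices with phase bookkeeping, and the distinguisher sees only the tableau. In the final hybrid, both Borel factors are uniform and independent of $(\vect{h},S)$, so the product is uniform on $\Clifford$. This uses the Bruhat decomposition together with the double-coset uniformity argument of \cref{bruhat-lemma}, exactly as in \cref{sec:uniform-cliffords}.

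The main subtlety is one of representation. The reduction is handed only the parameters $F(\vect{a},\vect{b},\Gamma,\Delta)$ explicitly, and from these it must build the signed tableau of $F$. I will check that this is polynomial time: the unsigned part is the block matrix given for $\tab$, and the signs follow from the explicit computational-basis formula. A second point to verify is that the number of hybrids is constant, so the losses are negligible.

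For efficiency, $Q_{\app,C}$ is the concatenation $Q_{\app,F_R}$, then a qubit permutation realizing $S$ (implemented by relabelling wires, or by $\oracle(n)$ SWAPs built from CX gates), then $H$ on the qubits with $h_i=1$, then $Q_{\app,F_L}$. Ancillas are reused and returned to zero, so the total cost is $\oracle(T(n))$ gates and ancillas. For $C_{\tab,C}$, compose in the same order:
\begin{enumerate}[label=(\roman*)]
    \item apply $C_{\tab,F_R}$;
    \item permute the two label blocks by $S$;
    \item for each $i$ with $h_i=1$, swap coordinates $p_{1,i}\leftrightarrow p_{2,i}$, which is the Hadamard conjugation action on labels;
    \item apply $C_{\tab,F_L}$.
\end{enumerate}
Each stage is $\oracle(T(n))$ or $\oracle(n)$. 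If signs are also required, I will track them through each factor: for the Borel factors, the sign is computable from the $\vect{a},\vect{b},\Gamma$ data using the retained $U$ and $\Delta$ circuits, and for $H$ and $S$ it is a linear-time correction. All resource bounds then hold with overwhelming probability by a union bound over the two factors.
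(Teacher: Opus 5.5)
Your proposal is correct and follows the paper's proof essentially step for step. Both use the Bravyi--Maslov sampler with its two uniform Borel factors replaced by trapdoored ones from \cref{thm:trapdoored-hadamard-free}, a two-step hybrid down to the uniform sampler, and concatenation of the factor circuits and tableau actions in the order $F_R$, $S$, $H_{\vect{h}}$, $F_L$. The remaining differences are immaterial: you replace $F_R$ first, you reuse ancillas, and you offer optional sign tracking, which is not needed because $\tab$ is defined only up to phase.
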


\begin{proof}
    We give a common construction for both parts, writing $T(n)\in\{\nl,\ql\}$ for the corresponding sampling and circuit-size bound in \cref{thm:trapdoored-hadamard-free}.
    Independently sample $F_1,F_2\in\overline{\mathcal{B}}_n$ with their trapdoors from that theorem, taking $\CH=\overline{\mathcal{B}}_n$.
    Independently of these factors, sample $(\vect{h},S)$ from the quantum Mallows distribution in $\ql$ time~\cite[Lemma~6]{bravyi2021hadamard}, and set
    \[
        H_{\vect{h}}\coloneqq\prod_{j=1}^n H_j^{h_j},
        \qquad C\coloneqq F_1H_{\vect{h}}SF_2.
    \]

    Replacing $F_1$ and then $F_2$ by independent uniform Borel operators gives computationally indistinguishable hybrids by \cref{thm:trapdoored-hadamard-free}, since the product defining $C$ is polynomial-time computable from classical descriptions of its factors.
    In the final hybrid, $C$ is uniform on $\Clifford$ by the sampler in \cref{sec:uniform-cliffords}.
    Thus the operator marginal is pseudorandom against QPT distinguishers.

    For quantum application, concatenate the circuits in the same order:
    \[
        Q_{\app,C}
        =Q_{\app,F_1}\circ(H_{\vect{h}}S)\circ Q_{\app,F_2},
    \]
    where each factor acts on the data register and its own ancillary register, and is extended by the identity on the other ancillary registers.
    The Hadamard layer uses at most $n$ gates, and $S$ can be implemented using at most $n-1$ SWAP gates, each consisting of three CNOT gates.
    Hence $Q_{\app,C}$ uses $\oracle(T(n))$ basic Clifford gates and ancillary qubits, implements $C$ on every input state, and returns all ancillary qubits to zero.

    For the classical tableau action, let $D_{\vect{h}}=\operatorname{diag}(h_1,\ldots,h_n)$ and let $\Pi_S$ be the permutation matrix satisfying $\Pi_S\vect{e}_j=\vect{e}_{S(j)}$.
    With our ordering of the Pauli labels, the two middle layers have tableaux
    \[
        T_{H_{\vect{h}}}
        =\begin{bmatrix}
            I_n-D_{\vect{h}} & D_{\vect{h}}\\
            D_{\vect{h}} & I_n-D_{\vect{h}}
        \end{bmatrix},
        \qquad
        T_S=\begin{bmatrix}
            \Pi_S & 0\\
            0 & \Pi_S
        \end{bmatrix}.
    \]
    Therefore $T_C=T_{F_1}T_{H_{\vect{h}}}T_ST_{F_2}$.
    Construct $C_{\tab,C}$ by applying these four actions from right to left, using the sampled circuits for $F_1$ and $F_2$.
    The two middle actions only permute Pauli-label bits and take $\oracle(n)$ time, so both construction and evaluation of $C_{\tab,C}$ take $\oracle(T(n))$ time.

    Sampling the succinct factor descriptions and constructing both trapdoor circuits therefore preserve the $\nl$ and $\ql$ bounds in the two cases.
    Correctness holds for every sample; all resource bounds hold with overwhelming probability, as in \cref{thm:trapdoored-hadamard-free}.
\end{proof}

The same modification of \cref{sec:depth-efficient} also gives depth-efficient Clifford implementations. 
A size-$s$, depth-$d$ XOR circuit can be implemented reversibly using
$\oracle(s+n)$ CNOT gates and clean ancillary qubits, with depth
$\oracle((d+1)\log(s+n))$: replace fan-out by binary CNOT copying trees,
compute the intermediate values, copy the outputs, and uncompute.
The strictly upper-triangular circuits used for quadratic phases likewise
have depth $\oracle(\log^2 n)$, using the same shallow off-diagonal
circuits and balanced sums at the leaves.
Thus, using the modified matrix samplers, the constructions of
\cref{thm:trapdoored-hadamard-free,thm:trapdoored-clifford} support quantum application and its inverse
in depth $\oracle(\log^3 n)$, with the same gate-count and ancilla bounds.
Here we use the all-to-all connectivity model of
\cref{sec:trapdoored-hadamard-free}; any qubit permutation is a product
of two layers of disjoint SWAP gates.

\section{Applications}
\subsection{Clifford authentication}\label{subsec: Clifford auth}
We prove that the use of trapdoored Cliffords in authentication gives \emph{information-theoretical} soundness. We begin by recalling the definitions.
\begin{definition}[One-time quantum authentication~\cite{C:DupNieSal12}]
\label{def:application-qas}
A quantum authentication scheme (QAS) consists of a tuple of QPT algorithms
$({\sf KeyGen},{\sf Enc},{\sf Dec})$ where
${\sf Dec}$ outputs a message or $\perp$. Let $k\gets{\sf KeyGen}$.
\begin{itemize}[nosep]
    \item (Completeness)
    ${\sf Dec}_k\circ{\sf Enc}_k=\mathrm{id}_M$ for every key $k$.
    \item ($\eps$-soundness)
    For every attack channel $\Lambda$ on the ciphertext and side
    information $E$, there are completely positive maps
    $\Phi_{\rm acc},\Phi_{\rm rej}$ on $E$, whose sum is trace-preserving,
    such that
    \[
      \frac12\left\|
      \mathbb E_k[({\sf Dec}_k\otimes\mathrm{id}_E)\circ\Lambda
                    \circ({\sf Enc}_k\otimes\mathrm{id}_E)]
      -\mathcal I\right\|_\diamond\le\eps,
    \]
    where $\mathcal I(\rho_{ME})
      =
      (\mathrm{id}_M\otimes\Phi_{\rm acc})(\rho_{ME})
      +
      \ket\perp\bra\perp
      \otimes\Phi_{\rm rej}(\operatorname{Tr}_M\rho_{ME}).$
\end{itemize}
\end{definition}

\begin{definition}[Trapdoored Clifford authentication]
Let $n=\ell+e$ and $\eps=2^{-e}$, with $e\ge1$.
The trapdoored Clifford authentication scheme is as follows.
\begin{description}[nosep]
    \item[{\sf KeyGen}:]
    Generate a trapdoored Clifford $C$, and independently sample
    a uniform random Pauli $P$ and a uniform qubit permutation $\pi$.
    Write $k$ to denote them and $C_k=P\pi C$ for simplicity.
    \item[{\sf Enc}:]
    Apply $C_k$ to $\ket\psi\otimes\ket0^{\otimes e}$.
    \item[{\sf Dec}:]
    Apply $C_k^\dagger$ and measure the last $e$ qubits.
    Return the first $\ell$ qubits if the outcome is $0^e$,
    and $\perp$ otherwise.
\end{description}
\end{definition}

Under the strong (dual) ring-LPN assumption of
\cref{thm:trapdoored-clifford}, key generation, encoding, and decoding
run in time $n^{1+o(1)}$ with overwhelming probability, using the retained
circuit and its reversal.

\begin{theorem}
\label{thm:application-authentication}
Under the pseudorandomness assumption of \cref{thm:trapdoored-clifford},
trapdoored Clifford authentication is a QAS with perfect completeness and
$(\eps+\negl(n))$-soundness.
\end{theorem}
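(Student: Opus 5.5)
Completeness is immediate: for every key $k$, ${\sf Dec}_k\circ{\sf Enc}_k$ applies $C_k^\dagger C_k=I$ to $\ket\psi\otimes\ket0^{\otimes e}$. The last $e$ qubits are therefore measured as $0^e$ with certainty, and the first $\ell$ qubits are returned unchanged. For soundness the plan is the standard Pauli-twirl analysis of Aharonov et al., adapted so that the only randomness we use is the uniform Pauli $P$ and the uniform permutation $\pi$, applied on top of an \emph{arbitrary fixed} Clifford $C$. This keeps the soundness information-theoretic, and the trapdoored Clifford enters only through its Clifford structure. Concretely, fix any Clifford $C$ from the support of the trapdoored distribution. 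We will show that averaging over $(P,\pi)$ alone already gives an $\eps$-sound scheme; averaging over $C$ then preserves the bound by convexity of the diamond norm. The $\negl(n)$ slack is only needed if one insists on comparing to the uniform-Clifford scheme, and it follows from \cref{thm:trapdoored-clifford}.

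First, Pauli twirling. Write the attack as $\Lambda(\rho)=\sum_{Q,Q'}\alpha_{Q,Q'}\,Q\rho Q'^\dagger$ with Paulis $Q,Q'$ on the ciphertext, tensored with operators on $E$. Averaging over the uniform Pauli $P$ kills all cross terms $Q\ne Q'$, since $\mathbb E_P[PQP\rho PQ'P]$ vanishes unless $Q=Q'$ up to phase. The effective channel inside ${\sf Dec}\circ\Lambda\circ{\sf Enc}$ is therefore a Pauli channel $\rho\mapsto\sum_Q p_Q\,(\pi C)^\dagger Q(\pi C)\rho(\pi C)^\dagger Q(\pi C)$, with side-information maps on $E$ attached to each $Q$. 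Conjugation by the Clifford $\pi C$ sends each Pauli $Q$ to another Pauli $Q'=(\pi C)^\dagger Q(\pi C)$ up to phase, and $Q\mapsto Q'$ is a bijection; conjugation by $\pi$ is simply a permutation of the tensor factors. The identity term $Q=I$ contributes to $\Phi_{\rm acc}$. The terms with $Q'$ acting as identity or $Z$ on the last $e$ qubits and nontrivially elsewhere are the dangerous ones, since they pass the check while corrupting the message. All remaining terms are rejected, because an $X$ or $Y$ component on the flag register flips some flag qubit and yields outcome $\neq 0^e$.

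Second, and this is the main obstacle, we must bound the acceptance probability of nonidentity Paulis using only the randomness of $\pi$. The difficulty is that $C$ is fixed, so the map $Q\mapsto C^\dagger\pi^\dagger Q\pi C$ is not a uniform mixing over nonidentity Paulis as it is in the Clifford scheme. I would first try to reorganize the argument so that the twirl over $\pi$ acts next to the encoded state. If that fails, I would fall back on the cleaner route: compare the trapdoored scheme to the ideal uniform-Clifford scheme, where every nonidentity Pauli is mapped to a uniform nonidentity Pauli and is accepted with probability at most $(4^{\ell}\cdot 2^e-1)/(4^n-1)\le 2^{-e}=\eps$. For an arbitrary, possibly unbounded $\Lambda$, the pseudorandomness transfer is carried out per Pauli: the twirled channel depends only on the weights $p_Q$, which are fixed by $\Lambda$ independently of the key. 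Whether a given $Q$ is accepted is a predicate of $C$ that is efficiently testable from the tableau (compute $C^\dagger\pi^\dagger Q\pi C$ and check its support), so a QPT distinguisher that is given $Q$ and samples $\pi$ detects any deviation of the acceptance probability from its uniform value. Hence for every fixed $Q$ the acceptance probabilities differ by $\negl(n)$, and the uniform bound transfers, giving total error $\sum_Q p_Q(\eps+\negl)$.

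Finally, we assemble $\Phi_{\rm acc}$ from the $Q=I$ component and $\Phi_{\rm rej}$ from all components that trip the flag. The diamond distance to $\mathcal I$ is then at most the total weight of accepted nonidentity terms, which is $\eps+\negl(n)$. A subtle point to double-check is that the per-$Q$ negligible bound is uniform in $Q$. The distinguisher takes $Q$ as nonuniform advice, so I would argue the bound for the worst $Q$ per input length via a single nonuniform reduction.
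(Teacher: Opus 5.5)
Your opening plan does not work. For a fixed Clifford $C$, averaging over $(P,\pi)$ alone does not give $\eps$-soundness, so the $\negl(n)$ term is not an artifact of comparing with the uniform scheme. Take $C=I$ and the attack $Q=Z^{\otimes n}$. It is invariant under every qubit permutation and acts as $Z^{\otimes e}$ on the flag qubits, so it is accepted with probability $1$. It also acts as $Z^{\otimes\ell}$ on the message, mapping $\ket{+}^{\otimes\ell}$ to an orthogonal state. Soundness must therefore come from the (pseudo)randomness of $C$ itself. Your ``fallback'' is the only viable route, and it is the paper's route:
\begin{itemize}
\item a Pauli twirl via $P$;
\item the exact acceptance bound $p_0=(4^\ell2^e-1)/(4^n-1)<2^{-e}$ for every $Q\neq I$ under uniform $C$;
\item transfer to the trapdoored $C$ through an efficient test of the acceptance event from the tableau.
\end{itemize}

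The remaining gap is making this transfer uniform over all $4^n-1$ Paulis. Hard-wiring the worst $Q$ as advice requires nonuniform pseudorandomness. Under that assumption your argument would go through, even without $\pi$. But it is stronger than the QPT indistinguishability supplied by \cref{thm:trapdoored-clifford}; the paper invokes nonuniform variants only for the worst-case to average-case reductions.

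Notice that $\pi$ plays no role in your argument, and it is exactly what removes the need for advice. Since $\pi^\dagger Q\pi$ is a uniformly random Pauli with the same numbers of $X$, $Y$, $Z$ factors as $Q$, the acceptance probability of $Q$ depends only on those three counts. This holds for any distribution of $C$ independent of $\pi$. So there are only $O(n^3)$ classes, and each is accepted with probability exactly $p_0$ under uniform $C$.

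A uniform distinguisher can then proceed as follows:
\begin{enumerate}
\item Sample trapdoored Cliffords itself.
\item Estimate the acceptance probability of one representative per class to accuracy $1/\poly(n)$.
\item Select a class that exceeds $p_0$ noticeably, if one exists.
\item Run that class's acceptance test on the challenge tableau.
\end{enumerate}
This gives acceptance probability at most $p_0+\negl(n)<\eps+\negl(n)$ simultaneously for all $Q\neq I$, under the stated assumption and without advice.
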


\begin{proof}[Proof sketch]
We use the fact that averaging over $P$ cancels the cross terms in the Pauli expansion
of any attack \cite[Lemma~5.1]{aharonov2017interactive}.
From this, it suffices to consider the attack of the form $\sum_Q\mathcal Q\otimes\Phi_Q$, where
$\mathcal Q(\sigma)=Q\sigma Q^\dagger$, the maps $\Phi_Q$ are completely
positive, $\sum_Q\Phi_Q$ is trace-preserving, and they are independent of $V:=\pi C$.
Thus it suffices to bound the acceptance probability of every $Q\ne I$.

For uniform random Clifford $C$, the Pauli-partitioning lemma and the
counting argument in the proof of \cite[Theorem 1.4]{aharonov2017interactive} give
\begin{align}\label{eqn: probauth}
  \Pr[\text{$V^\dagger QV$ has only $I,Z$ on the last $e$ qubits}]
  =\frac{4^\ell2^e-1}{4^n-1}<2^{-e}
\end{align}
for every $Q\ne I$.
Averaging over the uniform qubit permutation $\pi$ makes this probability
depend only on the numbers of $X$, $Y$, and $Z$ in $Q$.
There are only polynomially many such choices, and the event above can
be tested efficiently using $C$ and $C^\dagger$.

Now we consider the case of trapdoored Cliffords.
Suppose that some $Q$ had acceptance probability (the left-hand side of \cref{eqn: probauth}) at least
$p_0+1/\poly(n)$ for $p_0=(4^\ell2^e-1)/(4^n-1)$.
Note that the acceptance probability is characterized by the numbers of its
$X$, $Y$, and $Z$ factors because of $\pi$. 
There are polynomially many cases, so we can test them to determine if there exists such $Q$. This gives an efficient distinguisher for two cases, contradicting the pseudorandomness.
Thus every $Q\ne I$ has acceptance probability at most
$p_0+\negl(n)<\eps+\negl(n)$ uniformly.
Completeness is immediate.
\end{proof}

\subsection{Worst-case to Average-case Reductions}
We revisit the worst-case to average-case reductions for linear algebra problems and consider some Clifford problems.\footnote{The results of this section requires nonuniform variants of the strong LPN assumption, which was implicit in \cite{vaikuntanathan2026improving}.}

\paragraph{Iterated matrix multiplication.}
We extend the worst-to-average-case reductions for linear algebra \cite{vaikuntanathan2026improving} to products of many invertible matrices.

\begin{theorem}
\label{thm:sequential-matmul}
Let $\F_q$ be a fixed finite field, let $d=d(n)\ge2$ be
polynomially bounded, and let $\eps>0$.
Assume the strong (dual) ring-LPN assumption.
Let $\CA:\GL(n,\F_q)^d\to\F_q^{n\times n}$ be a
randomized polynomial-time algorithm running in time $T(n)$
such that, for every sufficiently large $n$,
\[
  \Pr_{B_1,\ldots,B_d\gets\unif(\GL(n,\F_q))}
  \bigl[\CA(B_1,\ldots,B_d)=B_d\cdots B_1\bigr]
  \ge\eps.
\]
Then we can construct an algorithm that, for any
$A_1,\ldots,A_d\in\GL(n,\F_q)$, computes $A_d\cdots A_1$
with high probability, in time $\widetilde O(T(n))$.
\end{theorem}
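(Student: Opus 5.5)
We randomize the instance by a telescoping product of trapdoored matrices. Given worst-case $A_1,\ldots,A_d\in\GL(n,\F_q)$, sample independent trapdoored matrices $G_0,G_1,\ldots,G_d$ from \cref{trapdoor GL}(ii), each with its trapdoor. Then form
\[
  B_i \coloneqq G_i A_i G_{i-1}^{-1}, \qquad i=1,\ldots,d.
\]
These satisfy
\[
  B_d\cdots B_1 = G_d\,(A_d\cdots A_1)\,G_0^{-1}.
\]
Computing each $B_i$ explicitly takes $n$ applications of $G_{i-1}^{-1}$ to the columns of $I_n$, then one dense multiplication by $A_i$, then $n$ applications of $G_i$. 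To stay within $\widetilde O(T(n))$, I would instead compute $(A_iG_{i-1}^{-1})$ column by column via the trapdoor and then apply $G_i$. The cost is $\widetilde O(n^2)$ per factor plus one product with $A_i$. Since $\CA$ must at least read its $dn^2$-size input, we may assume $T(n)\ge dn^2$. The remaining dense multiplication $A_iG_{i-1}^{-1}$ can be handled as $A_i$ times columns computed by trapdoor, i.e., $(G_{i-1}^{-\tp}A_i^\tp)^\tp$, which again takes $n$ trapdoor calls per row. So all $B_i$ are computable in $\widetilde O(dn^2)$.

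Having run $\CA$ on $(B_1,\ldots,B_d)$ to get $Y$, output $G_d^{-1} Y G_0$. This recovery also costs $\widetilde O(n^2)$ via trapdoor actions on columns and rows using $\inv$ and $\mult^\tp$.

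The key step is the distributional claim. If the $G_i$ were uniform on $\GL(n,\F_q)$, then $(B_1,\ldots,B_d)$ would be exactly uniform and independent of the $A_i$. I would show this by induction: $G_0^{-1}$ is uniform, so $B_1=G_1A_1G_0^{-1}$ is uniform and independent of $G_1$; conditioned on $G_1$, $B_2=G_2A_2G_1^{-1}$ is uniform because $G_2$ is fresh; and so on. This is the standard one-time-pad chain, noting that $G_d$ only appears in $B_d$.

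Then I would hybridize, replacing $G_0,\ldots,G_d$ by uniform matrices one at a time. Each hybrid step gives a distinguisher that takes $G_j$ (without trapdoor), simulates the rest with trapdoors, runs $\CA$, and checks correctness. The check is possible because the distinguisher knows all other $G_i$ and the $A_i$, and it can compute $A_d\cdots A_1$ in polynomial time; this is why the nonuniform variant is used, with the $A_i$ hardwired as advice. Since $d$ is polynomial and the strong assumption gives negligible advantage against polynomial-time distinguishers, the success probability on the real reduction is at least $\eps-\negl(n)$ for every worst-case input.

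Success is verifiable cheaply. I would check $A_d\cdots A_1$ candidates by Freivalds' test: apply $d$ matrix-vector products to random vectors in $O(dn^2)$ time. Repeating $O(\log n/\eps)$ independent trials then amplifies to high probability, giving total time $\widetilde O(T(n))$ for constant $\eps$.

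The main obstacle I expect is the hybrid step's efficiency and advice. The distinguisher must know $A_d\cdots A_1$ to test success. This is handled nonuniformly, as the footnote indicates, or avoided by having the distinguisher instead test $G_d^{-1}YG_0$ against the advice product. In either case I need to ensure that the distinguisher runs in polynomial time so that \cref{strong assumption} applies.
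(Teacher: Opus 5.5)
Your proposal is correct and follows the paper's argument. You conjugate by a telescoping chain of trapdoored invertible matrices, invoke pseudorandomness to get success probability $\eps-\negl(n)$, verify with Freivalds, and repeat $O(\eps^{-1}\log n)$ times. The only cosmetic difference is your extra randomizer $G_0$: the paper takes $R_0=I$, which already suffices because $A_1$ is invertible, so $R_1A_1$ is uniform, and this also saves the final right-multiplication by $G_0$.
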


\begin{proof}
Let $A_1,\ldots,A_d$ be the input matrices.
Independently sample trapdoored matrices $R_1,\ldots,R_d$
from \cref{trapdoor GL}, and let $R_0=I$.
Define $B_i=R_iA_iR_{i-1}^{-1}$ for $i\in [d]$, which can be computed in time $\widetilde O(dn^2)$ using trapdoors.
We observe that
\[
  B_d\cdots B_1=R_dA_d\cdots A_1.
\]
Thus, whenever $\CA$ is correct,
$Y = R_d^{-1}\CA(B_1,\ldots,B_d)$ equals the desired product.
 
By pseudorandomness, the candidate $Y$ equals $A_d\cdots A_1$ with probability at least
$\eps-\negl(n)$.
We verify correctness with high probability in
$\widetilde O(dn^2)$ time using Freivalds' algorithm.

For sufficiently large $n$, each trial succeeds with probability
at least $\eps/2$. Repeating the procedure
$O(\frac{\log n}\eps)$ times gives the correct product
with high probability. The total running time is
$\widetilde O\left(\frac{T(n)+dn^2}{\eps}\right)$ as desired.
\end{proof}

\paragraph{Determinant computation.}
We improve the determinant reductions of~\cite{vaikuntanathan2026improving}
using the exact determinants provided by \cref{cor:determinant}.

\begin{theorem}
\label{thm:determinant-reduction}
Let $\F_q$ be a fixed finite field, let $\eps>0$, and let
$s_{q,n}=1-\prod_{j=1}^n(1-q^{-j})$.
Assume the strong (dual) ring-LPN assumption.
Let $\CA:\F_q^{n\times n}\to\F_q$ be a randomized
polynomial-time algorithm running in time $T(n)$ such that,
for every sufficiently large $n$,
\[
  \Pr_{B\gets\unif(\F_q^{n\times n})}
  [\CA(B)=\det B]
  \ge
  \begin{cases}
    s_{2,n}+\eps, & q=2,\\
    (1+s_{q,n})/2+\eps, & q>2.
  \end{cases}
\]
Then we can construct an algorithm that, for any
$A\in\F_q^{n\times n}$, computes $\det A$ with high
probability, in time $\widetilde O(T(n))$.
For $q>2$, it also suffices that $\CA$ succeeds with
probability at least $1/2+\eps$ on uniform $\GL(n,\F_q)$.
\end{theorem}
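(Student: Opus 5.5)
The plan is to randomize the input by multiplying on both sides by trapdoored invertible matrices with known determinants, and then vote. Given $A$, independently sample $R,S$ with trapdoors from \cref{trapdoor GL} together with $\det R,\det S$ from \cref{cor:determinant}. We form $B=RAS$ by applying the trapdoor circuits for $R$ and $S^\tp$ column by column, in $\widetilde O(n^2)$ time. We then output the candidate $y=\CA(B)\cdot(\det R\det S)^{-1}$. Correctness of a single trial is immediate from multiplicativity: whenever $\CA(B)=\det B$ we get $y=\det A$.

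For $A$ invertible, $B$ is computationally indistinguishable from $\unif(\GL(n,\F_q))$. This follows from \cref{bruhat-lemma}-style invariance, since $U\mapsto UAS$ is a bijection of $\GL$, together with the hybrid over $R,S$. Conditioning the hypothesis on the event $\GL$, which has probability $1-s_{q,n}$, shows that $\CA$ is correct on uniform $\GL$ with probability at least
\[
\frac{(1+s_{q,n})/2+\eps-s_{q,n}}{1-s_{q,n}}\ \ge\ \frac12+\eps
\]
for $q>2$, and at least $\eps$ for $q=2$. The success event is efficiently testable given the trapdoor, because we know $\det B$, so pseudorandomness transfers these bounds to $B$ up to $\negl(n)$.

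For $q>2$ we run $O(\log n/\eps^2)$ independent trials. If some nonzero value occurs in more than half of them, we output it; otherwise we output $0$. For invertible $A$, the true value $\det A$ appears with frequency about $\tfrac12+\eps$, so by a Chernoff bound it wins. For singular $A$ we need that no nonzero value wins. Here we use a rescaling trick: replace $R$ by $R\,\mathrm{diag}(c,1,\ldots,1)$ with $c\leftarrow\unif(\F_q^*)$ independent. For uniform $R$, the pair $(RAS,\det R)$ then has $\det R$ uniform on $\F_q^*$ independently of $B$. The reason is that the stabilizer-type invariance of the singular orbit lets us absorb $c$: for singular $A$ there is an invertible $D'$ with $\det D'=c$ and $D'A$ in the same orbit under the right $\GL$-action, so we can absorb $D'$ into $S$. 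Consequently the normalized answer $y$ is uniform over $\F_q^*$ whenever $\CA(B)\neq0$, so any nonzero value appears with frequency at most $1/(q-1)\le\tfrac12$. Pseudorandomness transfers this, since the frequency of any fixed value is efficiently estimable given trapdoors. The variant that assumes success $\tfrac12+\eps$ on uniform $\GL$ is exactly the bound used above. Running time is $\widetilde O((T(n)+n^2)/\eps^2)=\widetilde O(T(n))$.

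The main obstacle is $q=2$, where $\F_2^*=\{1\}$ and the voting trick gives nothing. The plan there is to use the stronger hypothesis $s_{2,n}+\eps$ directly. Since singular matrices have mass $s_{2,n}$, $\CA$ outputs $1$ on a pseudorandom element of $\GL$ with probability at least $\eps$. I would then try to reduce singular $A$ to a distribution on which $\CA$ essentially never outputs $1$, so that a threshold test ``output $1$ iff some trial returns $1$'' works. The first attempt is to embed $A$ as $\operatorname{diag}(A,X)$, or as a random rank-preserving conjugate padding, so that for singular $A$ the randomized matrix is close to uniform over all singular matrices. The counting that needs to be checked is whether the hypothesis forces $\CA$ to be mostly correct (output $0$) there; a threshold at frequency $\eps/2$ would then separate the two cases. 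If this embedding does not give near-uniformity over the singular set, this is the step I expect to need a genuinely different randomization.
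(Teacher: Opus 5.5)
\textbf{Verdict: genuine gap.} Your $q>2$ argument works, but the $q=2$ case is not proved and the route you sketch for it cannot work.

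\textbf{The case $q>2$.} This is essentially the paper's argument. The paper masks on one side only, setting $Z=\CA(RA)/\det R$. For singular $A$, the stabilizer $\{K\in\GL(n,\F_q):KA=A\}$ contains matrices of every determinant, so $R\mapsto RK$ fixes $RA$ and makes $\det R$ uniform on $\F_q^\times$ given $RA$. Your second mask is harmless. The $\operatorname{diag}(c,1,\ldots,1)$ rescaling is unnecessary: uniform $R$ already has this invariance, and your justification via absorbing $D'$ into $S$ is muddled, whereas the stabilizer argument is clean. The pseudorandomness distinguisher sees only $R$, not its trapdoor, but $\det R$ is computable by Gaussian elimination, so the transfer still goes through.

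One concrete fix is needed: the vote threshold must be $(1/2+\eps/2)t$, not $t/2$. For $q=3$, a nonzero value can have frequency exactly $1/(q-1)=1/2$ on a singular input. This happens, for example, if $\CA$ never outputs $0$ on the rank class of $A$, which the promise allows when that class has small mass. Then some nonzero value exceeds $t/2$ with probability close to $1$.

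\textbf{The case $q=2$.} The promise only controls $\CA$ on average over all of $\F_2^{n\times n}$. Write $u=1-s_{2,n}\approx0.289$. Then $\CA$ may be wrong on any set of singular matrices of mass below $u-\eps$. For example, it may output $1$ on every matrix of corank $2$ (asymptotic mass about $0.128$) and be correct elsewhere.
\begin{itemize}
\item Randomizing with invertible masks preserves the rank class of $A$. So for $A$ of corank $2$, $\CA(RAS)$ is always $1$, exactly as for invertible $A$, and no threshold separates the two cases.
\item A padding such as $\operatorname{diag}(A,X)$ followed by masks only shifts the corank by an independent amount. A constant fraction (about $u$) of the randomized inputs still lands in that bad class.
\end{itemize}
Hence there is no distribution on which $\CA$ ``essentially never outputs $1$'' for singular $A$.

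\textbf{The missing idea.} Use a mask that is (pseudo)uniform over \emph{all} of $\F_2^{n\times n}$ and has known determinant. Sample $r$ from the rank distribution of a uniform matrix and set $R=U\operatorname{diag}(I_r,0)V$ with trapdoored $U,V$. Then $\det R=\mathbf{1}_{\{r=n\}}$. Count disagreements between $\CA(RA)$ and $\det R$.
\begin{itemize}
\item If $A$ is invertible, $RA$ is uniform and $\det(RA)=\det R$. The promise then gives $\Pr[\CA(RA)\ne\det R]\le u-\eps$ directly.
\item If $A=P\operatorname{diag}(I_k,0)Q$ with $k<n$, then $RA$ reveals exactly the first $k$ columns of $RP$, and $\det(RP)=\det R$. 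Given those columns, $\det R=1$ with probability $p_k=\prod_{i=1}^{n-k}(1-2^{-i})\le 1/2$ when they are independent, and $\det R=0$ otherwise. Whatever $\CA$ outputs, the disagreement is therefore at least $\Pr[\text{independent}]\cdot p_k=u$.
\end{itemize}
Threshold the disagreement count at $(u-\eps/2)t$ over $O(\eps^{-2}\log n)$ trials. Pseudorandomness changes these probabilities by only $\negl(n)$, so this decides $\det A$.
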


\begin{proof}[Proof sketch]
We adapt the proofs of~\cite[Lemmas~6.6 and~6.7]{vaikuntanathan2026improving},
using the exact determinant of each sampled mask instead of computing it with the average-case algorithm.

For $q>2$, sample a trapdoored $R\in\GL(n,\F_q)$,
compute $\det R$ exactly in $\oracle(n)$ additional time by \cref{cor:determinant},
and set $Z=\CA(RA)/\det R$.
Since a uniform matrix is singular with probability $s_{q,n}$,
the all-matrix promise gives
\[
  \Pr_{B\gets\unif(\GL(n,\F_q))}[\CA(B)=\det B]
  \ge
  \frac{(1+s_{q,n})/2+\eps-s_{q,n}}{1-s_{q,n}}
  =
  \frac12+\frac{\eps}{1-s_{q,n}}
  \ge\frac12+\eps.
\]
First analyze the case where $R$ is uniform on $\GL(n,\F_q)$.
If $A$ is invertible, the matrix $RA$ is
uniform over $\GL(n,\F_q)$, so
$\Pr[Z=\det A]\ge1/2+\eps$.
If $A$ is singular, $\det R$ is uniform over $\F_q^\times$
even conditioned on $RA$, so each nonzero value of $Z$
has probability at most $1/(q-1)\le1/2$.
Repeat this computation independently $t=O(\eps^{-2}\log n)$
times. If some nonzero value appears more than
$(1/2+\eps/2)t$ times among the results, return that value;
otherwise, return zero.

For $q=2$, sample $r$ from the rank distribution of a
uniform binary matrix in $O(n^2)$ time, and let $R=U\operatorname{diag}(I_r,0)V,$
where $U,V$ are independent trapdoored invertible matrices.
Then $\det R=\mathbf1_{\{r=n\}}$.
For uniform $U,V$, the matrix $R$ is uniform.
Writing $u=1-s_{2,n}$, the disagreement probability
$\Pr[\CA(RA)\ne\det R]$ is at most $u-\eps$ when $A$
is invertible, and at least $u$ when $A$ is singular,
by the conditional determinant argument
of~\cite[Lemma~6.7]{vaikuntanathan2026improving}.
Repeat this procedure independently $t=O(\eps^{-2}\log n)$
times. Return one if $\CA(RA)\ne\det R$ occurs fewer than
$(u-\eps/2)t$ times, and zero otherwise.

All relevant events are polynomial-time testable from the sampled matrices, since their determinants are polynomial-time computable.
Thus pseudorandomness changes the probabilities above by only $\negl(n)$.
In both cases, $O(\eps^{-2}\log n)$ independent trials
suffice by concentration.
Each trial takes $T(n)+\widetilde O(n^2)$ time, giving
total time
$\widetilde O\left (\frac{T(n)+n^2}{\eps^2}\right)=\widetilde O(T(n))$.
\end{proof}

\paragraph{Clifford circuit synthesis.}
We give a worst-to-average-case reduction for Clifford circuit synthesis given the corresponding tableau. We say that an algorithm implement a Clifford $C$ if it maps $\ket{\phi}\ket{0^*}$ to $C\ket{\phi}\ket{0^*}$.

\begin{theorem}
\label{thm:clifford-synthesis-reduction}
Let $\eps>0$.
Assume the strong (dual) ring-LPN assumption.
Let $\CA$ be a randomized polynomial-time algorithm running
in time $T(n)$ that, given a signed Clifford tableau $\tau_D$,
outputs a Clifford circuit for $D$ such that, for every sufficiently large $n$,
\[
  \Pr_{D\gets\unif(\Clifford)}
  \bigl[\text{$\CA(\tau_D)$ implements $D$}\bigr]
  \ge\eps.
\]
Then we can construct an algorithm that, given any signed
Clifford tableau, outputs a correct implementation with clean
ancillas with high probability, in time
$\widetilde O(T(n))$.
\end{theorem}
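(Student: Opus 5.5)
The reduction is a random self-reduction by masking: sample a trapdoored Clifford $R$ and feed the algorithm $RC$, whose circuit can later be corrected using $R^{-1}$. Let the input be a signed tableau $\tau_C$ of an arbitrary $C\in\Clifford$. We sample $R$ with its trapdoor from \cref{thm:trapdoored-clifford}\,(ii). The trapdoor provides a circuit $Q_{\app,R}$ of $\ql$ gates with clean ancillas; reversing it gives a circuit for $R^{-1}$ of the same size, as noted after \cref{thm:trapdoored-hadamard-free}. Set $D=RC$.

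\textbf{Computing $\tau_D$.} Each column of the tableau of $D$ is the image of a generator $X_j$ or $Z_j$ under $RC(\cdot)C^\dagger R^\dagger$. We read the column of $C$ directly from $\tau_C$ and then apply $C_{\tab,R}$, which costs $\ql$ per column and $\widetilde O(n^2)$ in total. The signs need separate handling, since $\tab$ only acts on unsigned labels. We recover them from the signed product of the tableaux. Concretely, the trapdoor factors of $R$ (Pauli corrections $\vect{a},\vect{b}$, $\Gamma$, $\Delta$, the Hadamard layer, and $S$) let us propagate phases factor by factor. Alternatively, we can run a stabilizer simulation of $Q_{\app,R}$ on each signed Pauli $C P C^\dagger$, at cost $\ql$ gates per column via Aaronson--Gottesman style updates applied to a single Pauli. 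Either way the total is $\widetilde O(n^2)\le\widetilde O(T(n))$, since $T(n)\ge n^2$ just to read the input.

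\textbf{Main step: pseudorandomness.} This is where I expect the real work, because the success event must be efficiently testable. We run $\CA(\tau_D)$ to get a circuit $K$. If $R$ were uniform, $D=RC$ would be uniform, so $K$ would implement $D$ with probability at least $\eps$. To transfer this to the trapdoored $R$, we need a QPT test of whether $K$ implements $D$ with clean ancillas. The test is classical and polynomial-time. Simulate $K$ on each of the $2n$ generators, tensored with $Z$-type stabilizers on the ancillas, and compare with $\tau_D$. In addition, check that the ancilla stabilizers $Z_a$ are preserved, so that $K$ maps $\ket{\phi}\ket{0^*}$ to $D\ket{\phi}\ket{0^*}$ up to global phase. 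One caveat is that the nonuniform strong assumption may be needed here, because the distinguisher takes $C$ as advice; this matches the footnote at the start of the subsection. Given the test, pseudorandomness gives success probability at least $\eps-\negl(n)$ for the trapdoored $R$.

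\textbf{Output and repetition.} We output $K'=Q_{\app,R}^{-1}\circ K$. This runs $K$ first and then $R^{-1}$ on the data register with its own fresh ancillas, so it implements $R^{-1}RC=C$ with clean ancillas. Since the verification above is efficient, we repeat $O(\eps^{-1}\log n)$ times and return the first $K'$ that passes. The total running time is $\widetilde O((T(n)+n^2)/\eps)=\widetilde O(T(n))$.
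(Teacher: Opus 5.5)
Your reduction (mask by a trapdoored $R$, run $\CA$ on $\tau_{RC}$, append the reversed trapdoor circuit for $R$, transfer the success probability through an efficiently testable event, then verify and repeat) is the same as the paper's. Your deterministic test is adequate \emph{for the distinguisher} in the pseudorandomness step, where any polynomial-time test suffices.

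The gap is that you reuse this test inside the algorithm and then claim total time $\widetilde O((T(n)+n^2)/\eps)$. Your test propagates each of the $2n$ data generators, and each ancilla stabilizer $Z_a$, separately through the candidate circuit $K$. Single-Pauli propagation costs $O(T(n))$, since $K$ may have up to $T(n)$ gates and qubits. So your check costs up to $\Theta(n\,T(n))$ for the data generators and up to $\Theta(T(n)^2)$ for the ancillas. For instance, if $\CA$ outputs circuits with $\Theta(n^2)$ gates while $T(n)=\widetilde O(n^2)$, one verification already takes $\Theta(n^3)$ time. The claimed $\widetilde O(T(n))$ bound therefore does not follow from what you wrote.

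The missing idea is to test random group elements rather than all generators; this is what the paper's ``random Pauli conjugation and ancillary stabilizer tests'' do. Conjugation by $K$ and by $D$ are group homomorphisms, and $Z$-strings multiply without phases. Hence two sets are subgroups:
\begin{itemize}
\item the set of $n$-qubit Paulis $P$ with $K(P\otimes I)K^\dagger=(DPD^\dagger)\otimes Z^{\vect w}$ for some $\vect w$, signs included;
\item the set of $\vect w$ with $K(I\otimes Z^{\vect w})K^\dagger=+I\otimes Z^{\vect w'}$ for some $\vect w'$.
\end{itemize}
Moreover, $K$ implements $D$ with clean ancillas (up to global phase) exactly when both sets are the whole group. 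So an incorrect $K$ fails the test on a uniformly random Pauli label, or on a uniformly random ancilla $Z$-string, with probability at least $1/2$. Each such test costs $O(T(n))$ for propagation plus $O(n^2)$ to compute the signed $DPD^\dagger$ from $\tau_D$. Running $O(\log n)$ rounds gives a high-probability verifier in $\widetilde O(T(n)+n^2)$ time per trial, which restores the stated $\widetilde O(T(n))$ bound.
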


\begin{proof}
Let $C$ be the input Clifford.
Sample a trapdoored Clifford $R$ from
\cref{thm:trapdoored-clifford}.
Define $D=RC$, whose signed tableau can be computed in time
$\widetilde O(n^2)$ using the trapdoor.
Run $\CA(D)$ to obtain a candidate circuit $Q$; whenever $\CA$ implements the correct Clifford, appending the inverse
implementation of $R$ to $Q$ gives an implementation of $C$.
This adds $\widetilde O(n)$ gates and ancillary qubits.

By pseudorandomness, $Q$ correctly implements $D$ with
probability at least $\eps-\negl(n)$.
We verify correctness with high probability in $\widetilde O(T(n)+n^2)$ time using random Pauli conjugation
and ancillary stabilizer tests.

Each trial succeeds with probability
at least $\eps/2$ for large $n$.
Repeating the procedure
$O(\frac{\log n}{\eps})$ times gives a correct implementation
with high probability in time 
$\widetilde O\left(\frac{T(n)+n^2}{\eps}\right)$.
\end{proof}

\subsection{From average-case parallel Clifford implementations to matrix multiplication}
\label{subsec:parallel-clifford}
We reduce worst-case matrix multiplication to constructing parallel
implementations of a random Clifford.
Note that the reduction from the worst-case matrix multiplication to the worst-case parallel Clifford implementations is rather straightforward using $U_A$ below, although we were not aware of this result in the literature. 
The main implication here is that the same complexity persists in the parallel implementation of average-case Cliffords, even though random Cliffords are unlikely to be these classical-like $U_A.$

\begin{theorem}
\label{thm:parallel-clifford}
Let $\eps>0$.
Assume the strong (dual) ring-LPN assumption.
Let $\CA$ be a randomized polynomial-time algorithm running
in time $T(n)$ that, given the signed tableau $\tau_C$ of a
$2n$-qubit Clifford $C$, outputs a Clifford circuit on $2n^2$
qubits such that, for every sufficiently large $n$,
\[
  \Pr_{C\gets\unif(\mathrm{Cliff}(2n))}
  \bigl[\text{$\CA(\tau_C)$ implements $C^{\otimes n}$}\bigr]
  \ge\eps.
\]
Then we can construct a classical algorithm that, for any
$A,B\in\F_2^{n\times n}$, computes $AB$ with high probability,
in time $\widetilde O(T(n))$.
\end{theorem}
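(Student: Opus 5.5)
The plan is to embed matrix multiplication into a fixed $2n$-qubit Clifford and hide it behind a trapdoored Clifford mask. For $A\in\F_2^{n\times n}$, let $U_A$ be the CNOT-type Clifford $\ket{\vect{x}}\ket{\vect{y}}\mapsto\ket{\vect{x}}\ket{\vect{y}\oplus A\vect{x}}$ on $2n$ qubits. It is Hadamard-free with $\Delta=\begin{bmatrix}I&0\\A&I\end{bmatrix}$, $\Gamma=0$ and no Pauli corrections. Its tableau is explicit:
\[
X\text{-labels: } (\vect{x},\vect{y})\mapsto(\vect{x},\vect{y}+A\vect{x}),\qquad Z\text{-labels: } (\vect{z}_1,\vect{z}_2)\mapsto(\vect{z}_1+A^\tp\vect{z}_2,\vect{z}_2),
\]
and all signs are trivial. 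The key observation is that conjugating $X^{(\vect{b},0)}$ by $U_A$ gives $X^{(\vect{b},A\vect{b})}$. So if we push the Pauli $\bigotimes_{j=1}^n X^{(\vect{b}_j,0)}$ through a circuit for $U_A^{\otimes n}$, where $\vect{b}_j$ is the $j$-th column of $B$, the $X$-part of the output label contains every column $A\vect{b}_j$ of $AB$.

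To make the instance average-case, sample a trapdoored $2n$-qubit Clifford $R$ from \cref{thm:trapdoored-clifford} and set $C=RU_A$. If $R$ were uniform, $C$ would be exactly uniform on $\mathrm{Cliff}(2n)$. Since $\tau_C$ is computable in polynomial time from $\tau_R$ and $A$, pseudorandomness gives $\Pr[\CA(\tau_C)\text{ implements }C^{\otimes n}]\ge\eps-\negl(n)$. We must not compute $\tau_C=\tau_R\tau_{U_A}$ naively in $\oracle(n^3)$ time. Because the columns of $\tau_{U_A}$ are explicit sparse-plus-$A$ vectors, we can instead apply the trapdoor circuit $C_{\tab,R}$ to each of the $4n$ columns. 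This costs $\widetilde O(n)$ per column, so $\widetilde O(n^2)$ overall. The signs can be obtained similarly from the trapdoor description of $R$, since $U_A$ contributes none. Note that $T(n)=\Omega(n^2)$ because $\CA$ reads $\tau_C$.

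Given the output circuit $Q$ on $2n^2$ data qubits plus ancillas, we propagate the label $\bigotimes_j X^{(\vect{b}_j,0)}\otimes I_{\mathrm{anc}}$ through $Q$ gate by gate, ignoring phases. Each of $H,P,CX,X$ updates $\oracle(1)$ bits, so this takes $\oracle(T(n)+n^2)$ time. If $Q$ implements $C^{\otimes n}$ with clean ancillas, then $Q$ restricted to ancillas in $\ket{0}$ acts as $C^{\otimes n}\otimes I$. The step I expect to be most delicate is justifying that the propagated label is exactly that of $C^{\otimes n}$ with identity on the ancillas. I would argue that $Q(\ket{\phi}\ket{0})=C^{\otimes n}\ket{\phi}\ket{0}$ for all $\ket{\phi}$, together with $Q$ being Clifford, forces $Q X_{\mathrm{data}}Q^\dagger$ to agree with $C^{\otimes n}X C^{\dagger\otimes n}$ up to a factor on the ancillas that stabilizes $\ket{0}$, i.e.\ a $Z$-type ancilla part. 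Hence the data $X$/$Z$ bits are correct, and we simply discard the ancilla bits.

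Finally, on each register we apply $\tab^{-1}$ for $R$, using the trapdoor circuits from \cref{thm:trapdoored-hadamard-free} and its inverse remark, in $\widetilde O(n)$ per register. This undoes $R$ because $U_A=R^{-1}C$, and we read off $A\vect{b}_j$ from the second half of each register's $X$-part. We check the candidate $Y$ against $AB$ with Freivalds' test in $\oracle(n^2)$ time. Each trial succeeds with probability at least $\eps/2$ for large $n$, so $O(\log n/\eps)$ independent repetitions succeed with high probability. The total running time is $\widetilde O((T(n)+n^2)/\eps)=\widetilde O(T(n))$.
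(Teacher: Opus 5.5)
Your proposal is correct and follows essentially the same route as the paper. You mask $U_A$ by a trapdoored $R$ via $C=RU_A$, conjugate the $X$-type Pauli encoding the columns of $B$ by the returned circuit and then by $(R^{-1})^{\otimes n}$ using the trapdoor, read off the columns $A\vect{b}_j$, and verify with Freivalds' test over $O(\log n/\eps)$ repetitions. The differences are either cosmetic (you put the control of $U_A$ on the first register rather than the second) or add detail the paper leaves implicit: the gate-by-gate Pauli propagation, the column-by-column computation of $\tau_C$, and the argument that clean ancillas force a $Z$-type ancilla factor, so the data bits are exact.
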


\begin{proof}
Let $A,B$ be the input matrices, and define the Clifford $U_A\ket{\vect{u},\vect{v}}
  =
  \ket{\vect{u}+A\vect{v},\vect{v}}.$

Sample a trapdoored $2n$-qubit Clifford $R$ from
\cref{thm:trapdoored-clifford}.
Define $C=RU_A$, whose tableau can be computed in time
$\widetilde O(n^2)$ using the trapdoor.
Run $\CA(\tau_C)$ to obtain a candidate circuit $\Gamma$.

Write $B=[\vect{b}_1\ \cdots\ \vect{b}_n]$, and define
\[
  P_B=\bigotimes_{j=1}^n
  \bigl(I^{\otimes n}\otimes X(\vect{b}_j)\bigr),
  \qquad
  X(\vect{b})=\bigotimes_{k=1}^n X^{b_k}.
\]
Compute a circuit for
$Q=(R^\dagger)^{\otimes n}\Gamma P_B\Gamma^\dagger R^{\otimes n}$,
which is a product of single-qubit Paulis, using the trapdoor
in total time $\widetilde O(T(n)+n^2)$. 

Whenever $\CA$ is correct, we have
\[
  Q=\bigotimes_{j=1}^n
  \bigl(X(A\vect{b}_j)\otimes X(\vect{b}_j)\bigr).
\]
Given the product Pauli structure, we can recover the entries of $A\vect{b}_j$ for each $j$ from the circuit for $Q$ if it is correct.
In other words, the circuit for $Q$ gives a candidate $Y$ for the
product $AB$. 
Whenever $\CA$ is correct, $Y=AB$.

By pseudorandomness, $Y$ equals $AB$ with probability at least
$\eps-\negl(n)$.
We verify correctness with high probability in
$\widetilde O(n^2)$ time using Freivalds' algorithm.
Each trial succeeds with probability at least $\eps/2$
for large $n$.
Repeating the procedure $O(\eps^{-1}\log n)$ times gives
the correct product with high probability in time $  \widetilde O\bigl((T(n)+n^2)/\eps\bigr)
  =\widetilde O(T(n)).$
\end{proof}

We next extend \cref{thm:parallel-clifford} to approximate implementations by general quantum circuits.
\begin{theorem}
\label{thm:parallel-clifford-diamond}
Let $\eps>0$ and $\eta\in[0,1)$ be constants.
Assume the quantum-secure strong (dual) ring-LPN assumption.
Let $\CA$ be a randomized polynomial-time algorithm running
in time $T(n)$ that, given the signed tableau $\tau_C$ of a
$2n$-qubit Clifford $C$, outputs a (possibly non-Clifford) circuit with channel
$\mathcal G_C$ on $2n^2$ qubits such that, for every
sufficiently large $n$,
\[
  \Pr_{C\gets\unif(\mathrm{Cliff}(2n)),\,\CA}
  \left[
    \frac12
    \left\|\mathcal G_C-\mathcal U_{C^{\otimes n}}\right\|_\diamond
    \le\eta
  \right]\ge\eps,
\]
where $\mathcal U_V(\rho)=V\rho V^\dagger$.
Then we can construct a quantum algorithm that, for any
$A,B\in\F_2^{n\times n}$, computes $AB$ with high probability
in time $\widetilde O(T(n))$.
\end{theorem}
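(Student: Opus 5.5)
We reuse the masking of \cref{thm:parallel-clifford} and replace exact Pauli readout by measurement, to which the diamond-norm guarantee applies. Given $A,B$, sample a trapdoored $2n$-qubit Clifford $R$ from \cref{thm:trapdoored-clifford}. Set $C=RU_A$ with $U_A\ket{\vect u,\vect v}=\ket{\vect u+A\vect v,\vect v}$, compute $\tau_C$ in $\widetilde O(n^2)$ time using the trapdoor, and run $\CA(\tau_C)$ to obtain a circuit with channel $\mathcal G_C$. For each column block $j$, prepare $\ket{\vect 0}\ket{\vect b_j}$ on the $j$-th $2n$-qubit register, giving the product state $\ket{\Psi_B}=\bigotimes_j\ket{\vect 0,\vect b_j}$. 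Apply $\mathcal G_C$. Then apply $(R^\dagger)^{\otimes n}$ using the reversed trapdoor circuit, with $\widetilde O(n^2)$ gates and clean ancillas. Finally measure in the computational basis.

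If $\mathcal G_C=\mathcal U_{C^{\otimes n}}$ exactly, the state before measurement is $(R^\dagger R U_A)^{\otimes n}\ket{\Psi_B}=\bigotimes_j\ket{A\vect b_j,\vect b_j}$. The first halves of the outcome then form the columns of $AB$.

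When $\CA$ is good for $C$, i.e.\ $\tfrac12\|\mathcal G_C-\mathcal U_{C^{\otimes n}}\|_\diamond\le\eta$, the contractivity of trace distance under the fixed postprocessing $(R^\dagger)^{\otimes n}$ and measurement gives a bound on the output distribution. The measured string is exactly the correct $Y=AB$ with probability at least $1-\eta>0$. We read off the candidate $Y$ and check it with Freivalds' algorithm in $\widetilde O(n^2)$ time.

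It remains to lower-bound the probability that $\CA$ is good. Since $C=RU_A$ and $R\mapsto RU_A$ is a bijection on $\mathrm{Cliff}(2n)$, $C$ is uniform when $R$ is uniform. So with uniform $R$ the per-trial success probability is at least $\eps(1-\eta)$. The key point is that with trapdoored $R$ this probability changes by at most $\negl(n)$.

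This transfer is the main obstacle. In \cref{thm:parallel-clifford} the event ``$\CA$ implements $C^{\otimes n}$'' was checkable. Here the diamond-norm event is not efficiently testable, so we should not argue via that event. Instead, we define a QPT distinguisher $\mathcal D$ that receives $\tau_C$ and, for the uniform case, is also given $A$, $B$ and a circuit for $R^\dagger$. The difficulty is that $\mathcal D$ sees only the tableau and must undo $R$ without the trapdoor. It can do so in polynomial (not near-linear) time by synthesizing a circuit for $R^\dagger$ from its tableau $\tau_R$, which it computes as $\tau_C\,\tau_{U_A}^{-1}$ via Aaronson--Gottesman. Because $A$, $B$ are fixed worst-case inputs hardwired as advice, we need the nonuniform strong assumption noted in the footnote; quantum security is needed because $\mathcal D$ is quantum.

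$\mathcal D$ runs the entire trial above and outputs $1$ iff Freivalds accepts. Its acceptance probability differs by $\negl(n)$ between trapdoored and uniform $R$. Freivalds' error is negligible, so the success probability per trial is at least $\eps(1-\eta)-\negl(n)\ge\eps(1-\eta)/2$ for large $n$. The distinguisher's polynomial cost affects only the security reduction, not our algorithm's runtime.

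We repeat $O(\log n/(\eps(1-\eta)))$ times and output the first verified $Y$. This yields $AB$ with high probability. Each trial costs $T(n)$ plus $\widetilde O(n^2)$ for the trapdoor operations, preparation, measurement and Freivalds, so the total time is $\widetilde O(T(n))$ for constant $\eps,\eta$.
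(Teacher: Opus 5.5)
Your proposal is correct and follows the paper's proof essentially step by step: the same masking $C=RU_A$, decoding of $\bigotimes_j\ket{0^n,\vect b_j}$ by $(R^\dagger)^{\otimes n}$ followed by measurement, the $1-\eta$ bound from the diamond norm, and the transfer to trapdoored $R$ through the efficiently testable event $Y=AB$, using Aaronson--Gottesman synthesis of $R^\dagger$ from its tableau. The differences are cosmetic: you test the event with Freivalds rather than by computing $AB$ outright, and your distinguisher is phrased as receiving $\tau_C$ rather than $\tau_R$, which is immaterial because the two are related by an efficiently computable bijection given $A$.
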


\begin{proof}
Define $U_A\ket{\vect{u},\vect{v}}
=\ket{\vect{u}+A\vect{v},\vect{v}}$
as in the proof of \cref{thm:parallel-clifford}.
Sample a trapdoored $2n$-qubit Clifford $R$ from
\cref{thm:trapdoored-clifford}.
Set $C=RU_A$, compute $\tau_C$ in $\widetilde O(n^2)$ time
using the trapdoor, and run $\CA(\tau_C)$.

Write $B=[\vect{b}_1\ \cdots\ \vect{b}_n]$.
Apply $\mathcal G_C$ followed by $(R^\dagger)^{\otimes n}$
to $\bigotimes_{j=1}^n\ket{0^n,\vect{b}_j}$, and measure
the first $n$ qubits of each block to obtain the columns
of a candidate $Y$.
The ideal decoded state is
$\bigotimes_{j=1}^n\ket{A\vect{b}_j,\vect{b}_j}$.
Whenever the approximation condition holds, the diamond-norm
bound changes any measurement event's probability by at
most $\eta$, so $Y=AB$ with probability at least $1-\eta$.

For uniform $R$, the Clifford $C$ is uniform, giving
success probability at least $\eps(1-\eta)$.
The event $Y=AB$ is quantum polynomial-time testable
from $\tau_R$: compute $AB$ classically and synthesize
$R^\dagger$ from its tableau using
\cite[Theorem~8]{aaronson2004improved}.
Thus pseudorandomness gives success probability at least
$\eps(1-\eta)-\negl(n)$ for the trapdoored procedure.

Verify each candidate with high probability using Freivalds'
algorithm and repeat independently
$O((\eps(1-\eta))^{-1}\log n)$ times.
This gives the correct product with high probability in
time $\widetilde O((T(n)+n^2)/(\eps(1-\eta)))
=\widetilde O(T(n))$.
\end{proof}
\newpage

\bibliographystyle{alpha}
\bibliography{cryptobib/abbrev3,cryptobib/crypto,ref}
\appendix
\crefalias{section}{appendix}
\end{document}